\documentclass[format=acmsmall,authorversion=true]{acmart}

\usepackage{acm-ec-26}
\usepackage{booktabs} 

\AtBeginDocument{%
  \fancypagestyle{firstpagestyle}{%
    \fancyhf{}%
  }%
}

\setcitestyle{authoryear}

\usepackage{latexsym,amsfonts}
\usepackage{amsmath}
\usepackage{array}

\usepackage{graphicx,color}
\usepackage{bbm}

\usepackage{algorithmic}
\usepackage{algorithm}
\usepackage{tikz}
\usepackage{url}

\usepackage{booktabs}
\usepackage[table]{xcolor}
\usepackage{makecell}
\usepackage{wrapfig}

\usepackage{amsthm}
\AtEndPreamble{%
\newtheorem{claim}[theorem]{Claim}
\theoremstyle{acmdefinition}

}

\usepackage{thm-restate}

\definecolor{blue}{RGB}{0,82,147} 
\definecolor{red}{RGB}{202,033,063}
\definecolor{green}{RGB}{98,158,31} 

\colorlet{green}{green!50!black}
\usepackage[capitalize]{cleveref}
\crefname{claim}{Claim}{Claims}

\newtheorem{new-claim}{Claim}

\newcommand{\X}{\ensuremath{\mathcal{X}}}
\newcommand{\Y}{\ensuremath{\mathcal{Y}}}
\newcommand{\M}{\ensuremath{\mathcal{M}}}
\newcommand{\N}{\ensuremath{\mathcal{N}}}
\newcommand{\I}{\ensuremath{\mathcal{I}}}

\newcommand{\Iconstr}{$\mathcal{I}\!$-constrained}

\author{Telikepalli Kavitha}
\affiliation{%
\institution{Tata Institute of Fundamental Research}
  \city{Mumbai}
  \country{India}}
  
\author{Jannik Matuschke}
\affiliation{%
\institution{KU Leuven}
  \city{Leuven}
  \country{Belgium}}
  
\author{Ulrike Schmidt-Kraepelin}
\affiliation{%
  \institution{TU Eindhoven}
  \city{Eindhoven}
  \country{The Netherlands}}
  
\author{} 

\title{Condorcet Dimension and Pareto Optimality\\ for Matchings and Beyond}

\begin{abstract}
We study matching problems in which agents form one side of a bipartite graph and have preferences over objects on the other side.
A central solution concept in this setting is popularity: a matching is \emph{popular} if it is a \emph{(weak) Condorcet winner},\footnote{For the sake of this abstract we do not distinguish between weak Condorcet winners and Condorcet winners.} meaning that no other matching is preferred by a strict majority of agents. 
It is well known, however, that Condorcet winners need not exist. We therefore turn to a natural and prominent relaxation. A set of matchings is a \emph{Condorcet-winning set} if, for every competing matching, a majority of agents prefers their favorite matching in the set over the competitor. The 
\emph{Condorcet dimension} is the smallest cardinality of a Condorcet-winning set. For matchings, this notion admits a natural interpretation as a variant of resource augmentation: it captures how much we must relax item capacities in order to guarantee the existence of a popular outcome.

Our main results reveal a connection between Condorcet-winning sets and Pareto optimality. We show that any Pareto-optimal set of two matchings is, in particular, a Condorcet-winning set. This implication continues to hold when we impose matroid constraints on the set of matched objects, and even when agents’ valuations are given as partial orders. The existence picture, however, changes sharply with partial orders. While for weak orders a Pareto-optimal set of two matchings always exists, this is—surprisingly—not the case under partial orders. Consequently, although the Condorcet dimension for matchings is $2$ under weak orders (even under matroid constraints), this guarantee fails for partial orders: we prove that the Condorcet dimension is $\Theta(\sqrt{n})$, and rises further to $\Theta(n)$ when matroid constraints are added. On the computational side, we show that, under partial orders, deciding whether there exists a Condorcet-winning set of a given fixed size is NP-hard.
The same holds for deciding the existence of a Pareto-optimal matching, which we believe to be of independent interest. Finally, we also show that the Condorcet dimension for a related problem on arborescences is also $2$. 
\end{abstract}

\begin{document}

\begin{titlepage}

\maketitle
\setcounter{tocdepth}{1} 
\tableofcontents
    
\end{titlepage}

\newpage

\section{Introduction} 
\label{sec:intro}

We study the fundamental problem of matching under one-sided preferences. Such problems arise naturally in allocation settings in which agents express preferences over a set of (typically scarce) resources, while the other side is passive and subject only to capacity or other feasibility constraints. Prominent examples include assigning students to seats in university courses, applicants to public housing units, or computational tasks to machines. Formally, our input is a bipartite graph $G = (A \cup O,E)$ where nodes in $A$ are {\em agents} and nodes in $O$ are {\em objects}, and every agent $a \in A$ submits a preference relation $\succ_a$ over the set of objects they are adjacent to in the graph $G$. 

A prominent solution concept in this setting is \emph{popularity} \cite{gardenfors1975match,AIK+07b}, the analog of a weak Condorcet winner in voting. Condorcet winners are defined via pairwise comparisons between alternatives (here, matchings), where $\varphi(M,N)$ is the number of agents who strictly prefer the alternative $M$ over $N$. In the matching context, we assume agents compare matchings by comparing the objects they are assigned under $M$ and $N$. 
An alternative $M$ is a (weak) Condorcet winner if, for every other alternative $N$, $\varphi(M,N)$ is (weakly) higher than $\varphi(N,M)$.

As in classical social choice theory, however, weak Condorcet winners need not exist. While much of social choice theory studies how to select a single alternative in such scenarios, the idea of selecting a \emph{Condorcet-winning set} \citep{elkind2015condorcet} has recently attracted renewed attention: 
A set of alternatives $\M$ is a \emph{(weak) Condorcet-winning set}, if, for every competing alternative $N$, the number of agents that strictly prefer their favorite alternative from $\M$ to the competitor $N$, denoted by $\varphi(\M,N)$, is (weakly) larger than $\varphi(N,\M)$, which is defined analogously. The (weak) Condorcet dimension of an instance is the minimum size of a (weak) Condorcet-winning set. \citet{CLR+25a} recently resolved the long-standing open question whether the Condorcet dimension of an election can be bounded by a constant by showing an upper bound of $6$, which was subsequently improved to $5$ \cite{song2026few}. In this work, we study the Condorcet dimension of matching-induced elections. Crucially, the results discussed above assume strict rankings over alternatives—an assumption that fails in our case, since agents are indifferent between matchings whenever they receive the same object. Hence, none of the above results apply to our setting.

While Condorcet-winning sets are typically viewed as permitting multiple alternatives (here: matchings) to be selected as winners, in our setting they admit a second, natural interpretation in terms of \emph{resource augmentation}. If the object side of the bipartite graph consists of resources that can be replicated—e.g., by adding seats in a course or acquiring additional machines—then selecting a weakly Condorcet-winning set can be seen as selecting a single matching in a capacity-augmented graph that defeats every matching in the original graph in pairwise majority comparisons. In particular, our results will imply that doubling the capacity of every object suffices to guarantee the existence of a weak Condorcet winner in any instance.

Our work begins with the classic one-sided matching model with strict rankings, but quickly moves beyond this setting to capture a broader range of applications. For instance, in course allocation, an instructor may face constraints on which time slots can be offered simultaneously. Such restrictions can be modeled as matroid constraints on the set of matched objects, and due to their generality, matroid variants of the popular matching problem have been studied extensively \cite{Kami17a,CKY24a}. Our matroid-constrained matching model has the additional benefit of generalizing other graph structures, such as branchings \cite{KMS+25a}, which have application in the democratic paradigm of \emph{liquid democracy} \citep{SchmidtKraepelin2023PhD}. Finally, we study the Condorcet dimension of arborescences, generalizing branchings in our setting.

Much of the matching literature assumes that agents report strict or weak rankings over the objects they may be assigned. However, it has been repeatedly argued that even weak rankings are a strong behavioral assumption. For example, \emph{semiorders} \citep{luce1956semiorders} have been proposed for settings in which agents have underlying utilities but cannot reliably distinguish objects whose utilities are sufficiently close. In such models, the induced indifference relation (i.e., “neither object is strictly preferred to the other”) need not be transitive, so the resulting preferences do not form a weak order. We generalize further by allowing agents to report arbitrary partial orders.

\bigskip

\subsection{Our Contribution}
Our main result establishes a close connection between Pareto-optimal sets and weak Condorcet-winning sets. Specifically, every Pareto-optimal set of matchings of size at least $2$ is weakly Condorcet-winning, and this remains true under matroid constraints. Under strict rankings over objects, such sets are in fact (strictly) Condorcet-winning in essentially all instances. The only exception occurs when there exists a matching that assigns every agent their top choice; in that case, that single matching is already a Condorcet winner. We first present the unconstrained case under strict rankings as a warm up and then generalize to the matroid setting and partial orders.

\begin{wraptable}{r}{0.44\textwidth}
\vspace{-0.6\baselineskip}
\centering
\setlength{\tabcolsep}{6pt}
\renewcommand{\arraystretch}{1.05}
\small
\begin{tabular}{r|ccc}
 & M & MM & A \\
\midrule
strict ranking  & \cellcolor{gray!15}$2$ & \cellcolor{gray!15}$2$ & \cellcolor{gray!15}$2$ \\
weak ranking   & $2$ & $2$ & $2$ \\
partial order & $\Theta(\sqrt{n})$ & $ n $ & $2$
\end{tabular}

\vspace{10pt}
\caption{Summary of our results for the (weak) Condorcet dimension for matchings (M), matroid-constrained matchings~(MM), and arborescences~(A). Grey cells indicate bounds on the Condorcet dimension, white cells on the weak Condorcet dimension. All bounds are tight, except for a $\sqrt{2}$-factor for (M, partial order).}
\label{tableIntro}
\vspace{-1.5\baselineskip}
\end{wraptable}

Under strict and weak rankings, Pareto-optimal sets of matchings of size $2$ are guaranteed to exist. Combined with the theorem above, this yields a bound of $2$ on the Condorcet dimension for matroid-constrained matchings under strict preferences, and a bound of $2$ for the weak Condorcet dimension when preferences are weak rankings.\footnote{Note that under weak rankings, no non-trivial bounds on the Condorcet dimension are possible: If all agents are indifferent among all objects, then the only Condorcet-winning set is the one including all alternatives. Hence we focus on the weak Condorcet dimension for all settings except when preferences are strict.}
Somewhat surprisingly, this line of reasoning fails for partial orders: Pareto-optimal sets of matchings may not exist,\footnote{We remark that we employ a definition of Pareto optimality that interprets the incomparability relation of a partial order as \emph{indifference}; see \Cref{sec:prelim} for details.} and therefore the theorem does not imply the existence of small weakly Condorcet-winning sets. Indeed, the worst-case weak Condorcet dimension for matchings with partially ordered preferences is $\Theta(\sqrt{n})$. This result is established via a lower-bound instance of dimension larger than $\sqrt{n}$ and an algorithm that constructs a weakly Condorcet-winning set of size at most $\lceil \sqrt{2n}\rceil$. 
When allowing matroid constraints, the weak Condorcet dimension increases further to $n$. 

The fact that partial orders lead to a non-constant weak Condorcet dimension raises computational questions. In particular, deciding whether an instance admits a weakly Condorcet-winning set of a given size turns out to be NP-complete. In addition, deciding whether a Pareto-optimal matching exists is NP-complete as well; we believe this result to be of independent interest.

Finally, when considering arborescences instead of matchings the (weak) Condorcet dimension is again bounded by $2$. In fact, there always exists a set of size $2$ that assigns every agent a maximal element of their partial order. We summarize all of our bounds in \Cref{tableIntro}.

\bigskip

\subsection{Related Work}

The Condorcet dimension of matching problems was studied independently and in parallel by \citet{connor2025popular}. While the projects differ in focus, one result overlaps: Algorithm 1 in their paper implies that the weak Condorcet dimension is bounded by $2$ for one-sided matchings under weak rankings. We obtain the same bound via our link to Pareto-optimality, which yields a range of significantly simpler algorithms for finding Condorcet sets via Pareto-optimal sets: For example, for weak rankings, one call to a maximum-weight bipartite matching algorithm suffices (see \cref{sec:pareto-implies-condorcet-matroids}); for strict rankings, even a simple round robin procedure works (see \cref{sec:warmup}). \citet{connor2025popular} also study one- and two-sided matching problems with weighted agents, but do not consider matroid constraints, arborescences, or partial orders. They also focus on the weak Condorcet dimension; our results additionally imply bounds for the (strict) Condorcet dimension under strict rankings. An interesting open question that remains from their work is whether the weak Condorcet dimension for the roommates problem is $2$ or $3$.

The concept of \emph{Condorcet-winning sets} in ordinal voting was introduced by \citet{elkind2015condorcet}, who show the existence of elections with Condorcet dimension $3$ and prove an upper bound of $\lceil \log_2 m\rceil$ on the minimum size of a Condorcet-winning set, where $m$ is the number of alternatives. Consequently, it remained open whether the worst-case Condorcet dimension is bounded by a constant, until \citet{CLR+25a} proved an upper bound of $6$, which \citet{song2026few} improved to $5$. Whether the final answer is $3,4$ or $5$ remains open. When voters’ preferences are induced by points in $\mathbb{R}^2$, \citet{lassota2024condorcet} showed the worst-case Condorcet dimension to be $2$ or $3$. 

Importantly, the above results assume that each voter submits a \emph{strict ranking over all alternatives}. This assumption is inherently violated in matching-induced elections: even if agents have strict rankings over objects, these induce only weak rankings over matchings, since an agent is indifferent among matchings where they receive the same object. To the best of our knowledge, the work of \citet{connor2025popular} and the present paper are the first to study the Condorcet dimension when preferences among alternatives are not strict. 

Finally, other set-valued relaxations of the Condorcet principle have been studied. For instance, \citet{charikar2026approximately} prove the existence of small \emph{approximately dominating sets}, an approximation of the classical notion of a \emph{dominating set} (where every outside candidate is beaten by some member of the set). Note that every dominating set is in particular a Condorcet-winning set.

\subsubsection*{Popular structures} The concept of popularity was suggested by \citet{gardenfors1975match} in the context of matchings under two-sided preferences, where it relaxes the solution concept of \emph{stability}. \citet{AIK+07b} initiated the study of popularity for matching under one-sided preferences by providing structural insights and a polynomial-time algorithm for finding popular matchings when they exist. Subsequent work extended these methods, for example to the setting of capacitated houses allocation \citep{manlove2006popular} or assignments (perfect matchings) \citep{kavitha2022popular}. 

\citet{Kami17a} initiated the study of popular matchings under matroid constraints, but allows one matroid per object. Closer to our setting is the model of \citet{KMS+25a}, which modifies the classic matching problem in two respects: (i) feasibility is restricted by a matroid constraint on the set of assigned objects, and (ii) attention is restricted to matchings of maximum cardinality. \citet{KMS+25a} demonstrate that this model unifies several previously studied models; as a consequence, their algorithm for finding popular structures does not only solve the popular matching problem but also extends to popularity notions for graph structures in directed graphs, including branchings and arborescences (see below). Our matroid-constrained model adopts aspect~(i) but does not impose requirement~(ii); determining which of our results extend to the maximum-cardinality regime remains an interesting direction for future work. \citet{csaji2025popular} further extend the model of \citet{KMS+25a} to weighted edges. 
For graph structures in directed graphs, the study of popular branchings was initiated by \citet{kavitha2022popularBranching}. The algorithms of \citet{kavitha2022popularBranching} were generalized to a setting with weighted agents \cite{natsui2022finding} and extended from branchings to arborescences \citep{KMS+25a}.

\section{Preliminaries} \label{sec:prelim}

We start with a few general definitions that will be helpful throughout the paper. We then define our three types of combinatorial elections in \Cref{sec:combelec} and the applied solution concepts in \Cref{sec:solutionConcepts}.

\subsubsection*{General concepts and notation}
For an integer $k \in \mathbb{N}$, we write $[k]$ for the set $\{1,\dots,k\}$.
For a digraph $D = (V, E)$, we let $\delta^-_D(v)$ denote the set of arcs in $E$ entering $v \in V$.

A branching is a graph structure that appears in two different contexts throughout the paper (once as an alternative in a combinatorial election, once as a tool for analysis). Given a directed graph $D=(V,E)$, a subset of arcs $B \subseteq E$ is a \emph{branching} if it satisfies two conditions: (i) each node has at most one incoming arc in $B$, i.e., $|B \cap \delta_D^-(v)| \le 1$ for all $v \in V$, and (ii) $B$ contains no cycle. When we do not specify an underlying graph and simply say that $B$ is a branching on a set $V$, we mean that $B$ is a branching in the complete directed graph induced by the node set $V$.

\subsection{Combinatorial Elections} \label{sec:combelec} \label{sec:prelim-matroids}

We say that a preference relation $\succ$ is a \emph{partial order}, if it is irreflexive (not $x \succ x$), asymmetric ($x \succ y$ implies $y\not \succ x$), and transitive ($x \succ y$ and $y \succ z$ implies $x\succ z$). Given a partial order $\succ$, we define the \emph{induced indifference relation} as $x \sim y \Leftrightarrow x \not \succ y \text{ and } y \not \succ x$. A partial order is a \emph{weak ranking} if the indifference relation is transitive. Lastly, a weak ranking $\succ$ is a \emph{strict ranking}, if the relation is complete (for all $x,y$ either $x \succ y$ or $y \succ x$). 

\subsubsection*{Matchings}
We are given a bipartite graph $G=(A \cup O, E)$, where $A$ is the set of \emph{agents} and $O$ is the set of \emph{objects}, and $n = |A|$. Each agent $a$ has a preference relation $\succ_a$ over the objects adjacent to $a$ in $G$. The relation $\succ_a$ may be a strict ranking, a weak ranking, or a partial order; we specify the assumed preference model locally. We denote a \emph{matching instance} by $(G,\succ)$, where $\succ$ represents~$(\succ_a)_{a \in A}$.

A \emph{matching} is a set of edges $M \subseteq E$ such that every node is incident to at most one edge in $M$. A \emph{$k$-matching} is a set of edges $M \subseteq E$ such that every node is incident to at most $k$ edges; it is well known that every $k$-matching can be decomposed into $k$ (ordinary) matchings. An \emph{$A$-perfect matching} is a matching in which every agent in $A$ is incident to exactly one edge. For a matching $M$ and an agent $a \in A$, we write $M(a)$ for the object to which $a$ is matched, with $M(a)=\emptyset$ if $a$ is unmatched.
We also use $M(A') := \{M(a) : a \in A', M(a) \neq \emptyset\}$ for $A' \subseteq A$ to indicate the set of objects matched to the agents in $A'$.
For a $k$-matching $M$, we use the same notation, where $M(a)$ denotes the set of objects matched to $a$.

We lift the preferences of agents over objects to preferences over matchings: For two matchings $M, M'$ agents compare them by their assigned object, with any assigned object being preferred over being unmatched, i.e., $M \succ_a M' \Leftrightarrow M(a) \succ_a M'(a)$, where~$o \succ_a \emptyset$ for all objects $o$ adjacent to $a$.

\subsubsection*{Arborescences}
We are given a directed graph $D = (A \cup \{r\}, E)$, where $A$ corresponds to the agents and $r$ is a (dummy) root node. Every agent $a$ has a partial order $\succ_a$ over their incoming arcs $\delta_D^-(a)$. An $r$-\emph{arborescence} $T$ is a branching in $D$, with the additional constraint that the only node not having an incoming arc in $T$ is the root $r$. An \emph{arborescence instance} can be summarized by $(D, \succ)$. 

We lift the preferences of agents over incoming arcs to preferences over $r$-arborescences. For two $r$-arborescences $T,T'$ we assume that agents compare them by their corresponding incoming arcs. Formally, we define $T \succ_a T' \Leftrightarrow T\cap \delta_D^{-}(a) \succ_a T'\cap \delta_D^{-}(a)$.

\subsubsection*{Matching under matroid constraints}

A \emph{matroid} $(O, \I)$ on a ground set $O$ is a family of subsets $\I \subseteq 2^O$ with the following properties:
\begin{enumerate}
    \item $\emptyset \in \I$
    \item If $X \in \I$ and $Y \subseteq X$ then $Y \in \I$.
    \item If $X, Y \in \I$, then for every $y \in Y \setminus X$ there is $x \in X$ such that $X \setminus \{x\} \cup \{y\} \in \I$.
\end{enumerate}
The sets in $\I$ are called \emph{independent}.
Inclusionwise maximal independent sets are called \emph{bases}.
It is well-known that all bases of a given matroid have the same cardinality.
Matroids have been intensively studied due to their rich structure, see \citep{korte2018combinatorial} for a basic introduction in the context of combinatorial optimization, containing numerous examples of matroids.

A \emph{matroid-constrained matching instance} $(G = (A \cup O, E), \succ, \I)$ consists of a matching instance $(G, \succ)$ and an additional matroid $\I$.
We say a matching $M$ in $G$ is \emph{\Iconstr} if $M(A) \in \I$, i.e., the set of matched objects form an independent set of the matroid.
In a matroid-constrained matching instance the set of alternatives thus is the set of {\Iconstr} matchings, with the agents' preferences on the objects inducing preferences on the {\Iconstr} matchings as before.

\smallskip

To illustrate the generality of the matroid-constrained model, we show that it captures a natural variant of the arborescence problem. Consider an arborescence instance $(D=(A\cup\{r\},E),\succ)$, but relax the set of feasible alternatives from $r$-arborescences to all branchings $B$ in $D$. That is, $B$ must remain acyclic, but some agents may have no incoming arc in $B$. This relaxed model was introduce by \citet{kavitha2022popularBranching}, motivated by applications to delegation rules in liquid democracy \cite{SchmidtKraepelin2023PhD}. We can reduce any such branching instance to a matroid-constrained matching instance $(G=(A\cup O,E'),\succ,\mathcal{I})$ as follows. The set of agents stays the same. The object set $O$ corresponds to the directed edges $E$ of $D$. We construct the bipartite edge set $E'$ by connecting each agent $a\in A$ to all objects (directed edges) in $\delta_D^-(a)$. On the object side we impose the graphic matroid: independent sets are those subsets of $O=E$ that form an acyclic set in the underlying undirected graph of $D$. Then, $\mathcal{I}$-constrained matchings in $G$ are in one-to-one correspondence with branchings in $D$. Moreover, bases of the graphic matroid correspond exactly to $r$-arborescences in $D$. If we additionally require, within the matroid-constrained model, that the selected set of objects forms a basis (as in \cite{KMS+25a}), then the framework also captures the arborescence model and, as a special case, the assignment (perfect matching) model studied by \citet{kavitha2022popular}. We discuss resulting open questions in \Cref{sec:discussion}.

\subsection{Solution Concepts} \label{sec:solutionConcepts}

The following solution concepts are defined for any election $(\mathcal{X}, \succ)$, where $\mathcal{X}$ is the set of alternatives and $\succ = (\succ_a)_{a \in A}$ is the profile of agents’ preference relations over $\mathcal{X}$ (given as strict rankings, weak rankings, or partial orders). In our combinatorial settings, $\mathcal{X}$ is either the set of all (matroid-constrained) matchings in a bipartite graph or the set of arborescences in a directed graph. In both cases, the preferences over $\mathcal{X}$ are induced from agents’ preferences over the underlying objects.

\paragraph{Condorcet-Winning Set}

We define a pairwise comparison between a set of alternatives $\mathcal{Y} \subseteq \mathcal{X}$ and an alternative $Z \in \mathcal{X}$.
For this, we let
\begin{align*}
  \varphi(\mathcal{Y},Z) & = \left|\left\{a \in A : \exists \; Y \in \Y \text{ s.th. } Y \succ_a Z \right\}\right|, \text{ and } \\   
  \varphi(Z,\mathcal{Y}) & =  \left|\left\{a \in A : Z \succ_a Y \; \forall \; Y \in \Y \right\}\right|,
\end{align*}
respectively, denote the number of agents preferring the set $\mathcal{Y}$ over $Z$ and vice versa, respectively.
Building upon this, we also define $$\mu(\mathcal{Y},Z) = \varphi(\mathcal{Y},Z) - \varphi(Z, \mathcal{Y}).$$
A set $\mathcal{Y}$ is a \emph{Condorcet-winning} set, if $\mu(\mathcal{Y}, Z) > 0$ for all $Z \in \mathcal{X}$. A set $\mathcal{Y}$ is a \emph{weakly Condorcet-winning} set if $\mu(\mathcal{Y}, Z) \geq 0$ for all $Z \in \mathcal{X}$. We also refer to weakly Condorcet-winning sets as \emph{popular}. The \emph{(weak) Condorcet dimension} of an instance $(\X, \succ)$ is the smallest cardinality of a (weakly) Condorcet-winning set. We remark that when $|\mathcal{Y}| = 1$, our notion of a (weak) Condorcet-winning set coincides with the notion of a (weak) Condorcet winner, as defined in the \Cref{sec:intro}.

\subsubsection*{Pareto Optimality}

\newcommand{\Z}{\ensuremath{\mathcal{Z}}}

Under weak rankings, the standard notion of Pareto optimality for a single alternative $X \in \mathcal{X}$ is as follows: $X$ is Pareto optimal if no alternative $Y$ Pareto-dominates it, i.e., if there is no $Y$ such that every agent weakly prefers $Y$ to $X$ and at least one agent strictly prefers $Y$ to $X$. We generalize this notion in two directions: we extend it from single alternatives to sets of alternatives, and we allow preferences given by partial orders.

A set of alternatives $\Y \subseteq \X$ \emph{Pareto-dominates} another set of alternatives $\Z \subseteq \X$ if 
\begin{itemize}
    \item $|\Y| \leq |\Z|$,
    \item for every agent $a \in A$ there is $Y \in \Y$ with $Y \not\prec_a Z$ for all $Z \in \N$,
    \item there is an agent $a \in A$ and a $Y \in \Y$ such that $Y \succ_a Z$ for all $Z \in \N$.
\end{itemize}
A set of alternatives is \emph{Pareto-optimal} if it is not Pareto-dominated by any other set of alternatives. For brevity, we often write \emph{dominates} instead of \emph{Pareto-dominates}. 

First, consider the above definition in the case of weak rankings. Then comparing two sets aligns with how an agents evaluate sets in the definition of a Condorcet winning set: a set $\mathcal{Y}$ is compared to a set $\mathcal{Z}$ by comparing their most-preferred element in $\mathcal{Y}$ to their most-preferred element in $\mathcal{Z}$.

Now consider singleton sets and partial-order preferences. In this case, the definition above treats the condition $Y \not\prec_a Z$ as a weak preference for $Y$ over $Z$ since either $Y \succ_a Z$, or $Y \sim_a Z$, where in the latter case $a$ is indifferent between $Y$ and $Z$.%
\footnote{We remark that an alternative definition arises when interpreting $\sim$ as incomparability rather than indifference. To formalize this, one may work with an explicit weak relation $\succeq_a$ and distinguish true indifference ($Y \succeq_a Z$ and $Z \succeq_a Y$) from incomparability (neither $Y \succeq_a Z$ nor $Z \succeq_a Y$), declaring that $Y$ dominates $Z$ only if all agents report $Y \succeq_a Z$. This leads to a stricter notion of Pareto-dominance and hence a looser notion of Pareto-optimality compared to the definition given in the text. We adopt the latter to align with our voting interpretation under partial orders, as is common in the popularity literature (e.g., \cite{kavitha2022popularBranching}).}

This extension of Pareto optimality to partial orders has the---perhaps surprising---property of not being guaranteed to exist. For a simple yet insightful example in the case of matchings, consider the case where $G = (A \cup O, E)$ is the complete graph with $3$~agents $A = \{a_0, a_1, a_2\}$ and $3$~objects $O = \{o_0, o_1, o_2\}$.
When all agents have the same preferences, namely $o_0 \succ o_2$ but being indifferent among any other pair of objects, there does not exist a Pareto-optimal matching: First, any such matching $M$ would need to match all agents. By symmetry, we can assume without loss of generality that $M(a_i) = o_i$. But then the matching $M'$ with $M'(a_i) = o_{i+1}$ (with $o_3 = o_0$) Pareto-dominates $M$. We remark that agents' preferences in this example are even semiorders \cite{luce1956semiorders}.

\subsubsection*{Top-Choice Alternative}

We call an alternative $X \in \X$ a \emph{top-choice alternative} if, for every agent $a \in A$, there is no alternative $Y \in \X$ with $Y \succ_a X$. In other words, $X$ is undominated for each agent. 

Top-choice alternatives rarely exist, e.g., in the context of matchings under strict preferences, a top-choice matching is one in which every agent receives their favorite object. If we add matroid constraints the notion becomes slightly more subtle: in a top-choice $\mathcal{I}$-constrained matching each agent receives their favorite object among those they can obtain in some $\mathcal{I}$-constrained matching.

\section{Warm-Up: Pareto-optimality and the Condorcet Dimension for Matchings under Strict Preferences}\label{sec:warmup}

We start by considering the special case where agents have strict preferences and there is no matroid constraint.
We show that Pareto-optimality implies strong popularity in this case (except for cases where there exists a top-choice matching, which then is strongly popular by itself).

\begin{theorem}\label{thm:warm-up}
    Let $(G, \succ)$ be a matching instance with strict preferences. Let $\M = \{M_1, M_2\}$ be a Pareto-optimal set of two matchings in $G$. 
    Then $\M$ is popular. Moreover, one of the following holds: 
    \begin{enumerate}
        \item $\M$ is strongly popular, 
        \item or there exists a top-choice matching. 
    \end{enumerate}
\end{theorem}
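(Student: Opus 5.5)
The plan is to collapse the pair $\M=\{M_1,M_2\}$ into a single assignment in a capacity-doubled instance, and then compare that assignment with a competitor $N$ by a direct counting argument.

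\textbf{Reduction to doubled capacities.} For each agent $a$, let $b(a)$ be $a$'s better object among $M_1(a),M_2(a)$ (possibly $\emptyset$). Then $b$ assigns every object to at most two agents. Conversely, every assignment in which each object is used at most twice is a $2$-matching, and so decomposes into two matchings. Hence a pair $\{M_1',M_2'\}$ dominates $\M$ exactly when the corresponding "best" assignment $b'$ dominates $b$ in the usual single-alternative sense, in the instance where every object has capacity $2$. In particular, $b$ is Pareto-optimal there. I will use this in the following concrete form: if some agent $a$ strictly prefers an object $o$ to $b(a)$ and $o$ currently has fewer than two $b$-holders, then moving $a$ to $o$ yields a dominating pair. (The dominating pair obtained this way has size at most $2$, which is all the definition requires.)

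\textbf{Popularity.} Fix a matching $N$. Let $A^-$ be the set of agents with $N(a)\succ_a b(a)$, and $A^+$ the set of agents with $b(a)\succ_a N(a)$. By strictness, every other agent has $N(a)=b(a)$. For $a\in A^-$ we have $N(a)\neq\emptyset$, so the objects $N(a)$, $a\in A^-$, are $|A^-|$ distinct objects. By the observation above, each such object $o=N(a)$ must have exactly two $b$-holders, and $a$ is not one of them. Let $c$ be such a holder. Since $N(o)=a\neq c$, we get $N(c)\neq o=b(c)$, so by strictness $c\in A^+\cup A^-$. The holders of distinct objects are distinct agents, so we obtain $2|A^-|$ distinct agents in $A^+\cup A^-$. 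Thus $|A^+|+|A^-|\ge 2|A^-|$, i.e.\ $\mu(\M,N)=|A^+|-|A^-|\ge 0$.

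\textbf{Strong popularity or a top-choice matching.} It remains to analyse when $\mu(\M,N)=0$ can occur.
\begin{itemize}
\item \emph{Case $A^-=\emptyset$.} Then also $A^+=\emptyset$, so $N(a)=b(a)$ for every agent. I claim $N$ is then a top-choice matching. Suppose instead that some agent $a$ and some object $o$ adjacent to $a$ satisfy $o\succ_a N(a)$. Then the pair consisting of $N$ and the one-edge matching $\{(a,o)\}$ dominates $\M$: every agent weakly keeps $b(a)$ through $N$, and $a$ strictly gains through the second matching.
\item \emph{Case $A^-\neq\emptyset$ and $|A^+|=|A^-|$.} Equality forces the holders of $N(A^-)$ to be exactly $A^+\cup A^-$, with every agent of $A^-$ among them. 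I plan to derive a contradiction with Pareto-optimality by reassigning along a structure in the auxiliary digraph on $A^-$ that has an arc $a\to c$ whenever $c$ holds $N(a)$. Every node of this digraph has out-degree $2$ into $A^+\cup A^-$, and the counting is tight. Consider the agents of $A^-$ whose out-arcs both go into $A^-$, or more generally a closed set of such agents. Shifting each of them to its $N$-object along a cycle or rotation frees a slot without hurting anyone, contradicting Pareto-optimality.
\end{itemize}

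This last step is the main obstacle. Moving all of $A^-$ simultaneously to their $N$-objects overloads an object exactly when that object has two holders from $A^+$. So I need to find a sub-collection, such as a cycle in the digraph above, whose shift respects the capacity $2$. If this fails, I would try to show that the tight structure instead lets me choose $N$-objects so as to produce a strict improvement for some agent while every other agent keeps its current object.
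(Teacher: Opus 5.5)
Your popularity argument is correct, and it is simpler than the paper's. You observe that the objects $N(a)$, $a\in A^-$, have $2|A^-|$ distinct $b$-holders, all lying in $A^+\cup A^-$. This gives $|A^+|\ge|A^-|$ directly, without the branching and \cref{lem:colored-branching}; the paper needs acyclicity of its branching only for the strict inequality. The genuine gap is the second half of the theorem. You never exclude the tight case $A^-\neq\emptyset$, $|A^+|=|A^-|$, and leave it as an acknowledged obstacle with a vague plan (``closed set'', ``frees a slot'', ``if this fails''). In particular, you show neither that the required cycle exists nor that shifting along it respects capacity $2$. So the dichotomy ``strongly popular or a top-choice matching exists'' is unproven.

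The missing step is short. Take the digraph on $A^+\cup A^-$ with an arc $a\to c$ whenever $a\in A^-$ and $b(c)=N(a)$. In the tight case you already showed that every agent of $A^+\cup A^-$ is such a holder. So every node has in-degree exactly one, and because arcs leave only $A^-$-nodes, each node's unique predecessor lies in $A^-$. Walking backwards from any $A^-$-node therefore stays in $A^-$ and must close a directed cycle $a_1\to a_2\to\dots\to a_k\to a_1$ with $b(a_{i+1})=N(a_i)$. Here $k\ge 2$, since $b(a)\neq N(a)$ for $a\in A^-$. Now move each $a_i$ to $N(a_i)$. The object $N(a_i)$ gains $a_i$ but loses its holder $a_{i+1}$, and these objects are pairwise distinct. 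Hence every object's load is unchanged: nothing is freed and nothing is overloaded, so your concern about objects with two $A^+$-holders does not arise along a cycle. The new assignment is again a $2$-matching, so it splits into two matchings. Every agent weakly prefers this pair to $\mathcal{M}$ and $a_1,\dots,a_k$ strictly prefer it, contradicting Pareto-optimality. This is exactly the paper's construction of $M_1',M_2'$ from a cycle in the proof of \cref{thm:warm-up}. The paper applies it for every $N$ to show the digraph is a branching, then gets strictness from the degree count in \cref{lem:colored-branching}; you would need it only in the tight case.
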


{An important ingredient in the proof of \cref{thm:warm-up} is the following lemma, which allows us to compare the number of agents favoring $\M$ over an alternative matching $N$ (later called \emph{blue agents}) to those that favor $N$ over $\M$ (later called \emph{red agents}) and those that are indifferent (later called \emph{grey agents}), by means of an appropriately constructed branching. 
The proof of the more general \cref{thm:Pareto-implies-Condorcet} in the next section will run along similar lines and also make use of \cref{lem:colored-branching}.}

\begin{lemma}\label{lem:colored-branching}
    Let $A$ be a finite set and $A = A_{\text{R}} \cup A_{\text{B}} \cup A_{\text{G}}$ be a partition of $A$ into red, blue, and grey elements.
    Let $B$ be a branching on $A$ with the following properties:
    \begin{enumerate}
        \item Every 
        leaf in a non-singleton component is blue.
        \item Every red node has outdegree at least $2$.
    \end{enumerate}
    Then $|A_{\text{R}}| \leq |A_{\text{B}}|$, with equality only if $A_{\text{R}} = \emptyset = A_{\text{B}}$.
\end{lemma}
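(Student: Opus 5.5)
The plan is to work component by component: the branching $B$ decomposes $A$ into vertex-disjoint out-trees (arborescences), so both conditions and the conclusion are additive over components. Singleton components consist of a single node with outdegree $0$. By condition 2 such a node cannot be red, so it is blue or grey and contributes $0$ to $|A_{\text{R}}|$ and at most $1$ to $|A_{\text{B}}|$. It therefore suffices to show that every non-singleton component $T$ satisfies $|A_{\text{R}}\cap T| < |A_{\text{B}}\cap T|$.

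For a non-singleton tree $T$, I count leaves against high-outdegree nodes. Let $L$ be the set of leaves (nodes of outdegree $0$) and $H$ the set of nodes with outdegree at least $2$. Since $T$ has $|T|-1$ arcs, summing outdegrees gives
\[ \sum_{v\in T}\bigl(\deg^+(v)-1\bigr) = -1, \]
hence
\[ |L| = 1 + \sum_{v\in H}\bigl(\deg^+(v)-1\bigr) \geq 1 + |H|. \]
By condition 1 every leaf is blue, so $|A_{\text{B}}\cap T|\ge |L|$. By condition 2 every red node lies in $H$, so $|A_{\text{R}}\cap T|\le |H|$. Together these give $|A_{\text{R}}\cap T|\le |H| \le |L|-1 \le |A_{\text{B}}\cap T|-1$. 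Each such component is therefore strict.

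Summing over components yields $|A_{\text{R}}|\le|A_{\text{B}}|$. If some non-singleton component exists, the inequality is strict. If every component is a singleton, then $A_{\text{R}}=\emptyset$, so equality forces $A_{\text{B}}=\emptyset$ as well. This gives exactly the equality characterisation in \cref{lem:colored-branching}. The only step needing care is the leaf-counting identity and the check that a non-singleton tree has at least one leaf; both follow from the arc count, so I expect no real obstacle.
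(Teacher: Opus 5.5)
Your proof is correct and is essentially the paper's argument. Both count arcs in each non-singleton out-tree: the paper compares the in-degree sum $r+b+g-1$ against the out-degree lower bound $2r+g$, while you use the equivalent leaf identity $|L| = 1+\sum_{v\in H}(\deg^+(v)-1)$. Both then note that red nodes cannot be singletons, sum over components, and read off the equality case.
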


\begin{proof}
    Consider any non-singleton component (i.e., out-tree) $T$ of $B$, and let $r$, $b$, and $g$, respectively, denote the number of red, blue, and grey nodes in $T$.
    We show that $T$ contains more blue than red nodes.
    Note that every node in $T$ except the root has indegree $1$, and thus the total sum of in-degrees of the nodes in $T$ is $r + b + g - 1$.
    Because every red node has out-degree at least $2$ and every grey node in $T$ has out-degree at least $1$ (because all leafs of $T$ are blue), the total sum of out-degrees of the nodes in $T$ is at least $2r + g$.
    As the total sum of in-degrees must equal the total sum of out-degrees, we obtain $r + b + g - 1 \geq 2r + g$, and hence $b \geq r + 1 > r$.
    Thus, every non-singleton component of $B$ contains strictly more blue then red nodes.
    
    Because all red nodes are contained in non-singleton components (due to their out-degree being $2$), we conclude that there are at least as many blue nodes as there are red nodes, i.e., $|A_{\text{R}}| \leq |A_{\text{B}}|$, with equality only if all nodes are grey singletons,  i.e., $A_{\text{R}} = \emptyset = A_{\text{B}}$.
\end{proof}

We now show how to prove \cref{thm:warm-up} using \cref{lem:colored-branching}.

\begin{proof}[Proof of \cref{thm:warm-up}]
First note that we can assume without loss of generality that every agent $a \in A$ is assigned an object in at most one of the two matchings $M_1$ and $M_2$, because under strict preferences, Pareto-optimality only depends on the best object that $a$ is assigned to among $M_1(a)$ and $M_2(a)$, so we can drop the other assignment.

Now let $N$ be any matching in $G$. 
We color agents as follows: $a$ is blue if $\M \succ_a N$, red if $N \succ_a \M$, and grey if $a$ is indifferent between $\M$ and $N$. 
Note that $\varphi(\M, N)$ equals the number of blue agents and that $\varphi(N, \M)$ equals the number of red agents. 
Below, we construct a branching on $A$ meeting the criteria of \cref{lem:colored-branching}.
This implies that the blue agents strictly outnumber the red agents (and thus $\mu(\M, N) > 0$), unless all agents are grey. 
Note that the latter case is only possible if~$N(a) = \M(a)$ for all $a \in A$, i.e., $N$ assigns $a$ to the same object that it receives in $\M$.
But then~$N$ must be a top-choice matching, as otherwise, there exists a matching $N'$ and an agent $a \in A$ such that $N' \succ_a N$, which implies that $\{N, N'\}$ Pareto-dominates $\M$, a contradiction.

To construct the branching, we first show that for every red agent $a$, the object assigned to $a$ in~$N$ is assigned to two different agents (colored red or blue) in $M_1$ and $M_2$, respectively.
\begin{claim}\label{clm:warm-up-successors}
    For any red agent $a$, there are two distinct agents $b^1_a$ and $b^2_a$, colored red or blue, such that $M_1(b^1_a) = N(a) = M_2(b^2_a)$.
\end{claim}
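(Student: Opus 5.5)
We argue by contradiction, using the Pareto-optimality of $\M$ together with the normalization that each agent is matched in at most one of $M_1, M_2$. Fix a red agent $a$ and let $o := N(a)$. Since $a$ is red, $N(a) \succ_a \M(a)$; in particular $o \neq \emptyset$ and $o$ is adjacent to $a$. We first show that $o$ is matched in both $M_1$ and $M_2$. Suppose, say, $o$ is unmatched in $M_1$. Modify $M_1$ by removing $a$'s edge in $M_1$ (if any) and adding the edge $(a,o)$. This gives a matching $M_1'$. The set $\{M_1', M_2\}$ then Pareto-dominates $\M$:
\begin{itemize}
\item $a$ strictly improves, since $o \succ_a \M(a)$;
\item every other agent keeps their assignments in both matchings.
\end{itemize}
This contradicts Pareto-optimality. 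The same argument applies to $M_2$, so there are agents $b^1_a, b^2_a$ with $M_1(b^1_a) = o = M_2(b^2_a)$.

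Next we show $b^1_a \neq b^2_a$. If they coincided as an agent $b$, then $b$ would be matched in both $M_1$ and $M_2$, contradicting the normalization.

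It remains to show that $b^i_a$ is not grey. This is the step needing care, because grey means $N(b^i_a) = \M(b^i_a)$ under strict preferences. Since $b^i_a$ is matched only in $M_i$, we have $\M(b^i_a) = o$. Being grey would then force $N(b^i_a) = o = N(a)$ with $b^i_a \neq a$. (Note $b^i_a = a$ is impossible: then $\M(a) = o = N(a)$, so $a$ would not be red.) But $N$ is a matching, so two distinct agents cannot both receive $o$, a contradiction. Hence each $b^i_a$ is red or blue.

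I expect the main subtlety to be making sure the normalization (dropping one of the two assignments of a doubly matched agent) preserves Pareto-optimality, and that the exchange $M_1'$ remains a valid matching. Both are immediate once we note that we only reassign the object $o$, which is free in $M_1$, and free up $a$'s old object.
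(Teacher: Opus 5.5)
Your proof is correct and follows the paper's argument essentially step by step: Pareto-optimality forces $N(a)$ to be matched in both $M_1$ and $M_2$, the one-matching-per-agent normalization gives $b^1_a \neq b^2_a$, and $N(b^i_a) \neq N(a) = M_i(b^i_a)$ rules out grey. You are slightly more explicit than the paper in two places: you exclude $b^i_a = a$, and you remove $a$'s old edge in $M_1$ before adding $(a,o)$.
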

\begin{proof}
Let $a$ be a red agent and let $o := N(a)$ be the object assigned to $a$ in $N$.
If $o$ is not assigned in $M_1$, then we can improve $\M$ by assigning $o$ to $a$ in $M_1$, contradicting Pareto-optimality.
Thus there must be agent $b^1_a \in A$ with $M_1(b^1_a) = N(a)$.
Note that $N(b^1_a) \neq N(a) = M_1(b^1_a)$, and therefore $b^1_a$ is either blue or red.
By the same reasoning there must be a blue or red agent $b^2_a \in A$ with $M_2(b^2_a) = N(a)$.
Moreover, $b^1_a \neq b^2_a$, as every agent is assigned only in one of the two matchings $M_1, M_2$ by our earlier assumption.
\end{proof}

We construct the branching $B$ by including the arcs from each red agent $a$ to the two agents $b^1_a$ and $b^2_a$ given by \cref{clm:warm-up-successors}, i.e.,
$B := \{(a, b^i_a) : a \text{ is  red}, i \in \{1, 2\}\}$, see \cref{fig:warmup} for an example.
Note that the indegree of every agent $a$ in $B$ is at most one, as no agent is matched in both $M_1$ and $M_2$ and there is at most one agent that receives the same object in $N$ that $a$ receives in $M_1$ or $M_2$.
Moreover, every non-singleton component of $B$ contains only red and blue nodes, of which all red nodes have outdegree~$2$ and all blue nodes have out-degree $0$.
It thus remains to show that $B$ contains no cycles.

Assume by contradiction that $B$ contains a cycle $C$.
Let $A(C)$ denote the set of nodes on the cycle.
Note that all nodes in $A(C)$ are red (as blue and grey agents do not have out-arcs) and that all arcs in $C$ are of the type $(a, b^i_a)$ for some $a \in A(C)$ and $i \in \{1, 2\}$.
Let $A_1(C) := \{a \in A(C) : (a, b^1_a) \in C\}$ and $A_2(C) := \{a \in A(C) : (a, b^2_a) \in C\}$.
We construct two matchings $M_1'$ and $M_2'$ as follows:
\begin{align*}
    M_1'(a) := 
    \begin{cases}
        N(a) & \text{ for } a \in A_1(C)\\
        \emptyset & \text{ for } a \in A_2(C)\\
        M_1(a) & \text{ for } a \in A \setminus A(C)
    \end{cases}
    \quad
    M_2'(a) := 
    \begin{cases}
        N(a) & \text{ for } a \in A_2(C)\\
        \emptyset & \text{ for } a \in A_1(C)\\
        M_2(a) & \text{ for } a \in A \setminus A(C)
    \end{cases}
\end{align*}
Note that $M'_1$ and $M'_2$ are indeed matchings:
Assume by contradiction that $M'_i(a) = M'_i(a')$ for $i \in \{1, 2\}$ and $a, a' \in A$.
This is only possible if $a \in A_i(C)$, $a' \in A \setminus A(C)$ and 
$M'_i(a) = N(a) = M_i(a')$ (up to swapping $a$ and $a'$).
But this implies $a' = b^i_a$, which together with $a \in A_i(C)$ yields $(a, a') \in C$, i.e., $a' \in A(C)$, a contradiction.
Thus, $M_1'$ and $M_2'$ are matchings and $\M' := \{M_1', M_2'\}$ Pareto-dominates $\M$ because in $\M'$ the red agents $a \in A(C)$ all receive their more preferred object $N(a)$, while all agents in $A \setminus A(C)$ receive the same item as in $\M$.
\end{proof}

{To apply \Cref{thm:warm-up} and obtain an upper bound of $2$ for the weak Condorcet dimension under strict rankings, it suffices to show that a Pareto-optimal set of size $2$ always exists. For example, one can construct such a set via a simple round-robin procedure: duplicate each object and let agents, in turn, pick their favorite  remaining adjacent object. This yields a Pareto-optimal $2$-matching $M$, which decomposes into a Pareto-optimal set of matchings $\M$. To see why $M$ is Pareto-optimal, consider any other $2$-matching $N$. Then, some agent must receive an object $N(a) \neq M(a)$ which was picked after their turn. Thus, $M(a) \succ_a N(a)$ and $N$ cannot Pareto-dominate $M$. In \Cref{sec:pareto-implies-condorcet-matroids} we discuss more ways to find Pareto-optimal sets.}

\begin{figure}
    \centering
    \includegraphics[width=0.7\linewidth]{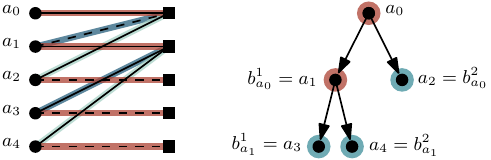}
    \caption{The construction of the branching used in the proof of \cref{thm:warm-up}. The graph on the left depicts the matching instance, with dark-blue edges belonging to $M_1$, light-blue edges belonging to $M_2$, and red edges belonging to $N$.
    The resulting branching is depicted on the right.}
    \label{fig:warmup}
\end{figure}

\section{Pareto-optimality and the Condorcet Dimension for Matroid-Constrained Matchings}
\label{sec:pareto-implies-condorcet-matroids}

In this section, we extend \cref{thm:warm-up} in two directions: First, we allow a matroid-constraint as described in \cref{sec:prelim-matroids}. Second, we allow general partial-order preferences for the agents.
We show that Pareto-optimality still implies popularity in this general setting.

\begin{theorem}\label{thm:Pareto-implies-Condorcet}
    Let $(G, \mathcal{I}, \succ)$ be a matroid-constrained matching instance with partial-order preferences.
    Then any Pareto-optimal set $\mathcal{M}$ of $\mathcal{I}\!$-constrained matchings in $G$ with~$|\mathcal{M}| \geq 2$ is popular.\\
    Moreover, if preferences are strict rankings, one of the following holds:
    \begin{itemize}
        \item Every Pareto-optimal set $\mathcal{M}$ of $\mathcal{I}\!$-constrained matchings in $G$ with~$|\mathcal{M}| \geq 2$ is strongly popular,
        \item or there exists a top-choice $\mathcal{I}\!$-constrained matching in $G$.
    \end{itemize}
\end{theorem}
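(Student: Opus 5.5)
The plan follows the warm-up proof of \cref{thm:warm-up}: fix a competitor $N$ and color agents blue, red or grey. Then build a branching on $A$ satisfying the hypotheses of \cref{lem:colored-branching}. Here $a$ is red if $N(a)\succ_a M(a)$ for all $M\in\M$, blue if some $M\in\M$ has $M(a)\succ_a N(a)$, and grey otherwise. First reduce to $|\M|=2$. Pick, for each agent, one matching in $\M$ that is maximal for them; for weak orders this is a best matching, and for partial orders I would choose one not dominated by the others. Note that if $\M$ is Pareto-optimal, then dropping assignments so that each agent is matched in at most one $M_i$ keeps the set undominated, and it keeps every $M_i$ $\I$-constrained because independence is closed under subsets. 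For $|\M|\ge 3$ I would argue directly with two successors chosen among distinct members of $\M$, so the general case works the same way. As a caveat for partial orders: an agent $a$ is red only if $N(a)$ beats $a$'s assigned object, and that object is the unique one $a$ holds after the reduction. So "red" is still meaningful, though the reduction must preserve Pareto-optimality. This works because the reduced set is weakly no worse for everyone, hence also Pareto-optimal. If it were dominated, so would be the original, using transitivity of $\not\prec$ through equal objects.

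The key change is \cref{clm:warm-up-successors}. For a red agent $a$ with $o=N(a)$, consider $M_i$ for $i=1,2$. If $o\notin M_i(A)$, then either $M_i(A)+o\in\I$, in which case we give $o$ to $a$ directly, or by the matroid circuit property the set $M_i(A)+o$ contains a unique circuit. In the latter case we exchange $o$ against some element of that circuit, so the successor $b^i_a$ becomes an agent holding a circuit element, not necessarily the holder of $o$. To keep the in-degree of the branching at most one, I would select successors via a matching in the exchange graph. One side is the red agents' $N$-objects, the other is $M_i(A)$, with an edge $(o,x)$ whenever $M_i(A)-x+o\in\I$ or $o=x\in M_i(A)$. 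The target is a matching that covers all red $N$-objects, obtained from the standard fact that for independent sets $I$ and $J=N(\text{reds})$ there is an injection from $J\setminus I$ into $I\setminus J$ along exchange edges when $|J|$ is small enough. Since $N(A)\in\I$, the exchange lemma for two independent sets gives a perfect matching between $N(A)\setminus M_i(A)$ and a subset of $M_i(A)\setminus N(A)$, after augmenting $M_i(A)$ to the same rank via the matroid, with free additions treated as a sink. Each successor holds an object not equal to its own $N$-object, so it is not grey in the "same object" sense. It remains to ensure successors are red or blue; under partial orders a non-grey successor might still be incomparable. To handle this I would redefine grey as "$N(b)\sim_b M(b)$", so leaves must be blue by a separate argument. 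If a successor $b$ is grey, replace $b$'s object by $N(b)$ and continue the chain. This creates paths that end either at a blue agent or at a free slot, and the free-slot case contradicts Pareto-optimality.

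Acyclicity is shown as before. A cycle, or a path ending in a free slot, yields modified matchings $M_1',M_2'$ in which every red agent on it receives $N(a)$, grey agents on it swap to an object indifferent to them, and all others are unchanged. This would Pareto-dominate $\M$. The main obstacle is here. A single exchange $M_i(A)-x+o$ is independent, but simultaneous exchanges along a cycle need not be. To fix this I would use the unique-matching lemma for matroids (Krogdahl / \citet{korte2018combinatorial}): if the exchange graph restricted to the cycle has a \emph{unique} perfect matching, the simultaneous exchange is independent. So I would choose a minimal cycle, or shortcut it, to guarantee uniqueness. Having the branching, \cref{lem:colored-branching} gives $|A_R|\le|A_B|$, i.e.\ $\mu(\M,N)\ge0$, with equality only when all agents are grey. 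For strict rankings, all-grey means $N(a)=\M(a)$ for all $a$. If some agent could do better in another $\I$-constrained $N'$, then $\{N,N'\}$ dominates $\M$, so $N$ is a top-choice matching, which yields the dichotomy.
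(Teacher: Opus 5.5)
\textbf{Gap: the reduction fails for partial orders.} Your first step, reducing to a set in which each agent holds an object in at most one $M_i$, is false for partial orders, and the rest of your argument depends on it. The ``transitivity of $\not\prec$'' you invoke does not hold. Take a single agent $a$ adjacent to objects $x,y,z,w$ with the free matroid, and let $\succ_a$ consist only of $z\succ_a y$ and $w\succ_a x$. No object beats both $x$ and $y$, so $\M=\{\{(a,x)\},\{(a,y)\}\}$ is Pareto-optimal, and it is popular. But whichever assignment you keep, the reduced set is dominated: by $\{\{(a,w)\}\}$ if you keep $x$, and by $\{\{(a,z)\}\}$ if you keep $y$. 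Worse, the reduced set loses $1:0$ against that matching, whereas $\M$ only ties, because $a$ is grey for $\M$. So popularity of the reduced set is not merely unproven but false in general. The reduction is sound only for weak rankings, where keeping a best object preserves both Pareto-optimality and the red/blue classes. Your outline can therefore at best handle weak (in particular strict) rankings, not the partial-order statement that is the theorem's main claim.

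\textbf{What breaks without it, and what is missing.} Without the reduction, red and blue must be $A_+$ and $A_-$ relative to all of $\M$, and agents may hold objects in several matchings. Your in-degree bound then disappears: one agent can be a successor in $D_{M_1}$ and also in $D_{M_2}$. Within a single matching the bijection still gives in-degree at most one. The missing ingredient is \cref{lem:B-indegree}, which obtains in-degree at most one from Pareto-optimality itself. Suppose $a'$ lies on chains for two distinct $M_1,M_2$, labelled so that $M_2(a')\not\prec_{a'}M_1(a')$. Apply \cref{cor:exchange-path} in $M_1$ to the chain prefix ending at $a'$. The red start strictly improves, and the internal agents lie outside $A_-$, so they are not hurt. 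The agent $a'$ (which may even be blue) loses its $M_1$-object but is covered by $M_2(a')$, contradicting Pareto-optimality.

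\textbf{Further requirements.} For that argument to work, the chains must use the full bijection $f_M$ for every $M$, after padding with the objects $\emptyset_a$. They must continue through every agent not in $A_-$, including red ones, and stop at the first agent of $A_-$. You also need \cref{lem:AplusCycleAminus}: a chain can close into a cycle without ever meeting $A_-$, a case your ``blue agent or free slot'' dichotomy omits. Your acyclicity plan (minimal cycle plus the unique-matching lemma) is the right tool. However, the cycle mixes arcs from several matchings, so uniqueness must be arranged per matching across several disjoint segments. The paper does this by shortcutting only along backward arcs of the same matching's exchange graph, keeping segment endpoints in $A_+$ and consecutive segments in distinct matchings, and then invoking \cref{cor:exchange-path-multiple}.
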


\cref{thm:Pareto-implies-Condorcet} is a consequence of \cref{lem:Pareto-vote-count} (in combination with \cref{lem:colored-branching}) stated and proven below. 
Before we discuss this proof, let us first observe some consequences of the theorem.
To use \cref{thm:Pareto-implies-Condorcet} to bound the (weak) Condorcet dimension for matroid-constrained matchings, we use the fact that when preferences are weak rankings, Pareto-optimal sets of any cardinality exist and can be efficiently computed by solving a maximum-weight matroid intersection problem.

\begin{restatable}{lemma}{restateLemComputingPareto}\label{lem:computing-pareto}
    There is an algorithm, that given a matroid-constrained matching instance $(G, \mathcal{I}, \succ)$ with weak-ranking preferences, where $\mathcal{I}$ is given by an independence oracle, computes in polynomial time a Pareto-optimal set $\mathcal{M}$ of $\mathcal{I}\!$-constrained matchings in $G$ with~$|\mathcal{M}| \leq 2$.
\end{restatable}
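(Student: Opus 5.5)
The plan is to encode a pair of $\mathcal{I}$-constrained matchings as a common independent set of two matroids, and then to pick agent-specific edge weights that strictly respect each agent's weak ranking. A maximum-weight common independent set then cannot be Pareto-dominated, since a dominating set could be turned into a common independent set of strictly larger weight. The computation itself is weighted matroid intersection, which runs in polynomial time given independence oracles.

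\emph{Construction.} Let the ground set be $E_1 \cup E_2$, two disjoint copies of $E$; write $(a,o)_i$ for copy $i$ of edge $(a,o)$.
\begin{itemize}
\item The first matroid $\mathcal{P}$ is the partition matroid on the agents. A set is independent if every agent $a$ is incident to at most one element of $E_1 \cup E_2$.
\item The second matroid $\mathcal{Q}$ is the direct sum of two copies of one matroid, copy $i$ living on $E_i$. A set $F \subseteq E_i$ is independent in copy $i$ if its edges use pairwise distinct objects and the set of objects they use lies in $\mathcal{I}$.
\end{itemize}
Copy $i$ is a matroid because it is the parallel extension of $(O,\mathcal{I})$: all edges at the same object are declared parallel. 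Direct sums of matroids are matroids, and an independence oracle for $\mathcal{Q}$ is obtained directly from the oracle for $\mathcal{I}$.

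\emph{Common independent sets.} A set $X$ independent in both $\mathcal{P}$ and $\mathcal{Q}$ gives two $\mathcal{I}$-constrained matchings $M_i := \{(a,o) : (a,o)_i \in X\}$ with no agent matched in both. Conversely, every such pair arises this way.

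\emph{Weights.} For an edge $(a,o)$ set $w((a,o)_i) := 1 + |\{o' : (a,o')\in E,\ o \succ_a o'\}|$. Because $\succ_a$ is a weak ranking:
\begin{itemize}
\item $o \succ_a o'$ implies $w(a,o) > w(a,o')$, since the lower set of $o$ strictly contains that of $o'$ by transitivity (it contains $o'$ itself).
\item $o \sim_a o'$ implies $w(a,o) = w(a,o')$, since in a weak ranking indifferent objects have the same strict lower set.
\item Every edge has weight at least $1 > 0$, which is the value assigned to being unmatched, and this matches the convention $o \succ_a \emptyset$.
\end{itemize}
Hence the weight an agent contributes is a strictly monotone representation of $\succ_a$ on objects together with $\emptyset$.

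\emph{Algorithm.} Compute a maximum-weight common independent set $X$ by weighted matroid intersection. Output $\mathcal{M} := \{M_1, M_2\}$, which has at most two elements (duplicates or empty matchings merge).

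\emph{Correctness.} Suppose some $\mathcal{Y}$ with $|\mathcal{Y}| \le |\mathcal{M}| \le 2$ Pareto-dominates $\mathcal{M}$.
\begin{enumerate}
\item Write $\mathcal{Y} = \{Y_1, Y_2\}$, padding with the empty matching if needed.
\item For each agent $a$, keep only the edge of $a$ from a matching $Y_i$ giving $a$ a $\succ_a$-best object among $Y_1(a), Y_2(a)$, and drop $a$'s other edge. Under weak rankings the best object in a set is well defined up to indifference.
\item The resulting $Y_1', Y_2'$ are still $\mathcal{I}$-constrained, because the matroid is downward closed. No agent appears in both, so they form a common independent set $X'$.
\item For weak rankings, the domination condition says each agent's best object in $\mathcal{Y}$ is weakly preferred to its best object in $\mathcal{M}$, and strictly so for at least one agent.
\item Each agent appears in at most one of $M_1, M_2$, so agent $a$'s weight in $X$ is exactly the weight of its best object in $\mathcal{M}$. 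Likewise, its weight in $X'$ is the weight of its best object in $\mathcal{Y}$.
\item By the monotonicity of the weights, $w(X') > w(X)$, contradicting the maximality of $X$.
\end{enumerate}

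\emph{Main obstacle.} The delicate step is the bookkeeping between sets of matchings and common independent sets: restricting each agent to a single matching must lose nothing, and the chosen weights must turn "weakly better for all, strictly better for one" into a strict weight increase. This is exactly where weak rankings are used. For a general partial order, the weights $w$ can violate $w(a,o) = w(a,o')$ for $o \sim_a o'$, so the argument breaks, which is consistent with the paper's example showing that Pareto-optimal sets need not exist under partial orders.
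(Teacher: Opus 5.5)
Your proof is correct and follows essentially the paper's approach: a maximum-weight common independent set of a partition matroid (one edge per agent) and a matroid encoding two $\mathcal{I}$-constrained matchings, with weights that strictly represent each weak ranking, so that a Pareto-dominating pair, restricted to one edge per agent, would give a heavier common independent set. Your second matroid is the direct sum of two parallel-extension copies on $E\times\{1,2\}$, whereas the paper splits objects into agent-specific parallel copies and uses the matroid union of $\mathcal{I}$ with itself. The direct sum makes the independence oracle immediate and lets you read off $M_1, M_2$ directly, instead of decomposing the optimal set $S$ as $I'\cup I''$ afterwards.
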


{The proof of \cref{lem:computing-pareto} is given in \cref{app:matroids-computing}. For unconstrained matchings, finding a Pareto-optimal set is even simpler as a single maximum-weight bipartite matching computation suffices: after duplicating objects, one can set weights to agents’ ranks. In both cases, our proof crucially uses the fact that the preferences are weak rankings.
For general partial-order preferences, the existence of Pareto-optimal sets of (unconstrained) matchings of constant size cannot be guaranteed, as we show in \cref{sec:partial-orders}.} We obtain the following corollaries from \cref{thm:Pareto-implies-Condorcet,lem:computing-pareto}.

\begin{corollary}\label{cor:Condorcet-weak-order}
    If preferences are weak rankings, a popular set of at most two $\mathcal{I}\!$-constrained matchings exists and can be computed in polynomial time when $\I$ is given by an independence oracle.
\end{corollary}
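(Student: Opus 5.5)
The corollary follows by combining \cref{lem:computing-pareto} with \cref{thm:Pareto-implies-Condorcet}, with one small case distinction for when the computed set is a singleton.

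First, run the algorithm of \cref{lem:computing-pareto} on the weak-ranking instance $(G,\mathcal{I},\succ)$. In polynomial time this yields a Pareto-optimal set $\mathcal{M}$ with $|\mathcal{M}|\le 2$. Weak rankings are in particular partial orders, so \cref{thm:Pareto-implies-Condorcet} applies directly whenever $|\mathcal{M}|=2$. In that case $\mathcal{M}$ is popular and we are done.

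It remains to handle $|\mathcal{M}|\le 1$. The empty set is not Pareto-optimal, since any singleton dominates it, so this means $\mathcal{M}=\{M\}$. I would pick any $\mathcal{I}$-constrained matching $M'$, for example $M'=M$ or the empty matching, and argue that $\mathcal{M}':=\{M,M'\}$ is Pareto-optimal among sets of size at most $2$.

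To see this, suppose some $\mathcal{Y}$ with $|\mathcal{Y}|\le 2$ dominates $\mathcal{M}'$. Under weak rankings, an agent's comparison of two sets depends only on its best element in each set. Since $M'$ is added, every agent's best value in $\mathcal{M}'$ is weakly at least its best value in $\{M\}$. Hence $\mathcal{Y}$ also dominates $\{M\}$ in the value sense. The dominance definition, however, demands $|\mathcal{Y}|\le|\{M\}|=1$, so this would not contradict Pareto-optimality of $\{M\}$ when $|\mathcal{Y}|=2$. This cardinality condition is the step I expect to be the main obstacle.

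I see two ways to handle it. The cleaner one is to note that the proof of \cref{lem:computing-pareto}, via maximum-weight matroid intersection with duplicated capacity, can be made to output a set that is Pareto-optimal among sets of size exactly $2$; duplicates are allowed, since a set may be padded with a repeated element. I would simply invoke the lemma in that form.

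The alternative avoids this issue: prove directly that a Pareto-optimal singleton $\{M\}$ is itself popular. For any $N$, if some agent strictly prefers $N$ while nobody strictly prefers $M$, then $\{N\}$ dominates $\{M\}$. So every agent who strictly prefers $N$ is balanced by at least one agent who strictly prefers $M$, but only one such agent is guaranteed, which falls short of popularity in general. I would therefore rely on the first route.

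Finally, the running time is polynomial given the independence oracle, by \cref{lem:computing-pareto}. The popularity check needs no computation, since it is guaranteed by the theorem.
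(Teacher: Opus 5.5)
Your proposal is correct and follows the paper's route: the paper derives the corollary directly by feeding the output of \cref{lem:computing-pareto} into \cref{thm:Pareto-implies-Condorcet}, and it never treats a singleton output separately. Your extra handling of that case is sound, since the lemma's weight argument in fact rules out domination by any family of at most two matchings, but you must pad with a matching distinct from $M$ (e.g.\ the empty matching): $\{M,M\}=\{M\}$ is still a one-element set, to which the theorem does not apply. In any case, with $I'\cap I''=\emptyset$ the construction returns two distinct matchings unless the empty matching is the only $\mathcal{I}$-constrained matching, and in that situation popularity is trivial.
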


\begin{corollary}\label{cor:Condorcet-linear-order}
    If preferences are linear orders, a strongly popular set of at most two $\mathcal{I}\!$-constrained matchings exists and can be computed in polynomial time when $\I$ is given by an independence oracle.
\end{corollary}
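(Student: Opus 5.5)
We derive this from \cref{cor:Condorcet-weak-order}'s ingredients, namely \cref{thm:Pareto-implies-Condorcet} and \cref{lem:computing-pareto}. The key observation is that linear orders are a special case of weak rankings. We split into two cases according to the dichotomy in the second part of \cref{thm:Pareto-implies-Condorcet}. We handle the top-choice case separately, since then a single matching suffices.

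\emph{Step 1 (top-choice case).} Suppose a top-choice $\I$-constrained matching $X$ exists. Take any $\I$-constrained matching $Z$ that differs from $X$ in the assignment of some agent. No agent strictly prefers $Z$ to $X$, so $\varphi(Z,\{X\})=0$. Since preferences are strict, an agent $a$ with $Z(a)\neq X(a)$ satisfies $X(a)\succ_a Z(a)$; here we use that being matched beats being unmatched, and that $X(a)$ is undominated for $a$. Hence $\mu(\{X\},Z)>0$, so $\{X\}$ is strongly popular. Algorithmically, we first decide whether such an $X$ exists. For each agent $a$ and each object $o$, we test whether some $\I$-constrained matching assigns $o$ to $a$; this is a bipartite-matching/matroid intersection feasibility question. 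From this we read off $a$'s best attainable object $t(a)$. We then test, by one more matroid intersection computation, whether a single $\I$-constrained matching gives every agent $t(a)$. If it does, we output it.

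\emph{Step 2 (no top-choice matching).} If no top-choice matching exists, we call the algorithm of \cref{lem:computing-pareto}. This applies because strict rankings are weak rankings, and it returns a Pareto-optimal set $\M$ with $|\M|\le 2$. If $|\M|=2$, the second part of \cref{thm:Pareto-implies-Condorcet}, combined with the absence of a top-choice matching, immediately gives that $\M$ is strongly popular.

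\emph{Step 3 (main obstacle).} The remaining case is $|\M|=1$, say $\M=\{M\}$. I expect this to be the delicate step, because Pareto-optimality of a singleton does not obviously lift to a pair. My plan is to look inside the proof of \cref{lem:computing-pareto}. That algorithm computes a Pareto-optimal family of two matchings, for instance via a max-weight $2$-matching decomposed into two matchings, and it returns fewer than two only when the two matchings coincide. In that situation each agent's best object over all pairs is already attained by the single matching $M$. I would then argue that $M$ must in fact be top-choice. Otherwise some agent $a$ and some $\I$-constrained matching $N'$ satisfy $N'(a)\succ_a M(a)$. But then $\{M,N'\}$ is a pair giving every agent a weakly better best object than $M$ does, and giving $a$ a strictly better one, which contradicts the Pareto-optimality of the family among sets of size $\le 2$. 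Such a top-choice $M$ was already handled in Step 1, so this case cannot arise in Step 2. If the lemma's guarantee turns out to be weaker than I assume, the fallback is to modify the procedure so that it always returns a Pareto-optimal set of size exactly two whenever no top-choice matching exists. One natural way is to insist on two distinct matchings in the matroid intersection formulation.

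Polynomial running time follows because every step is a polynomial number of matroid intersection or independence-oracle computations.
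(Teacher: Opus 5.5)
Your proposal is correct and follows the paper's route. The paper checks whether a top-choice $\mathcal{I}$-constrained matching exists, which holds iff giving every agent its best individually attainable object is feasible, and outputs it as a strongly popular singleton if so; otherwise it applies the second part of \cref{thm:Pareto-implies-Condorcet} to the set returned by \cref{lem:computing-pareto}. Your Step~3 concern, which the paper leaves implicit, is resolved exactly as you sketch: the construction in \cref{lem:computing-pareto} always outputs $M', M''$ from a disjoint decomposition $I' \cup I''$ and shows that no pair dominates them, so the two are distinct unless both are empty, and in that case the empty matching is the only alternative and is top-choice.
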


\cref{cor:Condorcet-linear-order} follows directly from \cref{thm:Pareto-implies-Condorcet,lem:computing-pareto}.
To obtain \cref{cor:Condorcet-linear-order}, we additionally use the fact that the existence of a top-choice matching can be easily checked (either assigning each agent their top choice object results in an~$\mathcal{I}\!$-constrained matching, or no top-choice matching exists) and if it exists yields a strongly popular set of size one.

\subsection{Proof of \cref{thm:Pareto-implies-Condorcet}}
To prove \cref{thm:Pareto-implies-Condorcet}, we establish the following lemma and explain how it implies the theorem. The remainder of this section then will be devoted to the proof of the lemma (see overview below).

\begin{lemma}\label{lem:Pareto-vote-count}
    Let $(G, \I, \succ)$ be a matroid-constrained matching instance.
    For a Pareto-optimal set of {\Iconstr} matchings  $\M$ in $G$ and an {\Iconstr} matching $N$ in $G$, let
    \begin{align*}
        A_+ & := \{a \in A : N \succ_a M \text{ for all } M \in \mathcal{M}\} \text{ and }\\
        A_- & := \{a \in A : M \succ_a N \text{ for some } M \in \mathcal{M}\}.
    \end{align*}
    Then there exists a branching $B$ on the set of agents $A$ such that
    \begin{enumerate}
        \item every 
        leaf in a non-singleton component of $B$ is in $A_-$,
        \item every $a \in A_+$ has out-degree $|\M|$.
    \end{enumerate}
\end{lemma}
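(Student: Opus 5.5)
Write $\M = \{M_1,\dots,M_k\}$ and put $X_i := M_i(A) \in \I$ and $Y := N(A) \in \I$. I call agents in $A_+$ \emph{red}, agents in $A_-$ \emph{blue}, and all other agents \emph{grey*}. A grey* agent $a$ is one for which $N(a)$ is not dominated by any $M_i(a)$, but some $M_i(a)$ is not dominated by $N(a)$.

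The plan is to generalize the branching of the proof of \cref{thm:warm-up}. Each red agent gets one out-arc per matching $M_i$. Each grey* agent $b$ that receives an arc gets one out-arc of its own, toward the agent currently blocking $N(b)$. Every arc points to an agent whose $\M$-object is ``displaced'' by the tail's $N$-object. The lemma then follows from three checks: in-degree at most one, no cycles, and every sink that is hit by an arc being blue. Grey* nodes that are hit then have out-degree $1$, and red nodes have out-degree $k$.

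\emph{Step 1 (representatives).} Under the paper's definition of dominance, an agent only needs one new alternative that is not dominated by any of its old ones. So for each agent $b$ I fix a representative index $i(b)$ such that $M_{i(b)}(b)$ is a $\succ_b$-maximal element among $\{M_j(b)\}$. Any other assignment of $b$ can be deleted in a dominating construction at no cost to $b$. Consequently, an arc $(a,i)$ may only point to an agent $b$ with $i(b)=i$. If the object to be displaced belonged to a non-representative assignment, deleting it would free room for $a$ and directly contradict Pareto-optimality.

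\emph{Step 2 (exchange targets).} For each $i$, consider the exchange structure between $X_i$ and $Y$. If $N(a) \in X_i$, the target is the $M_i$-holder of $N(a)$. Otherwise I use a Krogdahl-type exchange lemma: choose an injective partial map $f_i \colon Y\setminus X_i \to X_i \setminus Y$, namely a unique (minimal) perfect matching in the exchange graph on its domain. The elements outside the domain of $f_i$ can be added together to $X_i$ while keeping independence, and any subfamily of exchanges $y \mapsto f_i(y)$ can be performed simultaneously. If $f_i(N(a))$ is undefined, then giving $a$ the object $N(a)$ in $M_i$ and deleting $a$'s other $M_i$-assignment yields a dominating set. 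So $f_i(N(a))$ exists, and the target $b^i_a$ is its holder in $M_i$. Injectivity of $f_i$, together with the fact that each object has a unique holder and with Step 1, gives in-degree at most $1$. Distinct $i$ give distinct targets, because each target has a single representative index. For a target $b$, $N(b)$ differs from $b$'s representative object. If $b$ is grey*, I add the arc to its target for index $i(b)$. This arc exists by the same exchange argument, since giving $b$ the object $N(b)$ does not hurt $b$.

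\emph{Step 3 (acyclicity; main obstacle).} Suppose the arc set contains a cycle $C$. On $C$ I perform the corresponding exchanges simultaneously: each red or grey* node $a$ on $C$ takes $N(a)$ in the matching $M_i$ used by its arc on $C$, and drops its other relevant assignments. Each red agent on $C$ strictly improves and no agent worsens, so this would contradict Pareto-optimality. The delicate part is independence. The objects removed from $X_i$ are exactly $f_i$ applied to the objects added, and these exchanges form a subfamily of a unique exchange matching, so the Krogdahl lemma should guarantee that the result is in $\I$. I expect the main difficulty to be choosing the maps $f_i$ and the handling of grey* nodes consistently, so that every sub-matching arising from a cycle is again unique. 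If a cycle contains only grey* nodes, I would either choose $f_i$ minimal, for example by lexicographic tie-breaking so that no such cycle arises, or simply delete one arc of such a cycle. Deleting one arc could turn a grey* node into a leaf, which I would then have to rule out separately.
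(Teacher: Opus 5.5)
\textbf{The gap is the exchange lemma in Step~2.} Your architecture is essentially the paper's. Start at each red agent and, for each $M_i$, follow the exchange pairing through agents not in $A_-$ until the first agent of $A_-$; the union of these walks is exactly the union of the paths $P^M_a$. But everything after the definition of the arcs rests on the ``Krogdahl-type'' lemma of Step~2, and that lemma is false.

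There need not be an injective partial map $f_i$ that is the unique perfect matching of the exchange graph on its domain while the remaining elements of $Y\setminus X_i$ can be added to $X_i$. Take the uniform matroid $U_{2,4}$ with $X_i=\{1,2\}$ and $Y=\{3,4\}$. Since $X_i$ is a basis, nothing can be added, so the domain must be $\{3,4\}$. Yet the exchange graph between $\{3,4\}$ and $\{1,2\}$ is complete bipartite and has two perfect matchings.

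The property you actually use in Step~3 is that every subfamily of the exchanges $y\mapsto f_i(y)$ can be performed simultaneously. This is strong base orderability of the pair $(X_i,Y)$. Applying it to $\{y\}$ and to $Y\setminus\{y\}$ shows that $f_i^{-1}$ would be a base-orderable bijection. No such bijection exists for the graphic matroid of $K_4$ with the spanning paths $\{v_1v_2,v_2v_3,v_3v_4\}$ and $\{v_1v_3,v_1v_4,v_2v_4\}$: both $v_1v_2$ and $v_3v_4$ are forced onto $v_1v_4$. These are exactly the branching instances the model is meant to cover.

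So you have no justification that the exchanges along a chain red $\rightarrow$ grey$^*\rightarrow\dots$ are jointly feasible. This affects three places:
\begin{itemize}
\item Step~1 for agents reached through such a chain, since a single exchange from a grey$^*$ tail gives no strict improvement.
\item The existence of the out-arc of a hit grey$^*$ agent.
\item All of Step~3, which you leave open in any case.
\end{itemize}
Your suggested repairs do not help. Tie-breaking cannot create uniqueness, and deleting an arc can produce a grey$^*$ leaf, violating property~1.

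\textbf{The missing idea: you never need all exchanges at once.} The paper uses Brualdi's bijective exchange only to \emph{define} the pairing $f_M$, so that $D_M$ is a disjoint union of cycles; this gives the in-degree bookkeeping. Whenever a contradiction with Pareto-optimality is needed, it applies Frank's lemma (\cref{lem:frank-exchange}) to a \emph{shortest} cycle or path of the full exchange graph $D_{N\rightarrow M}$ inside the relevant node set. Minimality supplies the triangular hypothesis. The conclusion is only that \emph{some} subset $A'$ of the agents on the path, containing its start and end, can be switched to $N$, with the end agent moved to its dummy object (\cref{lem:exchange-cycle,cor:exchange-path}).

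That suffices because no internal agent of $P^M_a$ lies in $A_-$. Hence agents in $A'$ are not hurt, while the red start strictly improves. This yields two results:
\begin{itemize}
\item \cref{lem:AplusCycleAminus}: the $D_M$-cycle through a red agent meets $A_-$.
\item \cref{lem:B-indegree}: its proof shows that any agent $b\neq a$ on $P^{M}_{a}$ satisfies $M'(b)\prec_b M(b)$ for every other $M'\in\M$. This is precisely the statement your Step~1 needs for agents reached through chains.
\end{itemize}

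For acyclicity, the paper first shortens a cycle of $B$ until no exchange arc goes from a later path to an earlier path of the same matching. Only then does it apply the multi-path version, \cref{cor:exchange-path-multiple}. This shortening step is the ingredient missing from your Step~3.
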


\begin{proof}[Proof of \cref{thm:Pareto-implies-Condorcet} using \cref{lem:Pareto-vote-count}]
    Let $\M$ be a Pareto-optimal set of {\Iconstr} matchings in $G$ and let $N$ be an {\Iconstr} matching in $G$. Note that the branching $B$ given by \cref{lem:Pareto-vote-count} fulfills the requirements of \cref{lem:colored-branching} when coloring the nodes in $A_+$ red, the nodes in $A_-$ blue, and all other nodes grey.
    Applying \cref{lem:colored-branching} yields $|A_+| \geq |A_-|$, with equality only when $A_+ = \emptyset = A_-$.
    This directly implies the first statement of \cref{thm:Pareto-implies-Condorcet}, namely that any Pareto-optimal set $\mathcal{M}$ of $\mathcal{I}\!$-constrained matchings in $G$ with~$|\mathcal{M}| \geq 2$ is popular, as $\varphi(N, \M) = |A_+|$ and $\varphi(\M, N) = |A_-|$. 
    
    To see that \cref{lem:Pareto-vote-count} also implies the stronger statement for linear-order preferences, assume by contradiction that $\mathcal{M}$ is not strongly popular.
    Then there must be~$N$ such that $\varphi(\M, N) = \varphi(N, \M)$, which implies $|A_+| = |A_-|$ and therefore $A_+ = \emptyset = A_-$ as described above.
    But under linear-order preferences this is equivalent to $N(a) = \max_{\succ_a} \{M(a) : M \in \M\}$ for all $a \in A$.
    Now, if $N$ is not a top-choice {\Iconstr} matching, then there is an {\Iconstr} matching $M'$ and an agent $a \in A$ such that $M'(a) \succ_a N(a)$.
    But then the set $\{N, M'\}$ Pareto-dominates $\M$, contradicting the Pareto-optimality of the latter.
\end{proof}

\subsubsection*{Overview of \cref{sec:pareto-implies-condorcet-matroids}}
In the remainder of this section, we prove \cref{lem:Pareto-vote-count}. 
Recall that in the proof of \cref{thm:warm-up} an arc $(a, b)$ in the branching indicated that $N$ assigns some item to agent $a$ that is assigned to agent $b$ in one of the matchings in $\M$.
In the presence of matroid-constraints, this straightforward construction is no longer sufficient: A ``red'' agent 
$a \in A_+$ may be assigned an item $o$ in $N$ that is not assigned to any agent in any $M \in \M$ without yielding an immediate contradiction to Pareto-optimality, as the matroid constraint may prevent us from assigning $o$ to $a$ in any matching $M \in \M$.

To overcome this issue, we construct an exchange graph (\cref{sec:matroid-exchange-graph}) on the agents based on the fundamental circuits generated by the 
objects assigned to them.
We then apply the bijective basis exchange property of matroids (\cref{sec:matroid-bijective-exchange}) to carefully constructed paths in the exchange graph leading from any ``red'' agent to a ``blue'' agent.
In \cref{sec:matroid-branching}, we show crucially that the constructed paths cannot intersect (\cref{lem:B-indegree}) and cannot form cycles (\cref{lem:B-nocycles}), from which we obtain the desired branching.

\subsection{The Objects $\emptyset_a$ and the Exchange Graph}
\label{sec:matroid-exchange-graph}

Recall that a basis of the matroid is an inclusionwise maximal independent set.
We modify the instance so that we can assume without loss of generality that the set of matched objects forms a basis.
We then introduce an exchange graph on the agents based on the concept of fundamental circuits of a basis.

\subsubsection*{Modified instance with objects $\emptyset_a$}
For notational convenience we apply the following transformation to the instance, after which all maximal matroid-constrained matchings $M$ have the property that $M(A)$ is a basis of~$\I$.
For this, we first introduce an object $\emptyset_{a}$ for every agent $a \in A$, and add the edge~$(a, \emptyset_a)$ to $E$.
We moreover introduce the preferences $o \succ_a \emptyset_a$ for all $o \in O$.
We replace the original matroid $\I$ by $$\I_{\emptyset} := \{I \cup \{\emptyset_a : a \in S\} : I \in \I, S \subseteq A, |I| + |S| \leq |A|\}.$$
This is again a matroid (it is the direct sum of the original matroid $\I$ and the free matroid on $\{\emptyset_a : a \in A\}$, truncated at $|A|$; all these operations are well-known to preserve the matroid properties).
Note that every {\Iconstr} matching in the original instance corresponds to an $\I_{\emptyset}$-constrained matching in the modified instance and vice versa, by assigning each unmatched agent $a$ to $\emptyset_a$.
Moreover, in the modified instance, any inclusionwise maximal $\I_{\emptyset}$-constrained
matching $M$ assigns all agents to an object (if there was an unmatched agent $a$, we could extend $M$ by assigning $a$ to $\emptyset_a$).
Hence $M(A)$ has cardinality $|A|$ and is thus a basis of $\I_{\emptyset}$.

For the remainder of \cref{sec:pareto-implies-condorcet-matroids}, we shall assume that the above transformations have been applied, i.e., there is $\emptyset_a \in O$ for $a \in A$ as described above, incident exclusively to agent $a$ in $G$, and $\I$ has been replaced by $\I_{\emptyset}$ as described above. 
This allows us to assume henceforth, without loss of generality, that $M(A)$ is a basis of $\I$ for all $M \in \M \cup \{N\}$.

\subsubsection*{Fundamental circuits and the exchange graph}
For a basis $X$ of $\I$ and $o \in O \setminus X$, the \emph{fundamental circuit of $x$ for $X$} is $C(X, o) := \{x \in X : X \setminus \{x\} \cup \{o\} \in \I\}$, i.e., the set of all elements of $X$ that can be exchanged for $o$ to obtain a new basis.
Given two {\Iconstr} matchings $M, N$, the \emph{exchange graph from $N$ into $M$} is the directed graph $D_{N \rightarrow M} = (A, E_{N \rightarrow M})$ on the set of agents $A$ with arc set $$E_{N \rightarrow M} := \{(a, b) : M(b) \in C(M(A), N(a))\},$$ i.e., there is an arc from agent $a$ to agent $b$ if the object assigned to $b$ in $M$ is on the unique circuit closed by adding the object $N(a)$ to the basis $M(A)$.

Note that an arc $(a, b)$ in the exchange graph from $N$ to $M$ indicates that we can change $M$ by assigning agent $a$ the object $N(a)$, as long as we unassign object $M(b)$ from agent $b$.
This idea effectively generalizes to cycles and paths in the exchange graph, which we formalize in \cref{lem:exchange-cycle,cor:exchange-path} below.
Both these results are a consequence of a result by \citet{frank1981weighted}; see \cref{app:matroids-exchange} for a complete proof. 

\begin{restatable}{lemma}{restateLemExchangeCycle}\label{lem:exchange-cycle}
    Let $M, N$ be two {\Iconstr} matchings in $G$ and let $C$ be a cycle in $D_{N \rightarrow M}$.
    Let $A(C)$ denote the set of agents on $C$ and let $a' \in A(C)$.
    Then there is $A' \subseteq A(C)$ with $a' \in A'$ such that $M'$ defined by $M'(a) := N(a)$ for all $a \in A'$ and $M'(a) := M(a)$ for all $a \in A \setminus A'$ is an {\Iconstr} matching in $G$.
\end{restatable}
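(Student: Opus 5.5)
We will derive \cref{lem:exchange-cycle} from the classical exchange lemma for matroids, essentially the result of \citet{frank1981weighted} (also due to Krogdahl): if $X$ is independent and $(x_1,y_1),\dots,(x_k,y_k)$ are pairs with $x_i \in X$, $y_i \notin X$, $y_i$ distinct, such that $X - x_i + y_i \in \I$ for all $i$, and there is no \emph{shortcut} ($X - x_i + y_j \notin \I$ for $i<j$ in a suitable ordering, i.e., the pairing is the unique perfect matching in the exchange graph restricted to these elements), then $X - \{x_1,\dots,x_k\} + \{y_1,\dots,y_k\} \in \I$. Set $X := M(A)$, a basis by the $\emptyset_a$-transformation. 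Along a cycle $C$ in $D_{N\to M}$, each arc $(a,b)$ provides a valid single exchange of $M(b)$ for $N(a)$. Since every agent on $C$ is both the tail of one arc and the head of one arc, the new objects $\{N(a): a\in A(C)\}$ and the removed objects $\{M(b): b \in A(C)\}$ are both indexed by $A(C)$. After the exchange, every agent in $A(C)$ holds $N(a)$ and every other agent keeps $M(a)$. This is a matching because $N$ and $M$ are matchings and the objects released by $A(C)$ are exactly those held in $M$ by $A(C)$. Two degenerate cases: if $N(a)\in M(A)$, we just need $N(a)=M(b)$ for the appropriate $b$; if $N(a)=M(a)$, there is nothing to exchange.

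The arcs of $C$ need not avoid shortcuts, so we take a minimal structure. Consider the bipartite exchange graph $H$ between the added elements $Y_C=\{N(a)\}$ and the removed elements $X_C=\{M(b)\}$ for $a,b\in A(C)$, with an edge whenever $M(b)\in C(M(A),N(a))$. The cycle $C$ yields a perfect matching in $H$. We then take a nonempty subset $A'\ni a'$ together with a perfect matching on $A'$ (each $a\in A'$ paired with some $b\in A'$) that is the unique perfect matching of $H[A']$. The exchange lemma then gives independence of $M(A)-M(A')+N(A')$, and since the removed and added sets are both indexed by $A'$, the resulting $M'$ is a matching.

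To find such an $A'$ containing $a'$, we argue as follows. Among all subsets $A'\subseteq A(C)$ with $a'\in A'$ for which $H[A']$ has a perfect matching (viewing the agent set on both sides), choose one of minimum size; $A(C)$ itself qualifies. We claim that the perfect matching of $H[A']$ is then unique. If two distinct perfect matchings exist, their symmetric difference contains an alternating cycle. Viewed as a permutation on $A'$, one matching composed with the inverse of the other decomposes into cycles, at least one of them nontrivial. Since $a'$ must be kept, we pick a cycle of this permutation not containing $a'$, or else use the alternative matching to split off a proper subset containing $a'$ that is still perfectly matchable. This contradicts minimality. Making this splitting argument precise is \textbf{the main obstacle}.

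A cleaner route, and the one we would try first, is to argue directly with shortcuts on the cycle. Suppose $C=(a_0=a',a_1,\dots,a_{k-1})$ and some chord $(a_i,a_j)$ of the exchange graph (with $j\ne i+1$) exists. The chord cuts $C$ into two shorter cycles, one of which contains $a'$. Induct on $|A(C)|$ and recurse into that shorter cycle. When no chords exist, $C$ is an induced cycle, so the arc pairing is the unique perfect matching in $H[A(C)]$: any other permutation would use a non-cycle arc, i.e., a chord. The exchange lemma then applies with $A'=A(C)$. This induction proves the lemma, with base case a loop (where $N(a)=M(a)$ or a single exchange).
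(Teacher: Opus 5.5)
Your plan matches the paper's: reduce to Frank's exchange lemma along a suitably minimal cycle through $a'$. The step meant to produce that cycle fails, though. In a \emph{directed} cycle, a chord $(a_i,a_j)$ does not cut $C$ into two shorter directed cycles. It closes only one, namely the chord followed by the cycle path from $a_j$ back to $a_i$. That cycle misses $a'$ whenever $a'$ is an internal node of the skipped path from $a_i$ to $a_j$, and likewise a loop at some $a_i\neq a'$ yields a cycle that misses $a'$. For example, take $M(a'),M(b),M(c)$ to be the unit vectors $e_1,e_2,e_3$, and $N(a'),N(b),N(c)$ to be parallel copies of $e_2$, $e_3$ and the vector $e_1+e_2$. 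Then $D_{N\to M}$ contains the cycle $a'\to b\to c\to a'$ together with the chord $(c,b)$. The only shorter cycle is $b\to c\to b$. The cycle through $a'$ is therefore already shortest but not induced, so your induction stalls and never reaches ``no chords $\Rightarrow$ unique perfect matching.'' Your first route is also unfinished. Alternating cycles in the symmetric difference of two perfect matchings of $H$ generally involve different agent sets on the $N$-side and the $M$-side, so they do not directly give a smaller $A''\ni a'$ with $H[A'']$ perfectly matchable.

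The missing observation is that you never need an induced cycle. The ordered hypothesis of the exchange lemma, which you state yourself, is exactly what minimality \emph{through $a'$} provides. Let $a_1=a',a_2,\dots,a_k$ be a cycle of $D_{N\to M}$ on nodes of $A(C)$ through $a'$ with $k$ minimum. Any arc $(a_i,a_j)$ with $j\ge i+2$, or $(a_i,a_1)$ with $i<k$, would close a shorter cycle through $a_1$. So the only extra arcs are $(a_i,a_j)$ with $2\le j\le i$, i.e.\ those jumping backwards over $a'$. With $y_i=N(a_i)$ and $x_i=M(a_{i+1})$ (where $a_{k+1}:=a_1$), this reads $x_j\notin C(M(A),y_i)$ for all $j>i$, which is a triangular condition. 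Frank's lemma, applied to the pairs in reverse order, then shows that $M(A)\setminus\{x_1,\dots,x_k\}\cup\{y_1,\dots,y_k\}$ is a basis, so $A'=\{a_1,\dots,a_k\}$ works. This is the paper's proof. The same triangularity shows that the cycle's pairing is the unique perfect matching of $H[A']$. Moreover, a minimum perfectly matchable $A'\ni a'$ is exactly the node set of such a shortest cycle. Hence the uniqueness claim of your first route is true, but it follows from this argument rather than from splitting off alternating cycles.
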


\begin{restatable}{corollary}{restateCorExchangePath}\label{cor:exchange-path}
    Let $M, N$ be two {\Iconstr} matchings in $G$ and let $P$ be an $a'$-$b'$-path in $D_{N \rightarrow M}$.
    Let $A(P)$ denote the set of agents on $P$.
    Then there is $A' \subseteq A(P)$ with $a', b' \in A'$ such that $M'$ defined by $M'(a) := N(a)$ for all $a \in A' \setminus \{b'\}$, $M'(b') = \emptyset_{b'}$, and $M'(a) := M(a)$ for all $a \in A \setminus A'$ is an {\Iconstr} matching in $G$.
\end{restatable}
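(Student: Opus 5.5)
The plan is to obtain \cref{cor:exchange-path} by turning the path into an exchange that the result of \citet{frank1981weighted} (the unique-matching exchange lemma) certifies directly. I treat \cref{lem:exchange-cycle} as a guide rather than a black box, because closing the path into a cycle seems to require changing $N$ in a way that may break its feasibility.

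\emph{Step 1: shortcut the path.} Let $P$ be an $a'$-$b'$-path in $D_{N \rightarrow M}$. Replace it by a shortest $a'$-$b'$-path using only arcs of $D_{N\rightarrow M}$ among nodes of $P$, and call it $a' = a_0, a_1, \dots, a_k = b'$. By minimality there is no arc $(a_i, a_j)$ with $j > i+1$. This is the standard ``no forward shortcut'' condition. Set $A' := \{a_0, \dots, a_k\} \subseteq A(P)$; it contains both $a'$ and $b'$.

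\emph{Step 2: the exchange.} Let $X := M(A)$, which is a basis by the transformation of \cref{sec:matroid-exchange-graph}. Consider
\[
X' := \bigl(X \setminus \{M(a_1), \dots, M(a_k)\}\bigr) \cup \{N(a_0), \dots, N(a_{k-1})\}.
\]
Pair each added element $N(a_{i})$ with the removed element $M(a_{i+1})$. Each pair is an exchange arc, since $M(a_{i+1}) \in C(X, N(a_i))$. Among the exchange relations between these sets we only have arcs $(a_i, a_j)$ with $j \le i+1$. So the bipartite exchange graph between added and removed elements, ordered by $i$, is ``upper triangular'' with the pairing on its diagonal, and it has a unique perfect matching. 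Frank's lemma (equivalently Krogdahl's unique-matching lemma) then gives that $X'$ is again a basis of the matroid.

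Some care is needed here. Elements $N(a_i)$ already in $X$, or repeated objects, should not occur: if $N(a_i) \in X$ its fundamental circuit is undefined and $a_i$ has no out-arcs. This is the bookkeeping I would check. I expect it to be the main obstacle, together with correctly citing the exact form of Frank's exchange result, namely the version with a prescribed partition into added and removed elements rather than a cycle.

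\emph{Step 3: reassemble a matching.} Define $M'(a) := N(a)$ for $a \in A'\setminus\{b'\}$, $M'(b') := \emptyset_{b'}$, and $M'(a) := M(a)$ otherwise.

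The object set is $M'(A) = X' \cup \{\emptyset_{b'}\}$. If $M(b') = \emptyset_{b'}$, this is just $X'$ with $\emptyset_{b'}$ restored. Otherwise $\emptyset_{b'}$ is a fresh free element, and $|X' \cup \{\emptyset_{b'}\}| = |A|$. Hence $M'(A)$ is independent in $\I_\emptyset$: it is the direct sum with the free matroid on the $\emptyset$-objects, truncated at $|A|$, and $X'$ restricted to the original ground set lies in $\I$.

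$M'$ is a matching. The objects $N(a_i)$ are distinct because $N$ is a matching. They do not clash with any retained $M(a)$ for $a\notin A'$, since both lie in the independent set $X'$ as distinct elements. The object $\emptyset_{b'}$ is adjacent only to $b'$. Thus $M'$ is an {\Iconstr} matching of the required form.
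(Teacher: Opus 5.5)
Your approach works, but Step 3 miscounts and needs repair.

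\textbf{The error in Step 3.} Your $X'$ removes only $M(a_1),\dots,M(a_k)$, so it still contains $M(a')$. Yet $a'$ now holds $N(a')$, so
$M'(A)=(X'\setminus\{M(a')\})\cup\{\emptyset_{b'}\}$, not $X'\cup\{\emptyset_{b'}\}$.
The set you wrote always has $|A|+1$ elements: even when $M(b')=\emptyset_{b'}$, that element was removed from $X'$. Because of the truncation it is therefore dependent in $\I_{\emptyset}$. So your independence argument, as written, is about the wrong set and would fail.

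\textbf{The fix.} It is one line. $X'\setminus\{M(a')\}$ is independent with $|A|-1$ elements. It does not contain $\emptyset_{b'}$: that object is adjacent only to $b'$, $M(b')$ was removed, and $N(a_i)\neq\emptyset_{b'}$ for $i<k$. Adding $\emptyset_{b'}$, which is free apart from the truncation, therefore stays within the bound $|A|$.

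\textbf{Step 2 bookkeeping.} Your concern is settled by your own remark. Each $a_i$ with $i<k$ has an out-arc on the path, so $N(a_i)\notin M(A)$. This, rather than distinctness inside $X'$, is the correct reason why $N(a_i)$ cannot coincide with a retained $M(a)$.

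\textbf{Comparison with the paper.} With this fix your proof is correct and takes a different route. The paper does exactly what you feared would break feasibility: it replaces $N(b')$ by $\emptyset_{b'}$. The resulting $N'$ is $\I_{\emptyset}$-constrained for the same truncation reason as above. Since $C(M(A),\emptyset_{b'})=M(A)$, agent $b'$ gets an arc to every agent in $D_{N'\rightarrow M}$, while all other arcs of $D_{N\rightarrow M}$ persist. So $P$ plus $(b',a')$ is a cycle, and the paper invokes \cref{lem:exchange-cycle}.

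Given the cycle lemma, the paper's route is a one-line reduction. The cycle lemma's own proof is essentially your shortest-path plus \cref{lem:frank-exchange} argument, applied to a shortest cycle through a prescribed node.

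Your direct route avoids this detour and delivers slightly more. Used as a black box, \cref{lem:exchange-cycle} only certifies that the one prescribed node lies in $A'$. A shortest cycle through $a'$ may bypass $b'$, for instance if an internal node of $P$ has an arc back to $a'$. Your shortest $a'$-$b'$ path puts both endpoints in $A'$ by construction, exactly as the statement requires.

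\textbf{Citing Frank's lemma.} If you cite \cref{lem:frank-exchange} in the paper's indexing, list the pairs $(N(a_i),M(a_{i+1}))$ in decreasing order of $i$. Then your `no forward shortcut' condition becomes precisely its hypothesis $x_j\notin C(S,y_i)$ for $j<i$. Your triangular unique-matching formulation already makes this implicit.
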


\subsection{Applying Bijective Basis Exchange}
\label{sec:matroid-bijective-exchange}

    Note that in the exchange graph defined above each agent may have a large in- and out-degree, corresponding to a large number of choices for determining which object assigned to some agent $b$ in $M$ had to be given up in $N$ in order to assign a potentially better object to $a$.
    We now apply the \emph{bijective basis exchange property} of matroids, stated in \cref{lem:bijective-basis exchange} below, to show that there is a consistent set of choices for each $M \in \M$ obtaining a subgraph $D_M$ of the exchange graph $D_{N \rightarrow M}$ that consists of node-disjoint cycles. 
    We will then use subpaths on these cycles to construct the desired branching.

    \begin{lemma}[Bijectve basis exchange property \citep{brualdi1969comments}]\label{lem:bijective-basis exchange}
        Let $(O, \mathcal{I})$ be a matroid and let $I_1, I_2$ be two bases of $\mathcal{I}$.
        Then there is a bijection $f: I_1 \rightarrow I_2$ such that for every $o \in I_1$ the set $I_1 \setminus \{o\} \cup \{f(o)\}$ is a basis of $\mathcal{I}$.
    \end{lemma}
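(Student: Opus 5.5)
The plan is to reduce the statement to Hall's marriage theorem on a suitable bipartite graph. Let $H$ be the bipartite graph with sides $I_1$ and $I_2$, where we treat common elements as two copies, one on each side. Put an edge between $x \in I_1$ and $y \in I_2$ exactly when $I_1 \setminus \{x\} \cup \{y\}$ is a basis. For $x = y \in I_1 \cap I_2$ this holds trivially. A perfect matching in $H$ is then exactly the desired bijection $f$, since $|I_1| = |I_2|$ because all bases have the same cardinality. So it suffices to verify Hall's condition $|N_H(S)| \geq |S|$ for every $S \subseteq I_1$.

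First I describe the neighbourhoods using fundamental circuits.
\begin{itemize}
\item For $y \in I_2 \setminus I_1$, the neighbours of $y$ are exactly the elements of $C(I_1, y)$, as defined in \cref{sec:matroid-exchange-graph}.
\item For $y \in I_1 \cap I_2$, the only neighbour of $y$ is $y$ itself. Indeed, for $x \neq y$ the set $I_1 \setminus \{x\} \cup \{y\}$ has fewer than $|I_1|$ elements and so is not a basis.
\end{itemize}

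Now fix $S \subseteq I_1$ and consider any $y \in I_2 \setminus N_H(S)$. I claim that $y$ lies in the span (closure) of $I_1 \setminus S$.
\begin{itemize}
\item If $y \in I_1 \cap I_2$, then $y \notin S$, since otherwise $y$ would be its own neighbour in $S$. Hence $y \in I_1 \setminus S$, which lies in its own span.
\item If $y \notin I_1$, then $C(I_1,y) \cap S = \emptyset$. The unique circuit in $I_1 \cup \{y\}$ is $C(I_1,y) \cup \{y\}$, so it is contained in $(I_1 \setminus S) \cup \{y\}$. Therefore $(I_1 \setminus S) \cup \{y\}$ is dependent, and $y$ is spanned by the independent set $I_1 \setminus S$.
\end{itemize}

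This gives $I_2 \setminus N_H(S) \subseteq \mathrm{span}(I_1 \setminus S)$. The set $I_2 \setminus N_H(S)$ is independent as a subset of the basis $I_2$. Its size is therefore at most the rank of $\mathrm{span}(I_1\setminus S)$, which is $|I_1 \setminus S| = |I_1| - |S|$. Since $|I_2| = |I_1|$, this yields $|N_H(S)| \geq |S|$, which is Hall's condition. Hall's theorem then gives a perfect matching, and hence the bijection $f$.

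The only step needing care is the second one: characterising the edges of $H$ via the unique circuit in $I_1 \cup \{y\}$ and converting "no neighbour in $S$" into "spanned by $I_1 \setminus S$". This uses the standard facts that adding one element to a basis creates a unique circuit and that independent subsets of the span of a set have size at most that set's rank. The paper only axiomatises matroids through independence, so I would briefly justify these facts from the exchange axiom, or cite \citep{korte2018combinatorial}, rather than re-derive them fully.
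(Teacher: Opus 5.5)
Your proof is correct. The neighbourhoods in $H$ are exactly the sets given by the paper's definition of $C(I_1,y)$, and the inclusion $I_2 \setminus N_H(S) \subseteq \mathrm{span}(I_1 \setminus S)$ is justified in both cases. The rank count then yields Hall's condition $|N_H(S)| \geq |S|$.

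The paper does not prove this lemma; it simply cites Brualdi. Your Hall's-theorem argument is the standard proof of the cited result, so it supplies exactly what the paper takes as given.
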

    For $M \in \M$, let $f_{M} : N(A) \rightarrow M(A)$ denote the bijection given by \cref{lem:bijective-basis exchange} for the bases $M(A)$ and $N(A)$ with $f_{M}(o) \in C(M(A)\cup \{o\})$ for all $o \in N(A)$.
    Consider the digraph $D_{M} = (A, E_M)$ with $$E_M := \{(a, b) \in A \times A : M(b) = f_{M}(N(a))\}.$$
    Note that $D_M$ is the subgraph of $D_{N \rightarrow M}$ resulting from the restriction to arcs corresponding to pairs of agents whose objects in $M$ and $N$ are paired in the bijection~$f_M$.

    Because $f_M$ is a bijection, $E_M$ contains exactly one outgoing arc and exactly one incoming arc for each agent (note that some agents may be singletons with a loop).  
    Therefore, $E_{M}$ consists of a union of pairwise node-disjoint directed cycles (some of which may be loops), and each node appears on exactly one such cycle.
    The following lemma, which is an immediate consequence of \cref{lem:exchange-cycle}, shows that any such cycle containing an agent from $A_+$ must also contain an node from $A_-$.

    \begin{lemma}\label{lem:AplusCycleAminus}
        Let $a_+ \in A_+$ and let $M \in \mathcal{M}$. Then the unique cycle in $D_{M}$ containing~$a_+$ also contains some node from $A_-$ (in particular, $a_+$ cannot be base of a loop in $D_{M}$).
    \end{lemma}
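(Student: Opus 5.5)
We argue by contradiction and derive a Pareto-improvement of $\M$ from \cref{lem:exchange-cycle}. Let $C$ be the unique cycle of $D_M$ through $a_+$, and suppose $A(C) \cap A_- = \emptyset$. Every arc of $D_M$ is an arc of $D_{N\rightarrow M}$, since $f_M(N(a)) \in C(M(A), N(a))$. Hence $C$ is also a cycle in $D_{N\rightarrow M}$; if $C$ is a loop at $a$, this means $M(a) = f_M(N(a))$. Applying \cref{lem:exchange-cycle} with $a' := a_+$ yields a set $A' \subseteq A(C)$ with $a_+ \in A'$ such that $M'$, which gives $N(a)$ to each $a \in A'$ and $M(a)$ to everyone else, is an \Iconstr{} matching. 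Before applying the lemma we dispose of the loop case $N(a) = M(a)$: then $a$ would be indifferent between $N$ and $M$, so this cannot happen at $a = a_+$ (recall $N \succ_{a_+} M$). Self-loops at other agents are harmless, since the lemma only needs a cycle.

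Set $\M' := (\M \setminus \{M\}) \cup \{M'\}$, so $|\M'| \le |\M|$, and show that $\M'$ Pareto-dominates $\M$. Agents outside $A'$ keep exactly their bundle, since $M'(a) = M(a)$, so they satisfy the weak condition with the same witnesses. For $a \in A' \subseteq A(C)$ we know $a \notin A_-$, i.e.\ $M'' \not\succ_a N$ for every $M'' \in \M$. So $N(a) = M'(a)$ is not worse than any element of the old bundle, and the weak condition holds for $a$ with witness $M'$. This is precisely the form of the second bullet of Pareto-domination: some $Y \in \M'$ with $Y \not\prec_a Z$ for all $Z \in \M$.

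The strict condition comes from $a_+ \in A'$ and $a_+ \in A_+$, which give $M'(a_+) = N(a_+) \succ_{a_+} M''(a_+)$ for all $M'' \in \M$. Thus $\M'$ dominates $\M$, contradicting Pareto-optimality. Hence $C$ contains a node of $A_-$. Since $a_+ \notin A_-$ (by asymmetry, $N \succ_{a_+} M''$ excludes $M'' \succ_{a_+} N$), $C$ contains a node other than $a_+$ and so is not a loop.

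The main point to handle carefully is that, under partial orders, the weak-domination condition for agents in $A' \setminus \{a_+\}$ must be checked against all of $\M$, not just against $M$. This is exactly where the hypothesis $A(C) \cap A_- = \emptyset$ is needed, rather than a statement about $M$ alone. One more degenerate case needs a remark: if $M' = M$ happens to coincide with another member of $\M$, then $\M'$ is smaller, which is still allowed.
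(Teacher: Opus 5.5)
Your proof is correct and follows the paper's argument: assume the cycle avoids $A_-$, apply \cref{lem:exchange-cycle} at $a_+$, and replace $M$ by the resulting $M'$ to obtain a set that Pareto-dominates $\M$. Your aside on loops is slightly muddled, but it is unnecessary: a loop at $a$ in $D_M$ means $f_M(N(a)) = M(a)$, not $N(a) = M(a)$, only a loop at $a_+$ could matter, and \cref{lem:exchange-cycle} applies to a loop just as to any other cycle.
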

    \begin{proof}
        Let $C$ be the unique cycle that contains $a_+$ in $D_{\phi}$ and let $A(C)$ be the nodes visited by $C$.
        By contradiction assume that $A(C) \cap A_- = \emptyset$.
        We apply \cref{lem:exchange-cycle} to the cycle $C \subseteq E_{N \rightarrow M}$ and the agent $a_+ \in C$ to obtain a subset of agents $A' \subseteq A(C)$ with~$a_+ \in A'$ and an~{\Iconstr} matching $M'$ with 
        $$M'(a) = \begin{cases}
            N(a) & \text{ for all } a \in A',\\
            M(a) & \text{ for all } a \in A \setminus A'.
        \end{cases}$$
        Note for all $\bar{M} \in \mathcal{M}$ it holds that $M'(a_+) = N(a_+) \succ_{a_+} \bar{M}(a)$, because $a_+ \in A' \cap A_+$, and that $M'(a) = N(a) \not\prec_a \bar{M}(a)$ for all $a \in A'$, because $A' \cap A_- = \emptyset$.
        Therefore, $\mathcal{M}' := \mathcal{M} \setminus \{M\} \cup \{M'\}$ dominates $\mathcal{M}$, contradicting Pareto-optimality of $\mathcal{M}$.
    \end{proof}

\subsection{Constructing the Branching}
\label{sec:matroid-branching}

    We will now construct a branching.
    For $M \in \mathcal{M}$, and $a \in A_+$, let $C^M_a$ be the unique cycle containing $a$ in the digraph $D_M$, as constructed above.
    By \cref{lem:AplusCycleAminus}, $C^M_a$ must contain a node from $A_-$, and we let $b^M_a \in A_-$ denote the first agent in $A_-$ that one encounters when traversing $C^M_a$ starting from $a$.
    Let $P^M_a$ denote the $a$-$b^M_a$-path in~$C^M_a$.
    
    Note that for $a, a' \in A_+$, the paths $P^M_a, P^M_{a'}$ are either node-disjoint, or one is contained in the other (which happens when $a$ and $a'$ are on the same cycle in $D_M$ and not separated by nodes from~$A_-$).
    Let $A^{M}_+ \subseteq A_+$ denote the set of agents $a \in A_+$ such that~$P^M_a$ is not contained in $P^M_{a'}$ for any $a' \in A_+ \setminus \{a\}$.
    Let $B := \bigcup_{M \in \M, a \in A_+^M} P^M_a$. 
    See \cref{fig:matroid-branching} for an example of the construction.

    We will show that $B$ is a branching.
    We first establish the following lemma, showing the paths $P^M_a$ for different $M \in \M$ and $a \in A^M_+$ are node-disjoint except for their starting nodes.
    In particular, every node in $A$ therefore is entered by at most one such path and hence the indegree of any node in $B$ is no larger than $1$.

    \begin{figure}
        \centering
        \includegraphics[width=\linewidth]{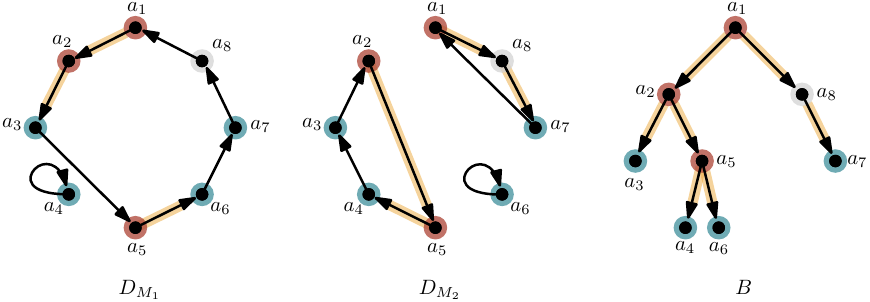}
        \caption{%
        Example for the construction of the branching $B$ for matroid-constrained matchings $\M = \{M_1, M_2\}$ and agents $A = \{a_1, \dots, a_8\}$.
        The agents in $A_+ = \{a_1, a_2, a_5\}$ are colored red, the agents in $A_- = \{a_3, a_4, a_6, a_7\}$ are colored blue, and the remaining agent $a_8$ is color grey.
        The left and middle figure depict the digraphs $D_{M_1}$ and $D_{M_2}$ respectively, with the arcs on paths $P^{M}_a$ highlighted. 
        Note that $b^{M_1}_{a_1} = a_3$ is the first blue agent encountered when traversing the cycle in $D_{M_1}$ starting from $a_1$, yielding the path $P^{M_1}_{a_1} = \{(a_1, a_2), (a_2, a_3)\}$.
        Note that $P^{M_1}_{a_2} = \{(a_2, a_3)\} \subseteq P^{M_1}_{a_1}$ and $P^{M_1}_{a_5} = \{(a_5, a_6)\}$. Thus $A^{M_1}_+ = \{a_1, a_5\}$.
        Similarly, $b^{M_1}_{a_2} = a_4$ with $P^{M_2}_{a_2} = \{(a_2, a_5), (a_5, a_4)\}$, which contains $P^{M_2}_{a_5} = \{(a_5, a_4)\}$.
        Hence $A^{M_2}_+ = \{a_1, a_2\}$.
        The figure to the right indicates the resulting matching $B = P^{M_1}_{a_1} \cup P^{M_1}_{a_5} \cup P^{M_2}_{a_1} \cup P^{M_2}_{a_2}$.
        }
        \label{fig:matroid-branching}
    \end{figure}

    \begin{restatable}{lemma}{restateLemBindegree}\label{lem:B-indegree}
        For every $a' \in A$, there is at most one $M \in \M$ and $a \in A^{M}_+ \setminus \{a'\}$ such that $a'$ is contained in $P^M_a$.
        In particular, $a'$ has in-degree at most $1$ in $B$.
    \end{restatable}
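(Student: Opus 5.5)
The plan is to prove the first claim, from which the in-degree bound follows directly. Every arc of $B$ entering $a'$ lies on some path $P^M_a$ with $a \in A^M_+$ that contains $a'$ as a non-initial node, so $a \neq a'$. A single path enters $a'$ at most once. So if $a'$ lies in at most one such path, its in-degree in $B$ is at most $1$. I split the claim into two cases: the two paths come from the same matching $M$, or from two different matchings $M \neq M'$.

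\emph{Same matching.} Suppose $a' \in P^M_a \cap P^M_c$ with $a \neq c$, both in $A^M_+ \setminus \{a'\}$. Then $a, c, a'$ lie on a common cycle of $D_M$. Both paths run forward along that cycle and stop at the first node of $A_-$. Hence two such paths that share a node are nested, as already observed before the lemma. This contradicts the maximality in the definition of $A^M_+$.

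\emph{Different matchings.} Suppose $a' \in P^M_a$ and $a' \in P^{M'}_c$, with $M \neq M'$, $a \in A^M_+$, $c \in A^{M'}_+$, and $a, c \neq a'$. I will contradict Pareto-optimality of $\M$.

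Since $\succ_{a'}$ is asymmetric, $M(a') \succ_{a'} M'(a')$ and $M'(a') \succ_{a'} M(a')$ cannot both hold. By symmetry, assume $M'(a') \not\prec_{a'} M(a')$. So $a'$ can afford to lose its object in $M$, as long as $M'$ is left untouched.

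Let $Q$ be the prefix of $P^M_a$ from $a$ to $a'$. It is an $a$-$a'$-path in $D_M \subseteq D_{N \rightarrow M}$. By \cref{cor:exchange-path} there is $A' \subseteq A(Q)$ with $a, a' \in A'$ and an {\Iconstr} matching $\hat M$ with these assignments:
\begin{itemize}
\item $\hat M(x) = N(x)$ for $x \in A' \setminus \{a'\}$;
\item $\hat M(a') = \emptyset_{a'}$;
\item $\hat M(x) = M(x)$ for all other agents $x$.
\end{itemize}
Set $\M' := \M \setminus \{M\} \cup \{\hat M\}$, so $|\M'| = |\M|$. I check that $\M'$ dominates $\M$, agent by agent.
\begin{itemize}
\item Agent $a \in A_+$ now receives $N(a)$, which is strictly preferred to every object $a$ gets in $\M$. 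This gives the strict improvement.
\item Every other node of $Q$ before $a'$ lies strictly before the first $A_-$-node of $P^M_a$, so it is not in $A_-$. For such $x$, $N(x) \not\prec_x \bar M(x)$ for all $\bar M \in \M$, so $x$ is weakly fine with its new object.
\item Agent $a'$ keeps $M'(a')$ in $M' \in \M'$, and $M'(a')$ is not worse than either $M(a')$ or $M'(a')$. For any third matching in $\M$, $a'$ keeps that matching's object as well. I still need to check that the best-object condition for $a'$ holds, i.e.\ that some element of $\M'$ is not $\prec_{a'}$ any element of $\M$; I expect this to work by choosing, among all of $a'$'s objects in $\M$, a maximal one and only giving up a different one.
\item All remaining agents are unaffected.
\end{itemize}

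The main obstacle is exactly this bookkeeping for $a'$ when $|\M| > 2$ and its preferences are only partial. Asymmetry only guarantees a good choice between the two matchings $M$ and $M'$. I would refine the argument as follows. Among the objects $a'$ receives in $\M$, pick a maximal one, say from $M^\ast$. Then $M^\ast(a') \not\prec_{a'} \bar M(a')$ for every $\bar M \in \M$, by maximality. At least one of $M, M'$ differs from $M^\ast$, and I perform the exchange in that one. This is possible because $a'$ lies on both the $M$-path and the $M'$-path, so the prefix argument applies to either. Then $a'$ keeps $M^\ast(a')$ in $\M'$, and the domination goes through.
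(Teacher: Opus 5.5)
Your proof is correct and follows the same route as the paper. Within one matching you use that the paths $P^M_a$, $a\in A^M_+$, are node-disjoint; for two different matchings you apply \cref{cor:exchange-path} to the prefix of one path up to $a'$ and replace that matching, which contradicts Pareto-optimality. Your only departure is how you pick the matching to modify: you modify one other than the matching giving $a'$ a maximal object, where the paper uses the pairwise assumption $M_2(a')\not\prec_{a'}M_1(a')$. This makes explicit that a single member of $\M'$ is undominated against all of $\M$ for $a'$, which the paper's write-up checks only in the weaker ``for every $M$ there is some $\bar M$'' form (its pairwise assumption does in fact suffice).
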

    \begin{proof}[Sketch of proof]
        By contradiction assume that for $a' \in A$ there are $M_1, M_2 \in \mathcal{M}$ with $M_1 \neq M_2$ and $a_1, a_2$ such that $a'$ is contained in both in~$P^{M_1}_{a_1}$ and $P^{M_2}_{a_2}$.
        Then we can construct a set of {\Iconstr} matchings $\M'$ Pareto-dominating $\M$, yielding a contradiction.

        Without loss of generality, we may assume $M_2(a') \not\prec_{a'} M_1(a')$ (otherwise swap $M_1$ and $M_2$).
        Hence we can apply \cref{cor:exchange-path} to the $a_1$-$a'$-subpath $P_1$ of $P^{M_1}_{a_1}$ to obtain an improved matching $M'_1$ in which some agents from $A_+$ receive the more favored object $N(a)$ instead of $M_1(a)$, while the only agent who is worse off in $M'_1$ ais $a'$ with $M'_1(a') = \emptyset$.
        Despite the latter, replacing $M_1$ by $M'$ in $\M$ results in a set of matchings $\M$ Pareto-dominating $\M$, as $M_2(a') \not\prec_{a'} M_1(a')$ prevents $a'$ from strictly preferring $\M$ over $\M'$.
        See \cref{app:matroids-branching-indegree} for a complete and formal proof.
    \end{proof}

    We further show that $B$ does not contain any cycles, implying that $B$ is indeed a branching.
    
    \begin{restatable}{lemma}{restateLemBnoCycles}\label{lem:B-nocycles}
        There is no directed cycle contained in $B$.
    \end{restatable}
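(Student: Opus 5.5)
We argue by contradiction, mirroring the cycle argument in the proof of \cref{thm:warm-up}. Suppose $B$ contains a directed cycle $C$. Every arc of $B$ lies on some path $P^M_a$ with $M \in \M$ and $a \in A^M_+$. Each such path ends at the first node of $A_-$ on its cycle in $D_M$, so every node on it other than its last node lies outside $A_-$. A node of $A_-$ is therefore never the tail of an arc of $B$, hence $C$ avoids $A_-$. Consequently, every arc of $C$ is an interior arc (not the final arc into $b^M_a$) of some $P^M_a$, or at least its tail is not in $A_-$. Label each arc $e$ of $C$ with the matching $M_e \in \M$ for which $e \in E_{M_e} \subseteq E_{N \to M_e}$. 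By \cref{lem:B-indegree}, each node of $C$ is entered by a unique arc, so this labeling is well defined. Partition the nodes of $C$ into classes $A_M(C) := \{a \in A(C) : \text{the arc of } C \text{ leaving } a \text{ is labelled } M\}$.

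Next, split $C$ into maximal segments of consecutive arcs with the same label. Each segment is a path in $D_M \subseteq D_{N\to M}$ whose nodes lie in $A \setminus A_-$. At the junction between two segments with labels $M \neq M'$, the node $v$ is entered via $M$-arcs and left via an $M'$-arc. I would modify every $M \in \M$ simultaneously: on each $M$-segment from $u$ to $v$, apply \cref{cor:exchange-path} to obtain $M'$ in which a subset of the segment's nodes, including $u$, receive $N(\cdot)$, and $v$ is set to $\emptyset_v$. The node $v$ is then freed in $M$ but starts an $M'$-segment, so it receives $N(v)$ in $M'$ (or at least some matching of the new set gives it $N(v)$, since it is the starting point $a'$ of the corollary). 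Since no node of $C$ is in $A_-$, we have $N(a) \not\prec_a \bar M(a)$ for every $\bar M \in \M$. Hence every node touched is weakly better off in the new set, and the nodes of $A_+$ on $C$ become strictly better off. A node of $A_+$ exists on $C$ because the tails of the paths $P^M_a$ are in $A_+$. The new set therefore Pareto-dominates $\M$, which is a contradiction.

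The main obstacle is that \cref{cor:exchange-path} is applied segment by segment, independently, so its guarantees do not compose. It produces a subset $A'$ of the segment that need not include every node, and several segments with the same label $M$ must be modified together in one matroid. The clean way around this is to avoid segment-wise exchanges and work with the whole cycle. For a fixed $M$, the union of its segments on $C$ together with the arcs of $D_M$ closing them is contained in cycles of $D_M$, and independence follows from the bijection $f_M$ together with the Frank-type exchange result behind \cref{lem:exchange-cycle}. Concretely, the plan is to apply \cref{lem:exchange-cycle} in $D_{N\to M}$ to the cycle formed by \emph{shortcutting}: replace each $M$-segment's endpoint $v$ by giving $v$ its dummy $\emptyset_v$, which is a free element. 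A single application per $M$ then yields $M'$ in which all first nodes of $M$-segments receive $N$, and each junction node is either assigned $N$ in $M$'s replacement or covered by the next label's replacement. If minimality issues arise because $A'$ is a proper subset, I would pick $C$ to be a shortest cycle in $B$, so that chords can be exploited as in standard matroid-intersection arguments.
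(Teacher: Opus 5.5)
Your setup agrees with the paper's. $C$ avoids $A_-$, it splits into maximal same-label segments, and every junction node is the start of the next segment. Your Pareto-domination bookkeeping is also right, provided every segment start receives $N(\cdot)$ in the replacement of its own label. You also locate the crux correctly, but your proposed fix does not work.

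\cref{lem:exchange-cycle} only guarantees that the \emph{one} designated node lies in $A'$. In your shortcut construction, each segment end $v$ with $N'(v)=\emptyset_v$ has an arc to every agent of $D_{N'\rightarrow M}$, since $C(M(A),\emptyset_v)=M(A)$. So the minimal cycle through a designated start $u_1$, which the lemma's proof uses, is contained in the first $M$-segment closed by the arc $(v_1,u_1)$. Nothing is guaranteed for the other starts $u_2,\dots,u_r$. Each junction node is both the end of one segment, where it loses its object to a dummy, and the start of the next, where it must get $N(\cdot)$ back. So a single missed start already breaks the domination.

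Applying the exchange segment by segment does not repair this. After exchanging along one $M$-segment, a later $M$-segment need not be a path in the exchange graph of the updated matching. For $x$ on the later segment, $C(M(A),N(x))$ may contain $M(y)$ for some $y$ on the earlier segment that has just been removed. The bijection $f_M$ does not help, because it certifies only single exchanges, not simultaneous ones along several segments.

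The missing ingredient is a triangularity condition together with a way to enforce it. The paper orders the same-label paths along the cycle and requires that, for each $M\in\mathcal{M}$, there is no arc of the \emph{full} exchange graph $D_{N\rightarrow M}$ from a later $M$-path into an earlier one. Under this condition an exchange along an earlier path leaves the circuits $C(M(A),N(x))$ of later nodes unchanged. Hence \cref{cor:exchange-path} can be applied path after path, which is \cref{cor:exchange-path-multiple}, and all starts receive $N$ while all ends receive their dummies.

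If an offending arc exists, the paper reroutes the cycle through it. This gives a strictly shorter cycle that still avoids $A_-$ and still decomposes into paths of $D_{N\rightarrow M_i}$ between $A_+$-nodes, with consecutive labels distinct. The rerouted cycle uses arcs that are generally not in $B$, nor even in $D_M$. The minimality argument therefore has to range over this larger class of cycles in $\bigcup_{M\in\mathcal{M}} D_{N\rightarrow M}$. Choosing $C$ shortest in $B$, as you suggest, does not exclude the relevant chords.
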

    \begin{proof}[Sketch of proof]
        Given a cycle $C$ in $B$, we once more construct a set of {\Iconstr} matchings $\M'$ that Pareto-dominates $\M$, yielding a contradiction.
        Similar to the proof of \cref{thm:warm-up}, we achieve this by changing the assignment of some agents whose out-arc on $C$ comes from a path $P^M_a$ for some $M \in \M$ by replacing $M(a)$ with $N(a)$.
        However, we have to be careful in order not to violate the matroid constraint in any matching.
        For this, we employ a generalization of \cref{cor:exchange-path} that shows that reassignment of agents on multiple paths is possible under certain conditions, which can be met by iteratively shortening the cycle.
        The complete details of the proof, including a proof of the aforementioned generalization of \cref{cor:exchange-path}, can be found in \cref{app:matroids-branching-no-cycle}.
    \end{proof}
    
    It remains to show that all 
    leafs in non-singleton components in $B$ are in $A_-$ and that all nodes in $A_+$ have out-degree $|\M|$.
    For this, recall that $B$ is the union of paths $P^M_a$, each of which ends with a node from $A_-$. 
    Thus, all 
    leafs in non-singleton components in $B$ are in $A_-$.
    Moreover, every node in $a \in A_+$ is contained in $|\M|$ paths $P^M_a$.
    Note that $a$ is not the final node in any of these paths 
    since there is an out-arc $(a, b_M) \in P^M_a$ for each $M \in \M$.
    By \cref{lem:B-indegree}, we have $b_M \neq b_{M'}$ for $M \neq M'$ and hence each path contributes a different out-arc to $a$, i.e., every $a \in A_+$ has out-degree $|\M|$ in $B$.
    This completes the proof of \cref{lem:Pareto-vote-count} and therefore also of \cref{thm:Pareto-implies-Condorcet}.

\section{Bounds on the Condorcet Dimension under Partial-Order Preferences}
\label{sec:partial-orders}

While \cref{thm:Pareto-implies-Condorcet} establishes that Pareto-optimality implies popularity for sets of (matroid-constrained) matchings even under partial-order preferences, this does not immediately yield a bound on the Condorcet dimension for these general preference structures.
The reason is that---in contrast to the case of weak rankings---the existence of Pareto-optimal solutions is no longer guaranteed under partial-order preferences (see \Cref{sec:prelim} for an example).

Indeed, in \cref{sec:partial-order-matchings-lb} we provide a family of matching instances in which the Condorcet dimension (and thus also the size of a minimum-cardinality Pareto-optimal set of matchings) is at least $\sqrt{|A|}$.
This bound is in fact tight (up to a factor of $\sqrt{2}$), which we show in \cref{sec:partial-order-matchings-ub} by providing an efficient algorithm for constructing a popular set of matchings of size at most $\lceil \sqrt{2|A|} \rceil$ under partial-order preferences.
In \cref{sec:partial-order-matchings-matroids}, we show that when additionally allowing matroid constraints, the Condorcet dimension can be as high as $|A|$, which trivially is tight because $|A|$ matchings suffice to match every agent to an object that is undominated w.r.t.~their preferences.

\subsection{A Lower Bound on the Weak Condorcet Dimension for Matchings on Partial-order Preferences}
\label{sec:partial-order-matchings-lb}

We construct a sequence of instances of increasing size with Condorcet dimension $k + 1$ and $k^2 + k + 1$ agents, establishing that the Condorcet dimension for matchings is lower bounded by $\sqrt{|A|}$.

\begin{theorem}
    For every $k \in \mathbb{N}$, there exists a matching instance with $k^2 + k + 1$ agents with partial order preferences that has weak Condorcet dimension $k + 1$.
\end{theorem}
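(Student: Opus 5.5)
We will use the instance already suggested by the three-agent example in \Cref{sec:prelim} and scale it up. Let $n := k^2+k+1$, and let $G$ be the complete bipartite graph with agents $A=\{a_0,\dots,a_{n-1}\}$ and objects $O=\{o_0,\dots,o_{n-1}\}$. All agents get the same preference: $o_i \succ o_j$ if and only if $j-i \ge k+1$. This relation is irreflexive and asymmetric, and it is transitive because the index gaps add up. For $k=1$ it is exactly the example $o_0 \succ o_2$ from \Cref{sec:prelim}. The undominated objects are precisely the \emph{top} objects $o_0,\dots,o_k$, and there are $k+1$ of them.

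\emph{Upper bound.} The number of top slots available in $k+1$ matchings is $(k+1)^2 \ge n$. So we can choose $k+1$ matchings such that every agent receives a top object in at least one of them: distribute the agents over the top objects, at most $k+1$ agents per object, and put each copy into a different matching. In every agent's partial order, no object is strictly preferred to a top object. Hence $\varphi(N,\M)=0$ for every $N$, and the set is weakly Condorcet-winning. This gives weak Condorcet dimension at most $k+1$.

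\emph{Lower bound.} Let $\M$ be any set of at most $k$ matchings. We must exhibit $N$ with $\varphi(N,\M)>\varphi(\M,N)$. For each agent $a$, let $t(a)$ be the smallest index of an object $a$ receives in $\M$; set $t(a)=\infty$ if $a$ is unmatched in every matching of $\M$. Counting top slots, at most $k(k+1)=n-1$ agents have $t(a)\le k$, so some agent $a^*$ receives no top object in $\M$. We first try a shift construction generalizing the cyclic shift from the three-agent example. Order the agents by $t(a)$, breaking ties suitably, and give them the objects $o_0,\dots,o_{n-1}$ in a rotated order chosen so that $a^*$ ends up with a top object that is strictly better than everything it holds in $\M$.

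The key observation is that for an agent with best object $o_{t}$ in $\M$, we have $\M \succ_a N$ only if $N(a)=o_j$ with $j \ge t+k+1$, which is a large jump. A shift by one position moves each agent's object index by at most $1$ relative to its rank, so that should not happen. This keeps $\varphi(\M,N)$ at $0$, or at most the number of wrap-around agents, while $a^*$ strictly improves. Note that agents can hold up to $k$ objects each in $\M$ and objects can be reused across matchings, so the number of agents with $t(a)\le j$ is at most $k(j+1)$.

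\emph{Main obstacle.} The delicate step is the counting that makes the shift work. We need to check that with $n=k^2+k+1$ agents there is always a rotation for which the agents whose new index exceeds $t(a)+k$ are strictly fewer than those who receive an object better than all of theirs. Unmatched agents, with $t=\infty$, help here. We expect the exact choice $n=k^2+k+1$ to be what makes the pigeonhole bound $k(j+1)$ against $n$ tight. If the pure shift fails, the fallback is an explicit case analysis on the profile $(t(a))_{a\in A}$, choosing $N$ greedily from the agent $a^*$.
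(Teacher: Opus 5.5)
There is a genuine gap, and it goes beyond the unfinished counting step. For $k\ge 2$ your instance does \emph{not} have weak Condorcet dimension $k+1$, so no choice of rotation and no case analysis on $(t(a))_{a\in A}$ can complete the lower bound.

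As you observed, in your instance an agent's comparison with $\M$ depends only on $t(a)$. She is better off in $N$ iff $N(a)=o_j$ with $j\le t(a)-k-1$. She is worse off iff $N(a)=\emptyset$ or $j\ge t(a)+k+1$. Now take $k$ matchings in which each top object $o_0,\dots,o_k$ is used once in every matching, each time by a different agent. Then exactly $k$ agents have $t(a)=i$ for each $i\le k$. The one remaining agent $a^*$ holds $o_{k+1}$.
\begin{itemize}
\item No agent with $t(a)\le k$ can ever be better off.
\item $a^*$ is better off only if $N(a^*)=o_0$.
\item In that case, the $k(k+1)$ agents with $t(a)\le k$ avoid being worse off only by receiving objects from $\{o_1,\dots,o_{2k}\}$. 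So at least $k^2-k\ge 2$ of them are worse off.
\end{itemize}
Hence $\mu(\M,N)\ge 0$ for every $N$, i.e., $k$ matchings already form a popular set. For $k=2$: six agents hold $o_0,o_1,o_2$ twice each and the seventh holds $o_3$. If the seventh gets $o_0$ in $N$, the other six compete for the four safe objects $o_1,\dots,o_4$. Your bound $k(j+1)$ on the number of agents with $t(a)\le j$ is exactly what permits this crowding at the top. The chain-like order then leaves only $2k$ safe objects for $k(k+1)$ crowded agents.

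The missing idea is a preference structure in which the agents displaced from the contested top object have somewhere harmless to go. The paper uses one object $o^*$ that is preferred to each of $k^2$ objects in $O_2$, plus a set $O_1$ of $k$ objects that are incomparable to everything. The argument runs as follows.
\begin{itemize}
\item For any $k$ matchings, at most $k$ agents ever hold $o^*$.
\item Since $k(k+1)<|A|$, some agent $a'$ never holds an object of $O_1\cup\{o^*\}$.
\item Let $N$ give $o^*$ to $a'$ and objects of $O_1$ to the former holders of $o^*$, and complete $N$ to a perfect matching.
\end{itemize}
Then $a'$ is strictly better off and nobody is worse off. The only strict comparisons are $o^*\succ o$ for $o\in O_2$ and $o\succ\emptyset$, and $N$ avoids both harms. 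Your upper-bound argument carries over verbatim, since that instance again has exactly $k+1$ undominated objects and $(k+1)^2\ge |A|$.
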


\begin{proof}
Consider the following matching instance for $k \in \mathbb{N}$.
Let~\mbox{$G = (A \cup O, E)$} be the complete bipartite graph with $k^2 + k + 1$ nodes on each side.
The set of objects~$O$ is partitioned into three sets: $\{o^*\}$, $O_1$, and  $O_2$, where $|O_1| = k$ and $O_2 = k^2$.
All $k^2 + k + 1$ agents have identical preferences, namely they prefer $o^*$ over any object from $O_2$ and are indifferent about any other comparison.
Consider any set of at most $k$ matchings $\mathcal{M}$ in $G$.
Let $A^* \subseteq A$ be the set of agents assigned to $o^*$ in at least one matching in $\mathcal{M}$.
Note that $|A^*| \leq k$.
Furthermore, there is at least one agent $a' \in A$ that is not assigned to any object in $O_1 \cup \{o^*\}$ in any of the matchings in $\mathcal{M}$, because $k \cdot |O_1 \cup \{o^*\}| = k^2 + k < |A|$.
Construct the matching $N$ by assigning $a'$ to $o^*$, assigning all agents in $A^*$ to some object in $O_1$, and assigning all remaining agents arbitrarily to the remaining objects.
Note that $a^*$ prefers $N$ over any matching in $\mathcal{M}$ (as in any of these matchings, they are either assigned to an object from $O_2$ or not at all) and no agent prefers any matching in $\mathcal{M}$ over $N$ (as all agents in $A^*$ are assigned an object in $O_1$ and all other agents are not unassigned).
Hence, $\phi(\mathcal{M}, N) < 0$, i.e., $\mathcal{M}$ is not popular.
Hence the weak Condorcet dimension in the constructed instance is at least~$k + 1 > \sqrt{|A|} = \sqrt{k^2 + k + 1}$.
\end{proof}

\subsection{An Algorithmic Upper Bound on the Weak Condorcet Dimension for Matchings on Partial-order Preferences}
\label{sec:partial-order-matchings-ub}

We now describe an algorithm that computes a popular set of at most $\lceil \sqrt{2|A|} \rceil$ matchings for any given matching instance with partial-order preferences, thus matching the lower bound given above up to a factor of $\sqrt{2}$.

\begin{theorem}\label{thm:partial-order-ub}
	There is an algorithm that, given a bipartite graph $G = (A \cup O, E)$ and partial order preferences $\prec_a$ for each agent $a \in A$, computes a popular set of matchings in~$G$ containing at most $\lceil\sqrt{2|A|}\rceil$ matchings.
	In particular, the weak Condorcet dimension of any matching instance under partial order preferences is at most $\lceil\sqrt{2|A|}\rceil$.
\end{theorem}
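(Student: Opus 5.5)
The plan is to avoid global Pareto-optimality, which may fail to exist under partial orders, and to use only the \emph{local} improvement steps that the proofs of \cref{thm:warm-up} and \cref{lem:colored-branching} actually need. Concretely, set $k := \lceil\sqrt{2|A|}\rceil$. I would maintain a $k$-matching $M$ in $G$, where object capacities are multiplied by $k$ and each agent $a$ holds a set $M(a)$ of objects. Throughout, $M(a)$ is kept an antichain with respect to $\succ_a$, and the $k$-matching is later decomposed into $k$ matchings $\M$. An agent compares $\M$ with a competitor $N$ through the set $M(a)$: the agent is red if $N(a)\succ_a o$ for all $o\in M(a)$, and blue if some $o\in M(a)$ satisfies $o\succ_a N(a)$. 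Popularity of $\M$ then amounts to $|A_{\text{R}}|\le|A_{\text{B}}|$ for every matching $N$.

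The first step is to isolate a \emph{local stability} condition under which the argument of \cref{thm:warm-up} goes through. Given $N$, draw arcs from each red agent $a$ to the agents holding $N(a)$ in $M$. I want every red agent to have out-degree at least $2$ into non-grey agents, and the arc set to contain a branching satisfying \cref{lem:colored-branching}. As in the warm-up, violations should yield explicit improvement moves:
\begin{enumerate}
\item If $N(a)$ has spare capacity, or is held by at most one agent, give $N(a)$ to $a$ and delete from $M(a)$ everything $N(a)$ dominates, or take one copy from its holder.
\item If the arcs contain a cycle, rotate objects along the cycle exactly as in the warm-up proof.
\end{enumerate}
Indegree at most $1$ should again follow by charging each copy of an object to one arc, after fixing the decomposition into $k$ matchings and the choice of copies consistently.

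The second step is to argue that such improvements terminate within polynomially many rounds. This is where the capacity $k$ with $k(k+1)/2\ge|A|$ must enter, since the example in \cref{sec:prelim} shows that naive swapping can cycle. My first attempt is a lexicographic potential. Each agent's held set only gains objects that are strictly better than everything it currently holds, so every improvement strictly raises the height of the held antichain for the receiving agent. I then want to show that an agent who is forced to give up an object must hold many objects, so that its loss is bounded. The quantity $k(k+1)/2$ suggests the following mechanism. Order agents by the round at which they first become saturated. The $j$-th saturated agent needs at most $j$ copies before further improvements are blocked by agents already holding those objects. Summing $1+2+\dots+k$ copies against $|A|$ agents then shows that $k$ layers suffice before every red agent meets two distinct holders.

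The main obstacle is exactly this termination and counting argument. I must show that when the procedure stops, no $k$-matching with fewer than $k$ copies per object can still admit a red agent whose object has fewer than two non-grey holders. I would first try an amortization: each object $o$ with $j<k$ holders certifies one still-improvable agent, and the sum of holder counts is bounded by $\binom{k+1}{2}\ge|A|$. If this fails, my fallback is to assign agents to levels $1,\dots,k$ greedily, so that level-$i$ agents hold $i$ objects. One then checks directly that every red agent has two distinct holders by pigeonhole on the levels.
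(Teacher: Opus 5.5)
There is a genuine gap. It lies less in termination than in what your terminal state certifies. Your reduction to the warm-up rests on the claim that violations yield improvement moves. In the proof of \cref{thm:warm-up}, every holder $b$ of $N(a)$ is automatically non-grey, because $N(b)\neq M(b)$ forces a strict comparison under strict rankings. Under partial orders, $N(b)$ may be incomparable to everything in $M(b)$. So $N(a)$ can be at full capacity with only grey holders, and then neither of your moves applies.

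Take the lower-bound instance of \cref{sec:partial-order-matchings-lb} with $7$ agents, and capacity $2$. Let two agents hold $o^*$ and the other five hold objects of $O_2$. For every $N$, the only possible red agent is an $O_2$-holder receiving $o^*$; that object is full and its holders are not red, so no move applies. Yet consider the $N$ that gives $o^*$ to an $O_2$-holder, the two objects of $O_1$ to the two $o^*$-holders, and $O_2$ to the other four agents: it has one red agent and no blue agent.

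More generally, for $k\ge 2$ your move (1) always uses spare capacity. Both moves only give red agents an object dominating everything they held, so nobody ever gets worse. A potential such as the total size of the down-closures of the held sets then shows that your moves terminate for \emph{every} $k\ge 2$. If the terminal state certified popularity, the weak Condorcet dimension would be at most $2$, contradicting the $\sqrt{|A|}$ lower bound. So the size of $k$ must enter the certificate itself, not just termination.

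A second break is the indegree bound needed for \cref{lem:colored-branching}. In the warm-up it came from each agent keeping a single object. Here, an agent holding several incomparable objects receives an arc from every red agent whose $N$-object it holds. These are distinct agents because $N$ is a matching, so no choice of copies or of the decomposition helps. Your triangular-number heuristic and the ``levels'' fallback do not say how $k=\lceil\sqrt{2|A|}\rceil$ repairs either problem.

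The idea you are missing is to drop exchange arguments and build the $k$-matching by peeling Hall violators. Restrict each remaining agent to its undominated remaining objects. If no $k$-matching covers the remaining agents, Hall's theorem gives a set $X$ with $k|\Gamma(X)|<|X|$. Since $k^2\ge|A|$, every $a\in X$ then has fewer than $k$ such objects. Give $k$ agents of $X$ \emph{all} of their undominated remaining objects, which respects capacity $k$. Delete these agents together with all of $\Gamma(X)$, and repeat. When Hall's condition finally holds, cover the rest.

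Against a competitor $N$ the count is global, not per red agent:
\begin{enumerate}
\item Each agent holds an undominated object of the object set remaining when it was processed. So a red agent must receive an object from the deleted set $\bar O$, giving $|A_{\text{R}}|\le|\bar O|$.
\item Each of the $\ell k$ peeled agents that does not receive an object of $\bar O$ is blue, giving $|A_{\text{B}}|\ge\ell k-|\bar O|$.
\item Finally, $|\bar O|\le\ell|A|/k\le\ell k/2$.
\end{enumerate}
This is where $\sqrt{2|A|}$ enters, through $|A|/k\le k/2$. The resulting popular sets need not satisfy your local condition at all, since red agents may take deleted objects that nobody holds.
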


\begin{proof}
    In the following, we let $\Gamma_G(X)$ for $X \subseteq A$ denote the set of objects adjacent to at least one agent in $X$ in the graph $G$. 
    Let $k := \lceil\sqrt{2|A|}\rceil$.
    We construct a set of $k$-matchings in $G$ as follows.
    \begin{enumerate}
        \item Let $i := 0$, $A_0 := A$, $O_0 := O$, and $E_0 := \{(a, o) \in E : o \not\prec_a o' \text{ for all } o' \in O\}$.
        \item As long as there is no $k$-matching $\bar{M}$ in $G_i = (A_i \cup O_i, E_i)$ covering all agents in~$A_i$, repeat the following:
        \begin{itemize} 
            \item By Hall's theorem, there exists $X_i \subseteq A_i$ with $k \cdot |\Gamma_{G_i}(X_i)| < |X_i|$.
            Note that~$|\Gamma_{G_i}(a)| < k$ for all $a \in X_i$, because otherwise $k \cdot |\Gamma_{G_i}(X_i)| \geq |A| \geq |X_i|$.
            Let $X'_i \subseteq X_i$ with $|X'_i| = k$ and let $\bar{M}_i$ be the $k$-matching that assigns each agent $a \in X'_i$ to each object in $\Gamma_{G_i}(a)$. 
            \item Set $i := i + 1$ and update the sets
            \begin{align*}
                A_{i} & := A_{i-1} \setminus X'_{i-1},\\
                O_{i} & := O_{i-1} \setminus \Gamma_{G_{i-1}}(X_{i-1}),\\
                E_{i} & := \{(a, b) \in E : a \in A_{i}, o \in O_{i}, o \not\prec_a o' \text{ for all } o' \in O_{i}\}.
            \end{align*}
        \end{itemize}
        \item Let $\ell$ be the value of $i$ when the while loop in step (2) terminates and let $\bar{M}_{\ell}$ be the $k$-matching that satisfies the termination criterion of the while loop.
        Let $\bar{M} := \bar{M}_0 \cup \dots \cup \bar{M}_{\ell}$.
    \end{enumerate}

     Note that the sets of agents and objects covered in each of the $k$-matchings $M_j$ for $j \in [\ell]$ are disjoint and thus $\bar{M}$ is itself a $k$-matching.
     Thus, let $\mathcal{M}$ be a decomposition of $\bar{M}$ into $k$ disjoint matchings.
     We claim that $\mathcal{M}$ is a popular set of matchings.
     For this, consider any matching $M'$ in $G = (A \cup O, E)$.
    
    First, let $A_+ := \{a \in A : M'(a) \succ_a M(a) \text{ for all } M \in \mathcal{M}\}$.
    Consider any $a \in A_+$ and let $j'$ be the unique index such that $a$ is assigned by $\bar{M}_{j'}$.
    Note that $M'(a) \succ_a b$ for all $b \in \bar{M}_{j'}(a)$. 
    Because the objects in $\bar{M}_{j'}(a) \subseteq O_{j'}$ are undominated in $O_{j'}$, the object $M'(a)$ must be in $B_{j''}$ for some $j'' < j' \leq \ell$.
    In particular, each agent in $A_+$ must be assigned to an object in $\bar{O} := O \setminus O_{\ell}$.
    Hence $|A_+| \leq |\bar{O}|$.
    
    Secondly, let $A_- := \{a \in A : M'(a) \prec_a M(a) \text{ for some } M \in \mathcal{M}\}$.
    We show that for any $j \in \{0, \dots, \ell - 1\}$ and any agent $a \in X'_j$ with $M'(a) \notin \bar{B}$ we have $a \in A_-$.
    Indeed, note that  any such agent $a$ is assigned to all objects that are undominated for $a$ in $G_j$ and therefore for each $o \in O_{j+1}$ there is some $M \in \mathcal{M}$ with $M(a) \succ_a o$.
    Hence, unless $M'(a) \in O \setminus O_{j+1} \subseteq \bar{O}$, we have $M'(a) \prec_a M(a)$ for some $M \in \mathcal{M}$.
    Therefore $|A_-| \geq |\bigcup_{j=0}^{i-1} X'_j| - |\bar{O}| = \ell \cdot k - |\bar{O}|$. 
    
    Observe that $|O_{\ell}| \geq |O | - \frac{\ell|A|}{k}$, which follows by induction with $O_0 = O$ and $O_{j+1} = O_{j} \setminus \Gamma_{G_j}(X_j)$, using $|\Gamma_{G_j}(X_j)| < \frac{X_j}{k} \leq \frac{|A|}{k}$ for $j \in \{0, \dots, \ell-1\}$. From this we conclude $|\bar{O}| = |O \setminus O_{\ell}| \leq \frac{\ell|A|}{k} \leq \sqrt{\frac{|A|}{2}}$ by choice of $k \geq \sqrt{2|A|}$. Therefore
    \begin{align*}
        |A_+| - |A_-| & \leq |\bar{O}| - (\ell \cdot k - |\bar{O}|) = 
        2|\bar{O}| - \ell k\\
        & \leq 2 \ell \sqrt{|A|/2} - \ell \sqrt{2|A|} = 0.
    \end{align*}
    We conclude that $\mathcal{M}$ is a popular set of matchings.
\end{proof}

\subsection{The Condorcet Dimension for Matroid-constrained Matchings under Partial-order Preferences}
\label{sec:partial-order-matchings-matroids}

A trivial upper bound on the weak Condorcet dimension of any election with partial-order preferences is the number of agents $|A|$: simply choose for every agent $a$ an alternative $X_a$ that is undominated under $a$'s preferences (e.g., in the case of matchings: a matching $M_a$ that matches $a$ to an undominated object); then $\{X_a : a \in A\}$ is a popular set of alternatives.
We now show that in the presence of partial-order preferences, the weak Condorcet dimension for matroid-constrained matchings can, in fact, 
be this trivial upper bound of $|A|$.

\begin{theorem}
    For every $k \in \mathbb{N}$ with $k \geq 2$, there is a matroid-constrained matching instance with~$k$ agents having partial-order preferences and weak Condorcet dimension~$k$.
\end{theorem}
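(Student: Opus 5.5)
The plan is to build, for each $k \geq 2$, an instance in which every $\mathcal{I}$-constrained matching can give a \emph{top} object to at most one agent, while every other agent can be given a \emph{neutral} fallback object. The neutral fallback should be incomparable, and hence indifferent, to everything that agent could have received. Then for any set $\M$ of at most $k-1$ matchings there is an agent $a^\star$ who receives a top object in none of them. The competitor $N$ gives $a^\star$ its top object and gives every other agent its neutral fallback. This yields $\varphi(N,\M)\ge 1$ and $\varphi(\M,N)=0$, so $\M$ is not popular. The matching upper bound of $k=|A|$ is the trivial one already noted before the statement.

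Concretely, I would give each agent $a_i$ a private top object $x_i$, a private fallback $w_i$, and a private ``bad'' object $y_i$, with $x_i \succ_{a_i} y_i$ and all other pairs incomparable. On $\{x_1,\dots,x_k\}$ I would impose rank $1$, for instance via a truncated direct sum of a rank-$1$ uniform matroid on the $x$'s with free parts, so that each matching hands out at most one top object. This gives the pigeonhole step: $k-1$ matchings leave some $a^\star$ that never receives $x_{a^\star}$. I would first verify that $N$ is $\mathcal{I}$-constrained, that no agent $b\neq a^\star$ strictly prefers any $M(b)$ to $w_b$, and that $x_{a^\star}$ strictly beats every $M(a^\star)$. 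Only the last of these is delicate.

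The main obstacle is that last requirement. Whatever $a^\star$ receives in $\M$ must be \emph{dominated} by $x_{a^\star}$. However, $a^\star$ itself may use its fallback $w_{a^\star}$, which is incomparable to $x_{a^\star}$. With a naive matroid, $k-1$ matchings can give $x$ to $k-1$ distinct agents and $w$ to everybody else. In that case the remaining agent is never strictly improved, and the argument breaks.

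To overcome this, I would make the matroid couple the fallbacks with the top objects, so that the neutral fallbacks are only available when a top object is ``paid for''. One option is to require that any independent set containing $w_j$'s for a set $S$ of agents also excludes some structure tied to the complement. Another is to replace the private objects by a cyclic structure generalizing the three-agent non-Pareto example from \Cref{sec:prelim}, with shared objects arranged so that each agent's incomparable object is someone else's dominated one. I expect to tune this by analysing small cases ($k=2,3$) first. The goal is the invariant that in every matching, every agent not holding a top object holds something dominated by its own top, unless the matching is ``wasted'' on that agent. A counting argument over the $k-1$ matchings should then still produce $a^\star$.
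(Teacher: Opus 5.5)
You have the right skeleton: a rank-one constraint on ``top'' objects, pigeonhole to find an agent $a^\star$ that holds no top in any member of $\M$, and a competitor $N$ that strictly improves $a^\star$ and hurts nobody. You also correctly diagnose why a single private top $x_i$ with an incomparable fallback $w_i$ breaks this. But the proposal stops at that obstacle. No instance that overcomes it is given, so the lower bound, which is the whole content of the theorem, remains unproved.

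The matroid coupling you sketch also works against itself. $N$ must itself be $\mathcal{I}$-constrained, so it is equally admissible as a member of $\M$. Any restriction that keeps fallbacks away from $a^\star$ inside $\M$ therefore restricts the competitor as well. In your design, every agent $c\neq a^\star$ that held $x_c$ somewhere in $\M$ must receive exactly $w_c$ in $N$, since rank one excludes $x_c$ and $y_c\prec_c x_c$. Let $F_b:=\{(b,x_b)\}\cup\{(c,w_c):c\neq b\}$; this is precisely your competitor against an $\M$ in which $b$ holds $y_b$ and every other agent holds its top once. For $k\geq 3$, the set $\{F_b : b\neq a\}$ gives every agent its undominated fallback at least once, so it is popular with only $k-1$ matchings. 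Hence the very matchings your competitor relies on would have to be infeasible. The root cause is that with one top per agent, neutrality is absolute: an object not dominated by $x_a$ is dominated by nothing. Whatever keeps the other agents from voting for $\M$ therefore also shields $a^\star$.

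The missing idea is to make neutrality relative to the top. Give each agent many tops, one for every small set of fallbacks and dominating exactly that set, together with many pairwise incomparable fallbacks. Then whatever fallbacks $a^\star$ collected across $\M$, some single top dominates all of them, while the other agents can still avoid everything dominated by the tops they held.

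This is the paper's construction, stated there for $k+1$ agents against $k$ matchings. In your indexing:
\begin{itemize}
\item take the complete bipartite graph on $k$ agents with identical preferences;
\item take a set $O'$ of $k(k-1)$ pairwise incomparable objects;
\item add an object $o_S$ for each $S\in\binom{O'}{k-1}$, with $o_S\succ o$ exactly for $o\in S$;
\item let independence mean ``at most one $o_S$''.
\end{itemize}
The starved agent $a^\star$ holds at most $k-1$ objects of $O'$ across $\M$, all inside some $(k-1)$-set $S_0$. So $o_{S_0}$ beats each of them, and also beats being unmatched. The at most $k-1$ objects $o_S$ used in $\M$ dominate at most $(k-1)^2$ objects of $O'$. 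This leaves at least $k-1$ objects of $O'$ that no agent finds worse than anything it held. Assigning these to the other $k-1$ agents and $o_{S_0}$ to $a^\star$ gives an $\mathcal{I}$-constrained $N$ with $\varphi(N,\M)\geq 1$ and $\varphi(\M,N)=0$.

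Your private-object design can be repaired the same way. Replace $x_i$ by tops $x_i^S$ for all $(k-1)$-subsets $S$ of a set $W_i$ of $(k-1)^2+1$ pairwise incomparable fallbacks, with $x_i^S\succ_{a_i} w$ if and only if $w\in S$, and impose rank one on all tops. The same counting then goes through.
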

\begin{proof}
Let $k \in \mathbb{N}$ and let $m := k(k+1)$. 
We construct a matroid-constrained matching instance with complete bipartite graph $G = (A \cup O, E)$, matroid $(O, \mathcal{I})$, and  partial-order preferences for the agents as follows.
The set of objects is $O = O' \cup O''$, where
\begin{align*}
	O' := \{o_1, \dots, o_m\} \text{ and } O'' = \left\{o_{S} :  S \in \binom{O'}{k}\right\}.
\end{align*}
The matroid on $O$ is given by $\mathcal{I} := \{I \subseteq O : |I \cap O''| \leq 1\}$.
The set of agents $A$ consists of $k+1$ agents with identical partial order preferences, where $o_S \succ_a o$ for each $S \in \binom{O'}{k}$ and $o \in S$; 
the agents are indifferent between any pair of objects that is not of this form.

Now consider any set of $k$ feasible matchings $\mathcal{M} = \{M_1, \dots, M_k\}$ in the constructed instance.
We show that $\mathcal{M}$ is not popular, and thus the weak Condorcet dimension of the constructed instance is at least $k + 1 = |A|$.
Note that for each $i \in [k]$, because $M_i(A) \in \mathcal{I}$, there is at most one agent $a \in A$ with $M_i(a) \in O''$.
Because there are $k + 1$ agents, there must thus be at least one agent $a_0$ with $M_i(a_0) \notin O''$.
Let $T_0 := \{M_i(a') : i \in [k]\} \subseteq O'$.
Note that $|T_0| \leq k$ and let $S_0$ be the union of $|T_|$ with $k - |T_0|$ additional distinct objects from $O' \setminus T_0$.
For $a \in A \setminus \{a_0\}$ and $i \in [k]$, define $O^{\prec}_{a,i} := \{o \in O' : M_i(a) \succ o\}$.
Note that $|O^{\prec}_{a,i}| = k$ if $M_i(a) \in O''$ and that $|O^{\prec}_{a,i}| = 0$ otherwise.
Let $\bar{O}' := O' \setminus \bigcup_{a \in A \setminus \{a_0\}, i \in [k]} O^{\prec}_{a,i}$.
Note that $|\bar{O}'| \geq |O'| - k^2 = m - k^2 = k$,
because for each $i \in [k]$ there is at most one $a \in A \setminus \{a_0\}$ with $M_i(a) \in O''$. 
Let $M'$ be an arbitrary perfect matching between the agents in $A \setminus \{a_0\}$ and the objects in $\bar{O}'$, complemented by $M'(a_0) := o_{S_0}$.
Note that $M'$ is a feasible matching such that $M'(a_0) = o_{S_0} \succ M_i(a)$ for all $i \in [k]$ by choice of $S_0$, and that $M'(a) \not\prec M_i(a)$ for each $a \in A \setminus \{a_0\}$ because $M'(a) \notin O^{\prec}_{a,i}$.
Thus $\mu(\mathcal{M}, M') = -1$, i.e., $\mathcal{M}$ is not popular.
\end{proof}

\section{Hardness Results for Matchings under Partial-order Preferences}
\label{sec:partial-order-hardness}

In this section, we consider the computational complexity of two problems for matching instances with partial-order preferences.
First, in \cref{sec:hardness-condorcet}, we consider the problem of determining the Condorcet dimension of a given matching instance under partial order preferences (which may be significantly better than the $\lceil\sqrt{2|A|}\rceil$ worst case established in \cref{sec:partial-orders}).
Second, in \cref{sec:hardness-pareto}, given the fact that Pareto-optimal solutions might fail to exist under partial-order preferences, we consider the question whether we can compute a Pareto-optimal matching for a given matching instance with partial-order preferences, or decide that no such matching exists.
Unfortunately, in both cases the problem turns out to be NP-complete.

\subsection{Determining the Condorcet Dimension is Hard for Matchings under Partial-order Preferences}
\label{sec:hardness-condorcet}

We consider the problem of determining the Condorcet dimension of a given matching instance under partial order preferences.
It turns out that this problem is NP-hard, even when we want to check whether the Condorcet dimension is bounded by a constant value $k$.

\begin{restatable}{theorem}{restateThmNPCondorcet}\label{thm:NP-condorcet}
    The following problem is NP-complete for any constant $k \in \mathbb{N}$,~$k > 1$: Given a matching instance $(G, \succ)$ with partial-order preferences, decide whether there is a popular set of matchings in $G$ of size at most $k$.
\end{restatable}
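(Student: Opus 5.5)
The proof has two parts: membership in NP, and NP-hardness for every fixed $k \geq 2$.

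\emph{Membership in NP.} The certificate is a list of at most $k$ matchings $\M = \{M_1, \dots, M_k\}$. To verify popularity, we must check in polynomial time that $\mu(\M, N) \geq 0$ for every matching $N$. Define an edge weight on $G$ for each agent $a$ and adjacent object $o$. It is $+1$ if $o \succ_a M_i(a)$ for all $i$, $-1$ if $M_i(a) \succ_a o$ for some $i$, and $0$ otherwise. Leaving $a$ unmatched in $N$ contributes $-1$ if $a$ is matched in some $M_i$ and $0$ otherwise; this is a constant offset. Handle it by adding a dummy object $\emptyset_a$ for each agent, joined to $a$ by an edge with that value. Then $\varphi(N,\M) - \varphi(\M,N)$ equals the weight of the corresponding $A$-perfect matching. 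So $\M$ is popular if and only if a maximum-weight $A$-perfect matching in the augmented graph has weight at most $0$, which is a single weighted bipartite matching computation.

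\emph{Hardness.} I would reduce from a problem about covering dominated objects, for instance exact cover by 3-sets (or 3-SAT). The design reuses the mechanism of the lower-bound instances in \cref{sec:partial-order-matchings-lb,sec:partial-order-matchings-matroids}: a ``premium'' object $o_T$ is preferred only to a specific set of ``ordinary'' objects, and all other pairs are incomparable.
\begin{itemize}
\item For each triple $T$ create an object $o_T$ that agents prefer to the element objects of $T$.
\item Add element agents, so that a competing matching $N$ can hand some agent a premium object while reassigning everyone else to objects that are incomparable to what they receive in $\M$.
\end{itemize}
The intended correspondence: in any small popular set, the premium objects held in $\M$ must ``block'' every possible improvement. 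A careful count, in the style of the $|A_+| \leq |\bar O|$ versus $|A_-|$ bookkeeping in the proof of \cref{thm:partial-order-ub}, should show this happens exactly when the chosen triples form an exact cover. The forward direction (cover $\Rightarrow$ popular set) I would prove via \cref{thm:Pareto-implies-Condorcet}: exhibit a set that is Pareto-optimal, hence popular. The backward direction (no cover $\Rightarrow$ some $N$ has $\mu(\M,N)<0$) needs an explicit improving $N$, built as in the lower-bound proofs: give one uncovered agent a premium object and shift the remaining agents onto incomparable objects.

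\emph{Main obstacle: handling every fixed $k$.} Taking a disjoint union with a padding gadget does not work. Popularity decomposes over components, since $N$ can be chosen independently on each, so $k-1$ matchings cannot be ``used up'' elsewhere. The padding must instead be interleaved with the core instance through shared objects. My first attempt would take the $k^2+k+1$-agent instance from \cref{sec:partial-order-matchings-lb} and identify its $O_1$-type objects with objects of the core gadget. Any popular set of size $k$ would then be forced to spend $k-1$ matchings' worth of capacity on $o^*$ and the padding, leaving effectively one matching's worth of freedom to encode the cover. Getting this forcing argument exact, so that it neither kills all solutions nor leaves slack, is where I expect most of the work to lie.
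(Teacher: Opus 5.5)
Your membership argument is correct, and the dummy objects $\emptyset_a$ handle unmatched agents more carefully than a bare maximum-weight matching would. The gap is the hardness half. No instance is specified and neither direction is argued. The point you flag as the main obstacle, making the reduction work for each fixed $k\ge 2$, is the actual content of the theorem, and it is left open.

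Your instinct that the padding must share objects with the core is sound. But the $\sqrt{n}$ lower-bound instance only forces an aggregate count (some agent misses $O_1\cup\{o^*\}$ in every matching). It does not give the exact per-object saturation that an exact-cover encoding needs. Moreover, what saturation can leave is one free copy of each shared object, spread over all $k$ matchings, not one free matching. So the core must itself use all $k$ slots of its agents.

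Your plan for the forward direction is also a poor fit, because under partial orders the intended popular set is typically not Pareto-optimal. In the paper's instance, agents holding $k$ pairwise incomparable objects can give up all but one of them, and the freed copies can go to pool agents who strictly improve. So the intended set is Pareto-dominated, \cref{thm:Pareto-implies-Condorcet} cannot be applied, and a direct count is needed.

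The missing idea is to make the $k$ matchings themselves carry the encoding by reducing from $(k+1)$-dimensional matching, which is NP-hard for every $k\ge 2$.
\begin{itemize}
\item \emph{Core agents.} Each core agent $a_x$, $x\in X_{k+1}$, must fill its $k$ slots with one element object from each of $X_1,\dots,X_k$. Guard objects $\bar o_j$, dominated only by $X_j$-objects, force this. For every non-edge $(x_1,\dots,x_{k+1})\notin\mathcal E$, an object dominated by all $X_j$-objects except $x_j$ forces the held tuple to be an edge. Domination thus runs from element objects to tuple objects, the reverse of your design, and only polynomially many tuple objects arise for fixed $k$.
\item \emph{Padding agents.} For each $h\in[m]$, $k-1$ padding agents must hold all of $o_{x_{1h}},\dots,o_{x_{kh}}$. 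They occupy $k-1$ of the $k$ copies of each element object, so each element is held by exactly one core agent, and the tuples form a perfect $(k+1)$-dimensional matching.
\item \emph{Pool agents.} A large pool $A_0$ prefers element objects to fillers $O_0$, and there is a buffer $\tilde O$ of $k^2m$ objects incomparable to everything. By pigeonhole the pool contains an agent holding no element or buffer object in any $M_i$.
\end{itemize}
The pool closes both directions.
\begin{itemize}
\item \emph{Forcing lemma.} Suppose some non-pool agent has an adjacent non-element object not dominated by any of its $M_i$-objects. Then, using the free pool agent, there is a competitor $N$ with $\mu(\M,N)<0$. This lemma drives all the forcing above.
\item \emph{Forward count.} Let $O_Y$ be the element objects, so $|O_Y|=km=|A\setminus A_0|$. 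Every agent preferring $N$ is a pool agent taking an element object in $N$. Every non-pool agent not preferring $\M$ also takes one. Together these give $\varphi(N,\M)\le\varphi(\M,N)$.
\end{itemize}
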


The theorem is proven via reduction from \textsc{$\ell$-Dimensional Matching}.
The complete proof is discussed in \cref{app:NP-condorcet}.

\subsection{Finding a Pareto-optimal Matching is Hard under Partial-order Preferences}
\label{sec:hardness-pareto}

We now show that determining whether a matching instance under partial-order preferences admits a Pareto-optimal solution is NP-complete.

\begin{theorem}\label{thm:NP-pareto}
    The following problem is NP-complete: Given a matching instance~\mbox{$(G, \succ)$} with partial-order preferences, decide whether there is a Pareto-optimal matching in $G$.
\end{theorem}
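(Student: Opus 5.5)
We prove membership in NP and NP-hardness separately. Membership is the routine part, and the reduction is where the work lies.

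\emph{Membership in NP.} The certificate is a matching $M$, and we show that Pareto-optimality of $M$ can be checked in polynomial time. $M$ is Pareto-dominated exactly when there is a matching $M'$ and an agent $a^\ast$ such that
\begin{itemize}
\item $M'(a^\ast) \succ_{a^\ast} M(a^\ast)$, and
\item $M'(b) \not\prec_b M(b)$ for every agent $b$.
\end{itemize}
Recall that $o \succ_b \emptyset$ for every object $o$. So an agent $b$ that is matched in $M$ must be matched in $M'$ to an object $o$ with $o \not\prec_b M(b)$, while an unmatched agent $b$ is unconstrained.

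The check therefore enumerates all pairs $(a^\ast, o)$ with $o \succ_{a^\ast} M(a^\ast)$. For each pair, build the bipartite graph $G_{a^\ast,o}$ containing only the admissible edges, with $a^\ast$ forced onto $o$. Then decide whether $G_{a^\ast,o}$ has a matching covering all agents matched in $M$. This is a standard bipartite matching computation, for example by giving weight $1$ to the required agents and computing a maximum-weight matching. $M$ is Pareto-optimal if and only if all of these polynomially many tests fail.

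\emph{NP-hardness.} We reduce from 3-SAT, or alternatively from an exact-cover variant if that turns out cleaner. The basic building block is the three-agent, three-object example from \Cref{sec:prelim}: identical preferences $o_0 \succ o_2$ and indifference otherwise. That gadget has no Pareto-optimal matching, because any complete assignment can be rotated cyclically to strictly improve whichever agent holds $o_2$.

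For each clause we take a copy of this cycle gadget and augment it so that the rotation can be blocked. Concretely, we add connections to literal objects so that, whenever some literal of the clause is true, an object becomes unavailable to the cycle, or an agent of the cycle can be served by an undominated literal object, and the improving rotation is destroyed. For each variable we take a gadget with two symmetric complete assignments (true/false). The agents in it are indifferent between these two assignments, so neither assignment Pareto-dominates the other. Each assignment frees the literal objects of exactly one polarity for the clause gadgets.

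Both directions of the correctness proof are then needed:
\begin{itemize}
\item \emph{From a satisfying assignment to a Pareto-optimal matching.} Assemble the matching gadget by gadget, and argue that no Pareto-improvement exists. Any improvement must strictly help some agent inside a clause gadget, which is impossible once that clause's cycle is blocked. It must also not harm any agent, and this constrains which variable gadgets can be reconfigured.
\item \emph{From a Pareto-optimal matching to a satisfying assignment.} Show that any Pareto-optimal matching must be "consistent" on every variable gadget. Then, if some clause received no true literal, the cyclic rotation from \Cref{sec:prelim} applies in that clause gadget and yields a Pareto-dominating matching, a contradiction.
\end{itemize}

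The main obstacle is that Pareto-optimality is a global condition. An improving matching may simultaneously reshuffle several gadgets, for instance flipping a variable gadget while also rotating a clause gadget, so that every agent stays indifferent and one agent gains. We need to rule this out. The plan is to make the preferences within variable gadgets incomparable wherever indifference would otherwise permit free reshuffling. We also want to use strict preferences of the form literal object $\succ$ dummy object, so that any agent that is moved off a literal object is strictly hurt. Finally, we must verify that partial orders, rather than weak orders, are essential only inside the clause cycles. This is consistent with Pareto-optimal sets always existing under weak rankings, so the partial-order structure has to be used there.
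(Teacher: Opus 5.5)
\textbf{Membership in NP} is correct and matches the paper's check. Your version, which forces $a^\ast$ onto a strictly better object and asks for a matching covering every agent matched in $M$, is actually stated more carefully than the paper's.

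\textbf{The gap is NP-hardness}, which is the substance of the theorem. You list properties the clause and variable gadgets should have, but you never construct them, so neither direction of correctness can be checked. The one concrete remedy you propose for the main obstacle also fails in this model. Pareto-dominance here only requires $M'(b) \not\prec_b M(b)$, so incomparable objects are treated as indifferent. Making variable-gadget preferences ``incomparable'' therefore \emph{enables} free reshuffling rather than preventing it.

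Likewise, a variable gadget whose agents are indifferent between its two configurations exerts no force under Pareto-optimality. Nothing makes a Pareto-optimal matching choose a consistent configuration. In the forward direction you would still have to exclude dominating matchings that re-route the gadget to hand a literal object to some clause agent who strictly gains. Rigidity can only come from strict preferences, together with saturation of every object the agent could move to indifferently. Finally, if a true literal works by \emph{freeing} an object for a clause, each freed object serves only one clause. That is exactly what creates the multi-occurrence consistency problem.

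\textbf{How the paper avoids this.} It reduces from \textsc{Vertex Cover} and encodes selection by \emph{occupancy}.
\begin{itemize}
\item Each edge $e=\{v,w\}$ gets the three-agent cycle gadget on $o^1_e,o^2_e,o^3_e$. This gadget is neutralized only if an outside agent $a^v_e$ or $a^w_e$ sits on its top object $o^1_e$.
\item Say it is $a^w_e$. Since $a^w_e$ strictly prefers $o^w_e$, that object must be held by $a_e$.
\item Agent $a_e$ only has $o^v_e \succ o_v$ and $o^w_e \succ o_w$, so it is indifferent between $o^w_e$ and $o_v$. Hence $o_v$ must be held by $a_v$.
\item Agent $a_v$ strictly prefers the $|V'|-\ell$ budget objects $\bar O$ over $o_v$. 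Pareto-optimality therefore allows at most $\ell$ agents $a_v$ to sit on their $o_v$.
\end{itemize}
A single occupied $o_v$ blocks the escape move for all edges incident to $v$ at once, so no per-occurrence copies or consistency gadgets are needed. Note also that $a_e$'s preferences are not a weak ranking. So non-transitive indifference is used in the wiring as well, not only inside the cycle gadgets.

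To complete your proof, you need an explicit construction with such a pinning chain ending at a saturated budget, plus the verification in both directions.
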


We split the proof into four parts. First we show containment in NP, then we describe the construction of our reduction (from \textsc{Vertex Cover}), then we show how a Pareto-optimal matching in the resulting instance can be constructed from a vertex cover, and vice versa.
Together, these four steps comprise the entire proof of the theorem.

\subsubsection*{Containment in NP}
    Note that the problem is in NP, as for a given matching $M$ in $G = (A \cup O, E)$ we can check whether it is Pareto-optimal by trying to construct a matching that dominates $M$ as follows: First, guess an agent $\hat{a} \in A$ and let $$E_{\hat{a}} := \{(\hat{a}, o) \in E : o \succ_{\hat{a}} M(a)\} \cup \{(a, o) \in E : a \neq \hat{a}, M(a) \not\succ_a o\}.$$
    Then check if $G_{\hat{a}} = (A \cup O, E_{\hat{a}})$ contains a matching.
    If this is the case for some choice of $\hat{a} \in A$, then we found a matching dominating $M$; if this is not the case for any $\hat{a} \in A$, then $M$ is Pareto-optimal.

\subsubsection*{Construction: Reduction from \textsc{Vertex Cover}}
    We establish NP-hardness via reduction from \textsc{Vertex Cover}: Given a graph $G' = (V', E')$ and an integer $\ell$, does there exist a subset of vertices $U \subseteq V'$ such that $U \cap e \neq \emptyset$ for all $e \in E'$? Given an instance of \textsc{Vertex Cover}, we construct a matching instance $G = (A \cup O, E)$ with partial order preferences as follows.

    The set of objects $O$ contains the following:
    \begin{itemize}
        \item An object $o_v$ for each $v \in V'$.
        \item Objects $o^v_{e}, o^w_{e}, o^1_e, o^2_e, o^3_e$ for each $e = \{v, w\} \in E$.
        \item A set of $|V|-\ell$ identical objects $\bar{O}$.
    \end{itemize}
    The set of agents $A$ contains the following:
    \begin{itemize}
        \item An agent $a_v$ for each $v \in V$. This agent is adjacent to $o_v$ and to each $\bar{o} \in \bar{O}$. The agent prefers each $\bar{o} \in \bar{O}$ to $o_v$.
        \item Agents $a_e, a^v_e, a^w_e, a^1_e, a^2_e, a^3_e$ for each $e = \{v, w\} \in E'$. 
        \begin{itemize}
            \item Agent $a_e$ is incident to $o^v_e$, $o^w_e$, $o_v$, and $o_w$. The agent prefers $o^v_e$ over $o_v$ and prefers $o^w_e$ over $o_w$.
            \item Agent $a^v_e$ is adjacent to $o^v_e$ and $o^1_e$, preferring $o^v_e$ over $o^1_e$.
            \item Agent $a^w_e$ is adjacent to $o^w_e$ and $o^1_e$, preferring $o^w_e$ over $o^1_e$.
            \item Agents $a^1_e$, $a^2_e$, $a^3_e$ each are adjacent to $o^1_e$, $o^2_e$, and $o^3_e$, preferring $o^1_e$ over $o^2_e$.
        \end{itemize}
    \end{itemize}
    \begin{figure}
        \centering
        \includegraphics[width=\linewidth]{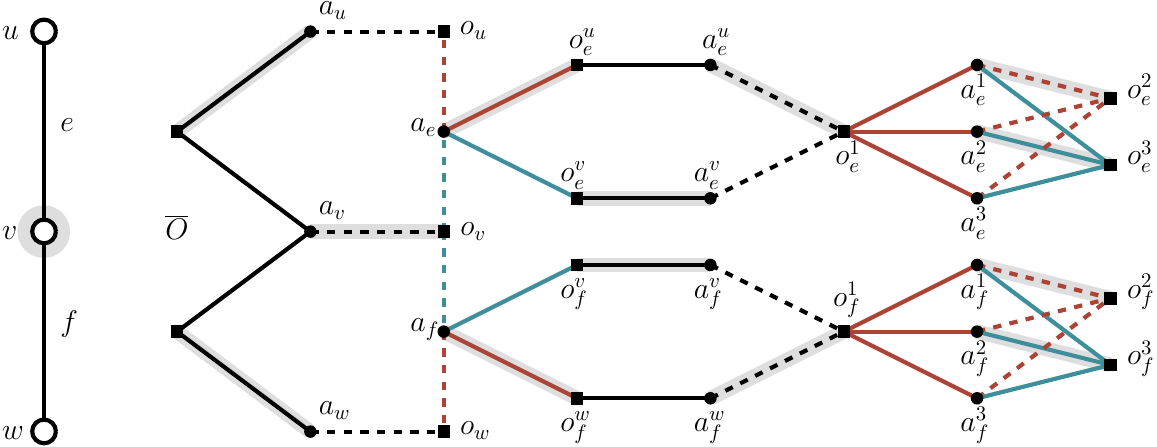}
        \caption{The construction used to prove \cref{thm:NP-pareto}. 
        The graph on nodes $V = \{u, v, w\}$ on the left depicts a \textsc{Vertex Cover} instance, for which $\{v\}$ is a vertex cover of size $\ell = 1$. The resulting matching instance is depicted on the right, with $\bar{O}$ containing $|V| - \ell = 2$ nodes. 
        Any agent prefers a solid edge over a dashed edges of the same color, but is indifferent for all other pairs (in particular edges of different colors). 
        The edges marked with grey background form a Pareto-optimal matching.%
        }
        \label{fig:reduction}
    \end{figure}

    See \cref{fig:reduction} for an illustration of the construction.
    In \cref{app:NP-Pareto}, we show that the constructed instance admits a Pareto-optimal matching if and only if there is a vertex cover of size at most $\ell$.

\section{The Condorcet Dimension for Arborescences}

In this section, we consider the setting where the alternatives are not matchings but arborescences.
Recall, that an arborescence instance is given by an arborescence instance $(D = (A \cup \{r\}, E), \succ)$ consists of a digraph $D$ with root $r$ and partial order preferences of the agents over their incoming arcs.
We are interested in finding a (weak) Condorcet-winning set among those $r$-arborescences.
The following theorem implies that under partial order preferences, two arborescences suffice to construct a weak Condorcet-winning set.
For strict rankings, the two arborescences form a Condorcet-winning set (unless there exists a top-choice arborescence, which yields a Condorcet-winning  set of size $1$).

\begin{theorem}
    There exist two (possibly identical) $r$-arborescences $T_1, T_2 \subseteq E$ such that $T_1 \cup T_2$ contains a $\succ_a$-maximal arc from $\delta^-(a)$ for every $a \in A$. Such arborescences can be computed in polynomial time.
\end{theorem}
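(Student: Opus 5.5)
Throughout I assume that $D$ admits at least one $r$-arborescence, so every agent is reachable from $r$; otherwise there is nothing to prove. My plan is an Edmonds-style contraction argument built on the digraph $H$ of \emph{maximal arcs}. For each $a \in A$, let $E^*_a \subseteq \delta^-(a)$ be the set of $\succ_a$-maximal incoming arcs. It is nonempty because $\succ_a$ is a partial order on a finite set. Let $H = (A \cup \{r\}, \bigcup_a E^*_a)$. The goal is two $r$-arborescences $T_1,T_2$ such that every agent has its $T_1$-arc or its $T_2$-arc in $H$. Both are built by induction on $|A|$.

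First I choose one maximal arc $g(a) \in E^*_a$ for every agent and set $F := \{g(a) : a \in A\}$. Every agent has indegree exactly one in $F$ and $r$ has indegree zero. Hence $F$ is either an $r$-arborescence, in which case $T_1 = T_2 = F$ and we are done, or it contains a directed cycle $C$. Loops cannot occur, so $|A(C)| \geq 2$. I contract $A(C)$ to a single new agent $c$, obtaining $D/C$; arcs entering $A(C)$ from outside become arcs entering $c$, and other agents keep their arcs and preferences. The new agent $c$ is declared indifferent among all its incoming arcs, so every arc entering $c$ is maximal for $c$. The contracted instance is still rooted and reachable from $r$, so induction gives $r$-arborescences $T_1', T_2'$ of $D/C$ with the covering property for every agent of $D/C$ other than $c$. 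Each $T_i'$ contains exactly one arc entering $c$, and it corresponds to an arc $(x_i,y_i)$ of $D$ with $y_i \in A(C)$. I expand $T_i'$ to $T_i := T_i' \cup (C \setminus \{\text{arc of } C \text{ entering } y_i\})$, which is an $r$-arborescence of $D$. Every node of $A(C)$ other than $y_i$ then receives its maximal arc $g(\cdot)$ in $T_i$. Agents outside $A(C)$ keep their arcs, so their covering is inherited.

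The only agents that might fail the covering are the entry points $y_1,y_2$. If $y_1 \neq y_2$, then $y_1$ is covered by $T_2$ and $y_2$ by $T_1$, and the induction goes through. The main obstacle is therefore to guarantee distinct entry heads into each contracted cycle, or that the entry arc itself is maximal for its head. I would attack this by strengthening the induction hypothesis. For every contracted node $c$, I would record the set $Y_c$ of heads in $A(C)$ of arcs entering $C$, and require that the two arborescences enter $c$ through arcs whose original heads are distinct, or that one of them uses an arc in $E^*_{y}$. Concretely, I would model $c$ as ``two-headed'' by replacing the indifference at $c$ with a requirement on the pair $(T_1',T_2')$. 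If $c$ has entering arcs with at least two distinct heads in $A(C)$, the recursion should pick two of them with different heads. If all entering arcs share the same head $y$, then $y$ is the unique entry of $C$, and I would instead break the cycle for \emph{both} arborescences at the predecessor arc of $y$. That covers every node of $C$ except $y$ in both. For $y$ itself, note that some arc of $E^*_y$ lies on $C$ and the remaining arcs of $E^*_y$ come from outside, so I would re-choose $g$ so that, whenever possible, the cycle is entered via a maximal arc.

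As a fallback, in case the two-headed bookkeeping becomes unwieldy, I would argue directly on the strongly connected components of $H$. Every source component $K \neq \{r\}$ of $H$ has $|K| \geq 2$. Inside $K$ I would choose two distinct roots $u_K \neq v_K$ with $H$-in-arborescences of $K$ rooted at $u_K$ and at $v_K$; this is possible by strong connectivity. All other nodes I would hang onto their reachable roots via $H$-arcs, and finally I would attach the roots $u_K$ (for $T_1$) and $v_K$ (for $T_2$) through arbitrary $D$-arcs. The delicate point here is the same as above: the attaching $D$-arcs must enter $K$ at the designated root. To handle it, I would choose $u_K$ and $v_K$ \emph{after} fixing two entering arcs with distinct heads, processing source components in an order along which each becomes reachable from $r$.

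Polynomial running time follows in either version, since each contraction reduces $|A|$ and all steps are reachability computations in $D$ or $H$.
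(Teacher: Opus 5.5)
\textbf{There is a genuine gap.} Your contraction skeleton is sound: contract a cycle of chosen maximal arcs, then expand by dropping the cycle arc into the entry head. You also identify the crux correctly, namely that the two arborescences must enter every contracted cycle at distinct heads. But that crux is the whole content of the theorem, and neither of your two routes resolves it.

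As stated, the induction fails. Take agents $x,y$ with arcs $(r,x),(r,y),(x,y),(y,x)$, and preferences $(y,x)\succ_x(r,x)$ and $(x,y)\succ_y(r,y)$. Then $F$ is the $2$-cycle, and after contraction $F$ is acyclic. Your base case returns $T_1'=T_2'$, so after expansion the same entry node is uncovered in both trees. Yet $\{(r,x),(x,y)\}$, $\{(r,y),(y,x)\}$ is a valid answer.

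So the hypothesis must be strengthened, and your ``two-headed'' version is not an invariant that survives nesting. Suppose $c$ later lies on a cycle $C'$ of the contracted instance. Then both arborescences enter $c$ through the same arc $g(c)$ unless $c$ is an entry head of $C'$ in one of them. The requirement at $c$ therefore becomes a constraint on the entry heads of $C'$, and these constraints compound under further contractions.

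Distinct heads are also not enough on their own. Each entering arc's tail must be connected to $r$ in the respective arborescence without passing through $C$, and this must hold simultaneously for all cycles. Your SCC fallback faces the same global consistency problem, since entry tails may hang off $K$ itself. ``Process components in an order along which each becomes reachable'' does not show that compatible choices $u_K\neq v_K$ exist.

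Your unique-entry case also has a hole. If no arc of $E^*_y$ enters from outside $A(C)$, then ``whenever possible'' fails, and nothing can repair it. Every $r$-arborescence must enter $A(C)$ at $y$ from outside, so $y$'s maximal arcs lie in no arborescence. For example, take arcs $(r,y),(x,y),(y,x)$ with $(x,y)\succ_y(r,y)$. The paper avoids this by first deleting all arcs that lie in no $r$-arborescence. This leaves the set of arborescences unchanged, and maximality is then taken among the remaining arcs. Afterwards every arc is the last arc of some path from $r$.

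The missing idea is a global argument via Edmonds' disjoint-arborescence theorem. Fix one maximal arc $e_a$ per agent and build $D'$ in which every agent has exactly two in-arcs, one of them $e_a$:
\begin{itemize}
\item if $|\delta^-(a)|=1$, add a parallel copy of $e_a$;
\item if $|\delta^-(a)|>2$, route all other in-arcs of $a$ through a new node $v_a$ with the single arc $(v_a,a)$.
\end{itemize}
Now take a nonempty $S$ with $r\notin S$ and an arc $(w,v)$ entering $S$, and let $(u,v)$ be another in-arc of $v$. A path from $r$ ending with $(u,v)$ enters $S$ through an arc different from $(w,v)$, so $|\delta^-_{D'}(S)|\ge 2$. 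Edmonds' theorem then gives two arc-disjoint $r$-arborescences. Since each agent has only two in-arcs, one of the two arborescences contains $e_a$, and mapping back to $D$ finishes the proof. This replaces all of your cycle-by-cycle bookkeeping.
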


\begin{proof}
    We first note that without loss of generality, we can assume that every arc $e \in E$ is contained in some $r$-arborescence, as we can remove any arc which does not fulfill this property without changing the set of arborescences.
    In particular, this implies that for every $(a, b) \in E$ there is an $r$-$b$-path in $D$ whose last arc is $(a, b)$.
    
    For each agent $a \in A$, we fix a $\succ_a$-maximal arc $e_a \in \delta_{D}^-(a)$.
    We construct a digraph $D' = (V' \cup \{r\}, E')$ from $D$ as follows: 
    For each agent $a \in A$, we include $a$ as a node in $V'$. 
    For each agent $a \in A$, with $|\delta_{D}^-(a)| > 2$, we further include a node $v_{a}$ in $V'$.
    We include all arcs $e_a$ for $a \in A$ in $E'$.
    Furthermore, if $|\delta_{D}^-(a)| = 1$ for some $a \in A$, we include a parallel copy $e'_a$ of $e_a$ in $E'$.
    Finally, if $|\delta_{D}^-(a)| > 2$ for some $a \in A$, we include the arc $(v_a, a)$ and for each $(\bar{a}, a) \in \delta_{D}^-(a) \setminus \{e_a\}$ the arc $(\bar{a}, v_a)$ in $E'$.
    Note that $|\delta_{D'}^-(v)| \geq 2$ for all $v \in V'$, with equality if $v = a$ for some $a \in A$.
    Furthermore, any $r$-arborescence $T'$ in $D'$ induces an arborescence $\phi(T')$ in $D$ by replacing each path of the form $\{(\bar{a}, v_a), (v_a, a)\}$ by the arc $(\bar{a}, a)$, and each $e'_a$ by $e_a$.
    Note that if $e_a \in T'$ for some $a \in A$ then $e_a \in \phi(T')$.
    Thus, to prove the theorem, it suffices to show that there are two $r$-arborescences in $D'$ whose union contains $e_a$ for all $a \in A$.
    To see that this is indeed the case, we show that $D'$ contains two disjoint $r$-arborescences.
    Because each agent $a \in A$ has exactly $2$ incoming arcs in $D'$, one of which is $e_a$, one of the two arborescences must contain $e_a$.
    
    We now show that $|\delta_{D'}^-(S)| \geq 2$ for all $S \subseteq V'$ with $S \neq \emptyset$, which implies the existence of two disjoint arborescences in $D'$ by \cite{edmonds1973edge}.
    We first observe that for every arc $(u, v) \in E'$ there is an $r$-$v$-path in $D'$ whose last arc is $(u, v)$, a property $D'$ directly inherits from $D$ by construction.
    Let $S \subseteq V'$ with $S \neq \emptyset$. 
    Note that $\delta^-_{D'}(v) \neq \emptyset$ because $r \notin S$ and because there is an $r$-$v$-path in $D'$ for every $v \in V$.
    Thus let $(w, v) \in \delta^-(S)_{D'}$ and let $(u, v) \in \delta^-_{D'}(v) \setminus \{(w, v)\}$, which exists because $|\delta^-_{D'}(v)| \geq 2$.
    Note that an $r$-$v$-path containing $(u, v)$ as its last arc cannot contain $(w, v)$ and thus there must by an arc different from $(w, v)$ that enters $S$.
    We conclude that $|\delta_{D'}^-(S)| \geq 2$, which completes the proof.
\end{proof}

\section{Discussion}\label{sec:discussion}

In this work, we apply the concept of Condorcet-winning sets, which has recently received increasing attention \cite{CLR+25a,song2026few}, to elections induced by (graph) structures under preferences. While in general social choice it remains open whether the Condorcet dimension can be $3$, $4$, or $5$, our setting allows us to exploit the underlying combinatorial structure and obtain (asymptotically) tight bounds on the (weak) Condorcet dimension across our three models.

Because this paper (together with \cite{connor2025popular}) is, to our knowledge, the first to study the (weak) Condorcet dimension when agents may have weak rankings over alternatives, a natural next step is to move beyond combinatorial elections to standard ordinal voting and derive bounds on the (weak) Condorcet dimension when agents have weak rankings.

For combinatorial elections, an interesting direction is the matroid-constrained model with the additional requirement that the selected objects form a basis (see \Cref{sec:prelim}), as studied by \citet{KMS+25a}. This variant generalizes both the arborescence and assignment \cite{kavitha2022popular} models. However, we can rule out the extension of our main result: in the assignment model, even with strict rankings, Pareto-optimal sets of size $2$ need not be Condorcet-winning (see \Cref{sec:appassignment}).

\section*{Acknowledgements}
This work was initiated at the Dagstuhl Seminar on Scheduling (25121). We thank Adrian Vetta, whose talk inspired this work, as well as the seminar organizers, Claire Mathieu, Nicole Megow, Benjamin Moseley, and Frits Spieksma. 
Telikepalli Kavitha was supported by the Department of Atomic Energy, Government  of India, under project no. RTI4014.
Jannik Matuschke was supported by the special research fund of KU Leuven (project C14/22/026).
Ulrike Schmidt-Kraepelin was supported by the Dutch Research Council (NWO) under project number VI.Veni.232.254.

\bibliographystyle{ACM-Reference-Format}
\bibliography{references}

\appendix

\section{Missing Proofs from \cref{sec:pareto-implies-condorcet-matroids}}
\label{app:matroids}

\subsection{Proof of \cref{lem:computing-pareto}}
\label{app:matroids-computing}

\restateLemComputingPareto*

\begin{proof}
    For $a \in A$, let $O_a$ denote the set of objects adjacent to $a$ in $G$.
    Without loss of generality, we can assume that $O_a \cap O_{a'} = \emptyset$ for distinct agents $a, a' \in A$: If there is $o \in O_a \cap O_{a'}$, we introduce a new element $o'$ to $O$, replace the edge $(a', o)$ by $(a', o')$ in $E$, and make $o'$ a parallel element to $o$ in $\I$, i.e., no set containing both $o, o'$ is in $\I$, and for $I \subseteq O \setminus \{o, o'\}$ it holds that $I \cup \{o'\} \in \I$ if and only if $I \cup \{o\} \in \I$.
    
    We define a maximum-weight matroid intersection problem as follows.
    Let \begin{align*}
        \mathcal{I}_A & := \{I \subseteq O : |I \cap O_a| \leq 1 \text{ for all } a \in A\} \text{ and }\\
        \mathcal{I}_O & := \{I \subseteq O : I = I' \cup I'' \text{ for some } I, I'' \in \mathcal{I}\}.
    \end{align*}
    Note that both $\mathcal{I}_A$ and $\mathcal{I}_O$ are matroids: the first is a partition matroid, the second is a matroid union (of $\mathcal{I}$ with itself).
    Moreover, for $a \in A$ and $o \in O_a$ we define the weight~$w_{o} := |O| - |\{o' \in O_a : o' \succ_a o\}|$.
    Note that, because $\prec_a$ is a weak ranking, for~$o, \hat{o} \in O_a$ we have $w_o > w_{\hat{o}}$ if and only if $o \succ_a \hat{o}$.
    
    We apply \citeauthor{frank1981weighted}'s~[\citeyear{frank1981weighted}] weighted matroid intersection algorithm to find a set $S \subseteq O$ with $S \in \mathcal{I}_A \cap \mathcal{I}_O$ maximizing $\sum_{o \in S} w_o$.
    Let $I', I'' \in \mathcal{I}$ such that $S = I' \cup I''$ (which exist because $S \in \mathcal{I}_A$). 
    Note that we can assume $I' \cap I'' = \emptyset$ without loss of generality.
    Let $M' := \{(a, o) : a \in A, o \in I' \cap O_a\}$ and $M'' := \{(a, o) : a \in A, o \in I' \cap O_a\}$.
    Note that both $M'$ and $M''$ are $\mathcal{I}\!$-constrained matchings by construction.
    
    Finally, we show that $\mathcal{M} := \{M', M''\}$ is Pareto-optimal.
    By contradiction assume that there is a set of two $\mathcal{I}\!$-constrained matchings $\hat{\mathcal{M}} = \{N', N''\}$ that Pareto-dominates $\mathcal{M}$.
    Without loss of generality, we can assume that no agent $a \in A$ is matched in both $N'$ and $N''$ (because preferences are weak rankings) implying that $\hat{S} := N'(A) \cup N''(A) \in \mathcal{I}_A \cap \mathcal{I}_O$.
    For $a \in A$ let $S(a)$ and $\hat{S}(a)$, respectively, be the unique elements of $O_a$ such that $S(a) \in S$ and $\hat{S}(a) \in \hat{S}$, respectively (if no such element exists, then $S(a) = \emptyset$, $\hat{S}(a) \in \hat{S}$, respectively).
    Because $\hat{\mathcal{M}}$ Pareto-dominates $\mathcal{M}$ it holds that that $S(a) \not\succ_a \hat{S}(a)$ which implies $|\{o \in O_a : o \succ_a \hat{S}(a)\}| \leq |\{o \in O_a : o \succ_a S(a)\}|$ (here we crucially use that $\prec_a$ is a weak ranking) and therefore $w_{\hat{S}(a)} \geq w_{S(a)}$.
    Moreover, there is at least one agent $\hat{a} \in A$ such that $\hat{S}(\hat{a}) \succ_{\hat{a}} S(\hat{a})$ and therefore the above inequality is strict.
    We obtain $\sum_{o \in S} w_o = \sum_{a \in A : S(a) \neq \emptyset} w_{S(a)} > \sum_{a \in A : \hat{S}(a) \neq \emptyset} w_{\hat{S}(a)} = \sum_{o \in \hat{S}} w_o$, contradicting the fact that $S$ is weight-maximizing.
\end{proof}

\subsection{Proofs of \cref{lem:exchange-cycle,cor:exchange-path}}
\label{app:matroids-exchange}

To prove \cref{lem:exchange-cycle}, we use the following result by \citet{frank1981weighted}, restated here in the version given by \citet[Lemma~13.27]{korte2018combinatorial}.

\begin{lemma}[\cite{frank1981weighted}]\label{lem:frank-exchange}
    Let $(O, \I)$ be a matroid and let $S \in \I$.
    Let $x_1, \dots, x_k \in \I$ and $y_1, \dots, y_k \in O \setminus S$ with $x_i \in C(S, y_i)$ for $i \in [k]$ and $x_j \notin C(S, y_i)$ for $i, j \in [k]$ with $j < i$.
    Then $S \setminus \{x_1, \dots, x_k\} \cup \{y_1, \dots, y_k\} \in \I$.
\end{lemma}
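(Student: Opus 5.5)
The plan is to argue by induction on $k$, adding the pairs $(x_i, y_i)$ in the given order. Throughout, $x_1, \dots, x_k$ are read as elements of $S$; the hypothesis ``$x_1, \dots, x_k \in \I$'' should say ``$\in S$''.

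\emph{Preliminary facts.} First I record a standard fact. If $S \in \I$ and $y \notin S$ with $S \cup \{y\} \notin \I$, then $S \cup \{y\}$ contains a unique circuit $C_y$. This circuit satisfies $C_y = C(S,y) \cup \{y\}$, because $S \setminus \{x\} \cup \{y\}$ is independent exactly when $x$ lies on that circuit. If instead $S \cup \{y\} \in \I$, then $C(S,y) = S$.

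Next I use the hypotheses to tidy up the pairs. If $k \ge 2$, the condition $x_1 \notin C(S,y_i)$ for $i > 1$ forces $S \cup \{y_i\} \notin \I$ for every $i \ge 2$. The only index for which $S \cup \{y_i\}$ might be independent is $i = 1$, and there the claim reduces to the base case handled below. The $x_i$ are pairwise distinct: $x_i = x_j$ with $j < i$ would give $x_j \in C(S,y_i)$, which is forbidden. Similarly the $y_i$ are pairwise distinct, since $y_i = y_j$ with $j < i$ gives $C(S,y_i) = C(S,y_j) \ni x_j$.

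\emph{Base case.} For $k = 1$ the statement is just the definition of $C(S,y_1)$.

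\emph{Inductive step.} The first $k-1$ pairs still satisfy the hypotheses, so by induction
\[
T := S \setminus \{x_1,\dots,x_{k-1}\} \cup \{y_1,\dots,y_{k-1}\} \in \I .
\]
By distinctness, $x_k \in T$ and $y_k \notin T$. Since $S \cup \{y_k\}$ is dependent, let $C$ be its circuit. Then $C \setminus \{y_k\} = C(S,y_k)$, which contains $x_k$ and avoids $x_1, \dots, x_{k-1}$ by hypothesis. Hence
\[
C \setminus \{y_k\} \subseteq S \setminus \{x_1,\dots,x_{k-1}\} \subseteq T .
\]

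Now suppose, for contradiction, that $T \setminus \{x_k\} \cup \{y_k\} \notin \I$. Then it contains a circuit $C'$. This circuit must contain $y_k$, because $T \setminus \{x_k\}$ is independent. Also $C' \neq C$, because $x_k \in C$ but $x_k \notin C'$.

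Apply the circuit elimination axiom to $C$ and $C'$ at the common element $y_k$. This yields a circuit
\[
C'' \subseteq (C \cup C') \setminus \{y_k\} \subseteq T \cup (T \setminus \{x_k\}) = T,
\]
which contradicts $T \in \I$. Hence $T \setminus \{x_k\} \cup \{y_k\} = S \setminus \{x_1,\dots,x_k\} \cup \{y_1,\dots,y_k\}$ is independent, completing the induction.

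\emph{Main obstacle.} The key step is choosing the right order in the induction. If one peels off $(x_1,y_1)$ first, the hypotheses are not obviously preserved relative to the new independent set. Adding the last pair $(x_k,y_k)$ onto $T$ works because the triangular condition guarantees that the circuit of $y_k$ avoids all previously removed elements $x_1,\dots,x_{k-1}$. That is exactly what confines the eliminated circuit $C''$ inside $T$.
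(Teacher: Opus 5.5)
The paper gives no proof of this lemma: it cites Frank's result in the form of Korte--Vygen, Lemma~13.27. Your argument is correct and is essentially that standard proof. You add the pairs in the given order, and the triangular condition keeps the fundamental circuit $C(S,y_k)\cup\{y_k\}$ inside $T\cup\{y_k\}$, so removing $x_k$ destroys it; your circuit-elimination step simply proves inline that this circuit is the unique one. Your reading of the hypothesis as $x_i\in S$ is right, and so is your handling of a possibly independent $S\cup\{y_1\}$ under the paper's definition of $C(S,y)$.
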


\restateLemExchangeCycle*

\begin{proof}
    Let $C'$ be a cycle in $D_{N \rightarrow M}$ on the nodes in $A(C)$ that  contains $a'$ and has a minimum number of nodes among all such cycles.
    Let $A' = \{a_1, \dots, a_k\}$ be the nodes on $C'$, labeled in the order they appear when traversing the cycle in direction of its arcs, starting with $a_1 := a'$.
    Observe that $a_{i+1} \in C(M(A), N(a_i))$ for $i \in [k]$ because $(a_i, a_{i+1}) \in C' \subseteq E_{N \rightarrow M}$.
    Note moreover that, by choice of $C'$, there is no edge $(a_i, a_j)$ in $D_{N \rightarrow M}$ for any $i, j \in [k]$ with $j > i + 1$, i.e., $M(a_j) \notin C(M(A), N(a_i))$ for any $i, j \in [k]$ with $j > i + 1$.
    We may thus apply \cref{lem:frank-exchange} to $M(A)$ with $x_i = M(a_{i+1})$ (where $x_{k+1} := x_1$) and $y_i = N(a_i)$ for $i \in [k]$.
    From this we obtain that $M'(A) = M(A) \setminus \{M(a_1), \dots, M(a_k)\} \cup \{N(a_1), \dots, N(a_k)\}$ is a basis of $\mathcal{I}$, where~$M'$ is the matching described in the statement of the lemma. 
    Thus, $M'$ is indeed an~{\Iconstr} matching.
\end{proof}

\restateCorExchangePath*

\begin{proof}
    Let $N'$ be the {\Iconstr} matching with $N'(a) = N(a)$ for $a \in A \setminus \{b'\}$ and $N'(b') = \emptyset_{b'}$.
    Note that $E_{N \rightarrow M} \subseteq E_{N' \rightarrow M}$, because the only arcs whose presence depends on the value of $N(b')$ or $N'(b')$, respectively, are the out-arcs of $b'$, and $E_{N' \rightarrow M}$ contains the arc $(b', a)$ for every $a \in A$ because $C(M(A), \emptyset_b) = M(A)$.
    In particular, $P$ is an $a'$-$b'$-path in $D_{N' \rightarrow M}$ and can be extended with the arc $(b', a')$ to a cycle $C$ in $D_{N' \rightarrow M}$.
    Applying \cref{lem:exchange-cycle} to $N'$, $M$, and $C$ yields the corollary.
\end{proof}

\subsection{Proof of \cref{lem:B-indegree}}
\label{app:matroids-branching-indegree}

\restateLemBindegree*

\begin{proof}
        Let $a' \in A$.
        Recall that for each $M \in \M$, the paths $P^M_a$ for $a \in A^M_+$ are node-disjoint, and there is at most one such path containing $a'$ for each $M \in \M$.
        Now assume by contradiction there are $M_1, M_2 \in \mathcal{M}$ with $M_1 \neq M_2$ such that $a'$ is contained both in~$P^{M_1}_{a_1}$ and $P^{M_2}_{a_2}$.
        Without loss of generality, we can assume that~$M_2(a') \not\prec_{a'} M_1(a')$~(if this does not hold, we can swap $M_1$ and $M_2$; note that $M_2(a') \prec_{a'} M_1(a')$ and $M_1(a') \prec_{a'} M_2(a')$ cannot hold simultaneously).

        Because $a'$ is contained in $P^{M_1}_{a_1}$, there is an $a_1$-$a'$-path $P_1 \subseteq P^{M_1}_{a_1}$.
        We apply \cref{cor:exchange-path} to the $a_1$-$a'$-path $P_1$ in $D_{M_1 \rightarrow N}$ to obtain $A' \subseteq A(P_1)$ with $a_1, a' \in A'$ and an {\Iconstr} matching $M'$ with $M'(a) = N(a)$ for all $a \in A' \setminus \{a'\}$ and $M'(a) = M_1(a)$ for all $a \in A \setminus A'$.
        Let $\mathcal{M}' := \mathcal{M} \setminus \{M_1\} \cup \{M'\}$.
        We show that $\mathcal{M}'$ dominates $\mathcal{M}$, leading to a contradiction.
        
        Note that $M'(a_1) = N(a_1) \succ_{a_1} M(a_1)$ for all $M \in \mathcal{M}$ because $a_1 \in A_+$.
        We further show that for all $a \in A$ and all $M \in \mathcal{M}$ there is some $\bar{M} \in \mathcal{M}'$ with $M(a) \not\succ_a \bar{M}(a)$.
        So let $a \in A$ and $M \in \mathcal{M}$.
        If $M \neq M_1$ we are done because then $M \in \mathcal{M} \setminus \{M_1\} \subseteq \mathcal{M}'$.
        To show that the statement also holds for $M = M_1$, we distinguish three cases:
        \begin{itemize}
            \item If $a = a'$ then $M_1(a') \not\succ_{a'} M_2(a') \in \mathcal{M}'$.
            \item If $a \in A \setminus A'$, then $M_1(a) = M'(a)$ and hence $M_1(a) \not\succ_a M'(a)$.
            \item If $a \in A' \setminus \{a'\}$ then $a$ is an internal node of the $a_1$-$a'$-path $P_1$, which is a subpath of $P^M_{a_1}$. 
            The internal nodes of the latter path are not in $A_-$ and thus $a' \notin A_-$. Hence $M_1(a) \not\succ_a N(a) = M'(a)$.
        \end{itemize}
        Thus $\mathcal{M}'$ dominates $\mathcal{M}$, a contradiction. 
    \end{proof}

\subsection{Proof of \cref{lem:B-nocycles}}
\label{app:matroids-branching-no-cycle}

We provide the proof of \cref{lem:B-nocycles}.
The proof is based on an application of the following generalization of \cref{cor:exchange-path}, which we prove first.

\begin{corollary}\label{cor:exchange-path-multiple}
    Let $M, N$ be two {\Iconstr} matchings in $G$ and let $P_1, \dots, P_{\ell}$ be node-disjoint paths in $D_{N \rightarrow M}$ with $P_i$ for $i \in [\ell]$ being an $a_i$-$b_i$-path for $i \in [\ell]$.
    Assume further that there is no arc $(a, b) \in E_{N \rightarrow M}$ for any $a \in A(P_i)$ and $b \in A(P_j)$ with $i > j$.
    Then there is $A' \subseteq \bigcup_{i \in [\ell]} A(P_i)$ with $a_1, \dots, a_{\ell}, b_1, \dots, b_{\ell} \in A'$ such that $M'$ defined by 
    $$M'(a) := \begin{cases} 
        M(a) & \text{ for } a \in A \setminus A',\\
        N(a) & \text{ for } a \in A' \setminus \{b_1, \dots, b_{\ell}\},\\
        \emptyset_{b_i} & \text{ for } i \in [\ell]
    \end{cases}$$
        is an {\Iconstr} matching in $G$.
\end{corollary}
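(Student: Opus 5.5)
The plan is to reduce \cref{cor:exchange-path-multiple} to a single application of \cref{lem:frank-exchange}, mirroring the proofs of \cref{lem:exchange-cycle} and \cref{cor:exchange-path}.

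First, as in the proof of \cref{cor:exchange-path}, replace $N$ by the {\Iconstr} matching $N'$ with $N'(b_i) = \emptyset_{b_i}$ for all $i \in [\ell]$ and $N'(a) = N(a)$ otherwise. Since $C(M(A), \emptyset_{b_i}) = M(A)$, every out-arc of $b_i$ is present in $D_{N' \rightarrow M}$, while all other arcs are unchanged. Hence each $P_i$ is still a path in $D_{N' \rightarrow M}$.

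Next, shortcut each path so that it becomes chordless in the forward direction. Among all $a_i$-$b_i$-paths in $D_{N \rightarrow M}$ using only nodes of $A(P_i)$, pick one, $P'_i$, with the fewest nodes, and label its nodes $a_i = v^i_1, \dots, v^i_{k_i} = b_i$. By minimality there is no arc $(v^i_s, v^i_t)$ with $t > s+1$. Then set $A' := \bigcup_i A(P'_i)$, which contains all $a_i, b_i$.

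The exchange is now a sequence of pairs $(x, y)$, with $y = N'(v)$ and $x = M(\text{successor of } v)$. For $v = v^i_s$ with $s < k_i$, take $y = N(v^i_s)$ and $x = M(v^i_{s+1})$. For $v = b_i$, take $y = \emptyset_{b_i}$ and $x = M(a_i)$; this pair is valid because $\emptyset_{b_i}$ closes a circuit with everything in $M(A)$.

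Order the pairs path by path, $P'_1$ first and $P'_\ell$ last. Within path $i$, start with the closing pair $(M(a_i), \emptyset_{b_i})$ and then take $s = 1, \dots, k_i - 1$ in increasing order; I will settle the exact within-path order at the check below. We must verify the hypothesis of \cref{lem:frank-exchange}: $x_j \notin C(S, y_m)$ whenever $j < m$.

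\begin{itemize}
\item \emph{Across paths.} An earlier $x_j$ lies on path $i' < i$ and a later $y_m$ comes from path $i$. Membership would be an arc from $A(P_i)$ to $A(P_{i'})$ with $i > i'$, which the hypothesis forbids. For $y = \emptyset_{b_i}$ there is no issue, because the pair $(M(a_i), \emptyset_{b_i})$ involves only $x = M(a_i)$ on its own path. Elements of path $i$ used in the closing pair would otherwise violate the condition, so that pair has to come first within its path.
\item \emph{Within a path.} For a chordless cycle, the proof of \cref{lem:exchange-cycle} uses the order $y_s = N(v_s)$, $x_s = M(v_{s+1})$. The needed condition there is that $M(v_{t+1}) \notin C(S, N(v_s))$ for $t < s$, i.e.\ no backward arcs $(v_s, v_{t+1})$ with $t+1 \le s$. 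Chordlessness, however, only rules out \emph{forward} chords.
\end{itemize}

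This within-path ordering is the main obstacle. I would resolve it as \cref{lem:frank-exchange} is applied in \cref{lem:exchange-cycle} via Korte--Vygen: list the pairs in reverse order along the path, $s = k_i - 1$ down to $1$, so that the condition becomes exactly the absence of forward chords. With this order the closing pair must go last within its path. It then fits, because $\emptyset_{b_i}$'s circuit is all of $S$, but the earlier $x$'s $M(v^i_{s+1})$ lie in it. So I would instead place the closing pair first overall in its block and accept that this pair's $x = M(a_i)$ must avoid later circuits. That fails only if some $v^i_s$ has an arc back to $a_i$, which can be removed by re-choosing $a_i$ as the last such node, since $a_i$ need only lie in $A'$.

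Once the order is fixed, \cref{lem:frank-exchange} yields that $M'(A)$ is independent. Since each agent receives exactly one object, $M'$ is a matching, which gives the corollary.
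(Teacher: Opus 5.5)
\textbf{The gap: the closing pairs.} Your single application of \cref{lem:frank-exchange} fails at the closing pairs $(M(a_i),\emptyset_{b_i})$. Since $C(S,\emptyset_{b_i}) = S$, the requirement $x_j \notin C(S,y_m)$ for all $j<m$ can hold for such a pair only at position $m=1$. So each closing pair must precede \emph{every} other pair, not just the pairs of its own block. For $\ell \ge 2$ there are $\ell$ closing pairs, and no ordering satisfies the lemma's hypothesis.

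Your ``across paths'' bullet dismisses exactly this case. But the corollary's cross-path hypothesis is about $D_{N\rightarrow M}$. In $D_{N'\rightarrow M}$ the node $b_i$ has arcs to all agents, in particular into $P_1,\dots,P_{i-1}$, as you noted yourself in your first step. Your later patch (``re-choosing $a_i$'') only addresses the within-block issue of arcs back into $a_i$, and it is not spelled out. It is also unnecessary. Note that the statement requires the given $a_i$ to lie in $A'$ and hence to receive $N(a_i)$, which is what \cref{lem:B-nocycles} uses.

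\textbf{The repair.} Drop the closing pairs altogether. Apply \cref{lem:frank-exchange} only to the pairs $(M(v^i_{s+1}),N(v^i_s))$, ordered block by block $P'_1,\dots,P'_\ell$ and in reverse along each path. Your two checks (no forward chords inside a block, the hypothesis between blocks) then go through as written. Arcs back into $a_i$ become harmless, because $M(a_i)$ is never one of the $x_j$.

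Afterwards, delete $M(a_1),\dots,M(a_\ell)$ and add $\emptyset_{b_1},\dots,\emptyset_{b_\ell}$. A set is independent in $\mathcal{I}_{\emptyset}$ iff its intersection with the original ground set is in $\mathcal{I}$ and it has at most $|A|$ elements. So independence is preserved, and the set still has $|A|$ distinct elements. That distinctness is what makes $M'$ a matching; your last sentence should use it, not merely that each agent receives one object.

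\textbf{Comparison with the paper.} The paper takes a different route. It applies \cref{cor:exchange-path} one path at a time, producing $M=M_1,M_2,\dots,M_{\ell+1}$, so each $\emptyset_{b_k}$-exchange sits in its own application of Frank's lemma. It then shows by induction that for agents $a$ on later paths, the circuit $C(M_k(A),N(a))$ avoids the objects changed in step $k$ and is therefore unchanged. Hence the later paths and the no-backward-arc condition survive each step. Your repaired one-shot argument is more direct; the paper's is modular, built on the single-path corollary.
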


\begin{proof}
Let $M_1 := M$.
We construct the {\Iconstr} matching $M_{k+1}$ for $k \in \{1, \dots, {\ell}\}$ inductively by applying \cref{cor:exchange-path} to $M_k$ and the path $P_k$ and letting $M_{k+1}$ be the resulting $M'$.
To show that this is possible, we prove that all arcs of the paths $P_k, \dots, P_{\ell}$ are contained $D_{N \rightarrow M_k}$ and that $D_{N \rightarrow M_k}$ contains no arc $(a, b)$ with $a \in A(P_i)$ and $b \in A(P_j)$ for $i, j \in \{k, \dots, \ell\}$ with $i > j$.

The base case $k=1$ follows trivially from the premise of the lemma.
Thus consider the induction step for $k > 1$:
By induction hypothesis it holds that $D_{N \rightarrow M_k}$ contains the paths $P_k, \dots, P_{\ell}$ but no arc $(a, b)$ with $a \in A(P_i)$ and $b \in A(P_j)$ for $i, j \in \{k, \dots, \ell\}$ with $i > j$.
We construct $M_{k+1}$ from $M_k$ by applying \cref{cor:exchange-path} to $M_k$ with the path $P_k$, resulting in $A'_k \subseteq A(P_k)$ with $a_k, b_k \in A'_k$ and an {\Iconstr} matching $M_{k+1}$ with $M_{k+1} = N(a)$ for $a \in A_k' \setminus \{b_k\}$, $M_{k+1}(b_k) = \emptyset_{b_{\ell}}$, and $M_{k+1}(a) = M_k(a)$ for $a \in A \setminus A'_k$.
Let $A_{k+1} := \bigcup_{j=k+1}^{\ell} A(P_j)$.
Note that $M_{k+1}(a) = M_{k}(a)$ for $a \in A_{k+1} \subseteq A \setminus A(P_k) \subseteq A \setminus A'_k$.
Note further that the induction hypothesis implies the absence of arcs $(a, b)$ with $a \in A_{k+1}$ to $b \in A(P_k)$ in $D_{N \rightarrow M_k}$ and hence $M_k(b) \notin C(M_k(A), N(a))$ for any $a \in A_{k+1}$ and any $b \in A(P_k)$.
In particular, if $M_k(b) \in C(M_k(A), N(a))$ for $a \in A_{k+1}$, then $M_{k+1}(b) = M_{k}(b)$, from which we conclude $C(M_k(A), N(a)) \subseteq M_{k+1}(A) \cup \{N(a)\}$. 
As $M_{k+1}(A) \cup \{N(a)\}$ only contains a unique circuit, this implies $C(M_{k+1}(A), N(a)) = C(M_k(A), N(a))$ for all $a \in A_{k+1}$.
We conclude that for all $a \in A_{k+1}$ we have both $M_{k+1}(a) = M_{k}(a)$ and  $C(M_{k+1}(A), N(a)) = C(M_k(A), N(a))$, which implies that for $a, b \in A_{k+1}$ we have $(a, b) \in E_{N \rightarrow M_{k+1}}$ if and only if $(a, b) \in E_{N \rightarrow M_{k}}$.
Thus, we have established that the induction hypothesis continues to hold for $k+1$.

In particular, this process yields $A' := \bigcup_{k=1}^{\ell} A'_k$ and an {\Iconstr} matching $M' := M_{\ell+1}$ such that $M'(a) = M(a)$ for $a \in A \setminus A'$, $M'(a) = M_k(a) = N(a)$ for $a \in A_k \setminus \{b_k\}$ and $M'(a) = \emptyset_{b_k}$ for all $k \in [\ell]$, which completes the proof of the corollary.
\end{proof}

\restateLemBnoCycles*

\begin{proof}
        By contradiction assume that there is a directed cycle $C \subseteq B$. 
        Recall that~$B$ is the union of paths $P^M_a$ for $M \in \M$ and $a \in A^M_+$.
        We first observe that for any $M \in \M$ and $a \in A^M_+$, either $C \cap P^M_a = \emptyset$ or $C \cap P^M_a$ is a prefix of $P^M_a$, i.e., a subpath of $P^M_a$ starting at $a \in A_+$.
        This is true because no path $P^M_a$ can enter $C$ from outside, as every node has at most one incoming arc in $B$.
        Therefore, $C$ is the concatenation of paths, each corresponding to a prefix of $P^M_a$ for some $M \in \M$ and $a \in A^M_+$.
        Moreover, the cycle $C$ cannot contain any node of $A_-$, as these nodes do not have outgoing arcs in $B$.

        Let $\bar{D} = (A, \bar{E})$ with $\bar{E} := \bigcup_{M \in \M} E_{N \rightarrow M}$ be the digraph resulting from union of the exchange graphs.
        Note that $C \subseteq B \subseteq \bar{E}$.
        By the above analysis, $C$ is a directed cycle in $\bar{D}$ with the following properties:
        \begin{enumerate}
            \item No agent on $C$ is in $A_-$.\label{prop:cycle-no-Aminus}
            \item There are agents $a_1, \dots, a_{\ell} \in A_+$ on $C$ and paths $P_1, \dots, P_{\ell}$ contained in $C$ such that  the following holds:\label{prop:cycle-paths}
            \begin{enumerate}
                \item For each $i \in [\ell]$, $P_i$ is the unique $a_i$-$a_{i+1}$-path (where $a_{\ell+1} := a_1$) consistent with the direction of the cycle.\label{eq:cycle-paths-first}
                \item For each $i \in [\ell]$, there is $M_i \in \M$ such that $P_i$ is contained in $D_{M_i}$.
                \item For each $i \in [\ell]$, it holds that $M_i \neq M_{i+1}$ (where $M_{i+1} = M_1$).\label{eq:cycle-paths-last}
            \end{enumerate}
        \end{enumerate}
        We will iteratively replace $C$ by a smaller cycle in $\bar{D}$, maintaining properties \eqref{prop:cycle-no-Aminus} and~\eqref{prop:cycle-paths}, until we obtain the following additional property: 
        \begin{itemize}
            \item[(3)] For any $i, j \in [\ell]$ with $i > j$ and $M_i = M_j$, there is no arc $(a, b) \in E_{N \rightarrow M_i}$ with $a$ on $P_i$ and $b$ on $P_j$.
        \end{itemize}
        
        Indeed, note that if (3) is violated, we can replace $C$ by the concatenation of the paths $P_{j+1}, \dots, P_{i-1}$ and the path $P'$ resulting from traversing $P_i$ from the end of $P_{i-1}$ up to $a$, then taking $(a, b)$, and following $P_j$ from $b$ to the start of $P_{j+1}$.
        Note that the resulting cycle fulfills property~\eqref{prop:cycle-no-Aminus} because the agents on it are a subset of the original cycle $C$, and it fulfills property~\eqref{prop:cycle-paths} because $P'$ is contained in $E_{N \rightarrow M_i}$ and thus together with the paths $P_{j+1}, \dots, P_{i-1}$ and their respective endpoints, it fulfills properties \eqref{eq:cycle-paths-first} to \eqref{eq:cycle-paths-last} (after appropriate relabeling).
        Because this procedure reduces the number of arcs on the cycle, it can only be applied a finite number of times before we obtain a cycle $C$ meeting properties \eqref{prop:cycle-no-Aminus}, \eqref{prop:cycle-paths}, and $(3)$.

        Now for each $M \in \M$, let $I_M := \{i \in \ell : M_i = M\}$.
        If $I_M \neq \emptyset$ we can apply \cref{cor:exchange-path-multiple} to $M$ and the paths $P_i$ with $i \in I_M$, which by property~(3) fulfill the premise of the corollary, a set $A'_M$ of agents appearing on the paths $P_i$ for $i \in I_M$, with $a_i \in A'_M$ for each $i \in I_M$, and an {\Iconstr} matching $M'_M$ such that $M'_M(a) = N(a)$ for $a \in A'_M \setminus \{a_{i+1} : i \in I_M\}$. 
        Replacing each such $M$ with the corresponding $M'_M$ we obtain a family of {\Iconstr} matchings $\M'$ with $|\M'| = |\M|$.
        Now consider any agent $a \in A$.
        Note that if $a \in A'_M \notin \{a_1, \dots, a_{\ell}\}$ for some $\bar{M} \in \M$, then $M'_{\bar{M}}(a) = N(a) \not\prec_a M(a)$ for all $M \in \M$.
        Moreover, if $a = a_i$ for some $i \in [\ell]$, then $M'_{M_i}(a) = N(a) \succ_a M(a)$ for all $M \in \M$.
        Finally, for all other agents, the objects assigned by the matchings in $\M$ are exactly the same as those assigned in $\M$.
        Thus $\M'$ dominates $\M$, contradicting the latter's Pareto-optimality.
    \end{proof}

\section{Missing Proofs from \cref{sec:partial-order-hardness}}
\label{app:hardness}

\subsection{Proof of \cref{thm:NP-condorcet}}
\label{app:NP-condorcet}

\restateThmNPCondorcet*

We split the proof into four parts. First we show containment in NP, then we describe the construction of our reduction (from \textsc{$\ell$-Dimensional Matching}), then we show how a Pareto-optimal matching in the resulting instance can be constructed from such an $\ell$-dimensional matching, and vice versa.
Together, these four steps comprise the entire proof of the theorem.

\subsubsection*{Containment in NP}
Note that for a given set of matchings $\mathcal{M}$ in $G = (A \cup O, E)$ we can check whether it is popular as follows: For $e = (a, o) \in E$, let $$w_e := \begin{cases}
    2 & \text{ if } o \succ_a M(a) \text{ for all } M \in \mathcal{M},\\
    0 & \text{ if } M(a) \succ_a o \text{ for some } M \in \mathcal{M},\\
    1 & \text{otherwise}.
\end{cases}$$
Note that this definition of $w$ implies $\phi(\mathcal{M}, M') = |A| - \sum_{e \in M'} w_e$ for every matching~$M'$ in~$G$.
Hence, $\mathcal{M}$ is popular if and only if $\sum_{e \in M'} w_e \leq |A|$ for every matching~$M'$ in~$G$. 
The latter can be easily checked by computing a maximum weight matching~$M'$ in~$G$ with respect to $w$.

\subsubsection*{Construction: Reduction from \textsc{$\ell$-Dimensional Matching}}
    We prove the theorem via reduction from \textsc{$\ell$-Dimensional Matching}:
    Given $\ell$ disjoint sets $X_1, \dots, X_{\ell}$ with $|X_i| = m$ for all $i \in [\ell]$ and $\mathcal{E} \subseteq X_1 \times \dots \times X_{\ell}$, find an \emph{$\ell$-dimensional matching}, i.e., a subset $\mathcal{E}' \subseteq \mathcal{E}$ such that every element of $X := X_1 \cup \dots \cup X_{\ell}$ appears in exactly one element of $\mathcal{E}'$. 
    It is well-known that \textsc{$\ell$-Dimensional Matching} is NP-hard for any fixed $\ell \geq 3$.

    To show NP-hardness for deciding the existence of a popular set of matchings of cardinality $k$, we reduce from \textsc{$\ell$-Dimensional Matching} with $\ell = k + 1$.
    For this, we first fix an ordering of the elements in each $X_j$, so that we can write $X_j := \{x_{j1}, \dots, x_{jm}\}$ for each $j \in [\ell]$.
    
    The set of agents $A$ is constructed as follows:
    \begin{itemize}
        \item For every $x \in X_{\ell}$, there is an agent $a_x$.
        \item For each $h \in [m]$, there is a set $\hat{A}_h$ of $(k-1)$ agents.
        \item Finally, there is a set $A_0$ of $k^3m + k^2m + 1$ agents.
    \end{itemize}
    Thus $A = A_0 \cup \{a_x : x \in X_{\ell}\} \cup \bigcup_{h \in [m]} \hat{A}_h$. 
    The agents within the same set $A_0$ or $\hat{A}_h$ for $h \in [m]$, respectively, will have identical preferences and are thus not further distinguished.
    
    The set of objects $O$ is constructed as follows:
    \begin{itemize}
        \item There is an object $o_{y}$ for every $y \in Y := X_1 \cup \dots \cup X_{\ell-1}$.
        \item There is an object $o_{x_1, \dots, x_{\ell}}$ for every $(x_1, \dots, x_{\ell}) \in \bar{\mathcal{E}} := (X_1 \times \dots \times X_{\ell}) \setminus \mathcal{E}$.
        \item There is an object $\bar{o}_i$ for every $i \in [\ell-1]$.
        \item For $j \in [\ell-1]$, and $h \in [m]$, there is an object $\hat{o}_{jh}$.
        \item Moreover, $O$ includes two additional sets of objects, $\tilde{O}$ with $|\tilde{O}| = k^2m$ and $O_0$ with $|O_0| = |A_0|$ (the objects within each respective set will be treated identically by the agents and are thus not further distinguished). 
    \end{itemize}

    The edges and preferences of the agents are as follows (where no preference is specified for a pair of objects adjacent to an agent, the agent is indifferent for these objects):
    \begin{itemize}
        \item Each agent $a_x$ for $x \in X_{\ell}$ is adjacent to each object $o_y$ for $y \in Y$, 
        to each object $\bar{o}_i$ for $i \in [\ell-1]$, and
        to each object $o_{x_1,\dots, x_{\ell}}$ for $(x_1, \dots, x_{\ell}) \in \bar{\mathcal{E}}$ with $x_{\ell} = x$. 
        For $i \in [\ell-1]$, the agent prefers any object $o_y$ with $y \in X_i$ to $\bar{o}_i$.
        Moreover, for any $(x_1, \dots, x_{\ell}) \in \bar{\mathcal{E}}$ with $x_{\ell} = x$ and any $i \in [\ell-1]$, the agent prefers any object $o_{y}$ with $y \in X_i \setminus \{x_i\}$ to $o_{x_1, \dots, x_{\ell}}$.
        \item Each agent $a \in \hat{A}_h$ for $h \in [m]$ is adjacent to the objects $o_{x_{jh}}$ and $\hat{o}_{jh}$ for each $j \in [\ell - 1]$.
        The agent prefers $o_{x_{jh}}$ over $\hat{o}_{jh}$ for $j \in [\ell-1]$.
        \item Every agent $a \in A_0$ is adjacent to each object $o_y$ for $y \in Y$,
        to each object in $O_0$, and to each object in $\tilde{O}$.
        The agent prefers any object $o_y$ for $y \in Y$ to any object in $O_0$.
    \end{itemize}
    In the following, we show that there is a popular set of $k$ matchings in the constructed instance if and only if there exists an $\ell$-dimensional matching.
    For the remainder of the analysis, we define $O_Y := \{o_y : y \in Y\}$.

\subsubsection*{Constructing a popular set of matchings from an $\ell$-dimensional matching}
    We show how to construct a popular set of $k$ matchings $M_1, \dots, M_k$ from an $\ell$-dimensional matching $\mathcal{E}' \subseteq \mathcal{E}$.
    To this end, we first construct a $k$-matching $M^*$.
    For $x \in X_{\ell}$, match agent $a_x$ to the objects $o_{x_1}, \dots, o_{x_{\ell - 1}}$ where $(x_1, \dots, x_{\ell})$ is the unique tuple in $\mathcal{E}'$ with $x_{\ell} = x$ (recall that $\ell = k+1$ and hence $a_x$ is matched to exactly $k$ objects).
    For $h \in [m]$ we match each agent in $\hat{A}_h$ to each object $o_{x_{jh}}$ for $j \in [\ell - 1]$ (recall that $|\hat{A}_h| = k - 1$ and note that $o_{x_{jh}}$ is matched to exactly one agent $a_x$ for some $x \in X_{\ell}$, and that therefore $o_{x_{jh}}$ is matched to $k$ agents in total).
    Moreover, $M^*$ contains an arbitrary perfect matching between the agents in $A_0$ and the objects in $O_0$.
    We show that any decomposition of $M^*$ into $k$ matchings $M_1, \dots, M_k$ is popular.
    For this, consider any matching $M'$ in $G$ and let
    \begin{align*}
        A^+ & := \{a \in A : M'(a) \succ_a M_i(a) \text{ for all } i \in [k]\}\text{ and }\\
        A^- & := \{a \in A : M_i(a) \succ_a M'(a) \text{ for some } i \in [k]\}.
    \end{align*}
    We will show that $|A^+| \leq |A^-|$, revealing that $M_1, \dots, M_k$ is popular.
    We first establish the following two claims.

    \begin{claim}\label{clm:NP-Aplus}
        If $a \in A^+$ then $a \in A_0$ and $M'(a) \in O_Y$.
    \end{claim}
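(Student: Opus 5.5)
We prove the claim by going through the agent types one at a time. For each agent $a \notin A_0$ we show that no adjacent object is strictly better for $a$ than all of $M_1(a),\dots,M_k(a)$. For agents in $A_0$ we show that the only objects that can beat their $M^*$-partner are those in $O_Y$. In every case it suffices to look at the set $M^*(a)$ of objects assigned to $a$ across the $k$ matchings.

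\emph{Agents $a \in \hat{A}_h$.} Such an agent is matched in $M^*$ to $o_{x_{jh}}$ for every $j \in [\ell-1]$. Its only strict preferences are $o_{x_{jh}} \succ_a \hat{o}_{jh}$. An object strictly better than all of $M^*(a)$ would in particular have to be strictly better than some $o_{x_{jh}}$. But each $o_{x_{jh}}$ is maximal in $\succ_a$, so no such object exists, and $a \notin A^+$.

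\emph{Agents $a_x$ with $x \in X_\ell$.} Here $M^*(a_x) = \{o_{x_1},\dots,o_{x_{\ell-1}}\}$, where $(x_1,\dots,x_\ell) \in \mathcal{E}'$ with $x_\ell = x$. Every object $o_y$ with $y \in Y$ is maximal for $a_x$: the agent only prefers objects of the form $o_y$ to other objects, and never prefers anything to an $o_y$. So $M'(a_x)$ would have to be $\bar{o}_i$ or some $o_{z_1,\dots,z_\ell}$ with $z_\ell = x$. The case $\bar{o}_i$ fails because $o_{x_i} \succ \bar{o}_i$ with $x_i \in X_i$. For the case $o_{z_1,\dots,z_\ell}$, note that $(z_1,\dots,z_\ell) \in \bar{\mathcal{E}}$ while $(x_1,\dots,x_\ell) \in \mathcal{E}$, and both end in $x$. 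Hence $z_i \neq x_i$ for some $i \in [\ell-1]$. Then $o_{x_i}$ with $x_i \in X_i \setminus \{z_i\}$ is preferred to $o_{z_1,\dots,z_\ell}$, so $a_x$ does not strictly prefer $M'$. This is the only step that uses that $\mathcal{E}'$ consists of tuples from $\mathcal{E}$, so it is the crux of the argument.

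\emph{Agents $a \in A_0$.} Such an agent is matched in $M^*$ to a single object of $O_0$, and it has no matching in any other $M_i$. Under our lifting, being unmatched is worse than any object, but every object beats $\emptyset$ regardless. The strict comparison that matters is against the one assigned object $o \in O_0$. The only objects strictly preferred to $o$ are those in $O_Y$, because objects in $\tilde{O}$ and $O_0$ are incomparable with objects in $O_0$. If $a$ is unmatched in some $M_i$, that does not create a strict preference for $M'$, since $A^+$ requires beating every $M_i(a)$ and the definition of $\succ$ with $\emptyset$ is already handled by $o$. Hence $a \in A^+$ forces $M'(a) \in O_Y$.

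The only point that needs care is how to read "for all $i$" when $M_i(a) = \emptyset$. We interpret $A^+$ via the best element of $M^*(a)$, as in the definition of $\varphi$. Since $o \succ_a \emptyset$, the condition reduces to beating $o$.
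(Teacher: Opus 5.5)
Your proof is correct and uses the same case analysis as the paper. Agents outside $A_0$ receive an undominated object of $O_Y$ in every $M_i$, so nothing can beat all of their $M_i(a)$; an agent in $A_0$ can beat its $O_0$-object only with an object of $O_Y$. One remark: for the agents $a_x$, the maximality of $M_i(a_x)\in O_Y$, which you note yourself, already settles the case exactly as for $\hat{A}_h$. The detour through $\mathcal{E}'\subseteq\mathcal{E}$ is therefore superfluous rather than the crux; that property is only needed for the companion claim about $A^-$.
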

    \begin{proof}\renewcommand{\qedsymbol}{$\blacklozenge$}
        If $a \in A^+$ then $M'(a) \succ_a M_i(a)$ for all $i \in [k]$.
        Note however, that every agent $a_x$ for $x \in X_{\ell}$ and every agent $a \in \hat{A}_h$ for $h \in [m]$ is assigned an undominated object $o \in O_Y$ in each $M_i$, and thus the only agents in $A^+$ must be from $A_0$. Recall that all agents in $A_0$ are matched to some object in $O_0$ in some $M_i$. Moreover, the only objects that agents in $A_0$ prefer over objects in $O_0$ are the objects in $O_Y$.
        Therefore, $a \in A^+$ implies $a \in A_0$ and $M'(a) \in O_Y$.
    \end{proof}
    
    \begin{claim}\label{clm:NP-ax-dominance}
        Let $a \in A \setminus A_0$. If $a \notin A^-$ then $M'(a) \in O_Y$.
    \end{claim}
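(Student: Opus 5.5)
The plan is a direct case analysis over the two kinds of agents outside $A_0$: the agents $a_x$ for $x \in X_{\ell}$, and the agents in some $\hat{A}_h$. For each kind, I will go through the objects $M'(a)$ can be other than one in $O_Y$. In every such case I will exhibit some $i \in [k]$ with $M_i(a) \succ_a M'(a)$, which puts $a$ in $A^-$. The contrapositive is exactly the claim.

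\emph{Unmatched case.} First, suppose $M'(a) = \emptyset$. In $M^*$, every agent $a_x$ is matched to $k \geq 1$ objects, and so is every agent in $\hat{A}_h$. Hence some $M_i$ assigns $a$ an object, and any assigned object is preferred to being unmatched. So $a \in A^-$, and from now on I assume $M'(a) \neq \emptyset$.

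\emph{Agents $a_x$ with $x \in X_{\ell}$.} Let $(x_1, \dots, x_{\ell}) \in \mathcal{E}'$ be the unique tuple with $x_{\ell} = x$. By construction, $M^*$ matches $a_x$ to $o_{x_1}, \dots, o_{x_{\ell-1}}$. Since $M_1, \dots, M_k$ decompose $M^*$, for each $i \in [\ell-1]$ some $M_j$ has $M_j(a_x) = o_{x_i}$. The neighbours of $a_x$ outside $O_Y$ are of two types.
\begin{itemize}
\item $M'(a_x) = \bar{o}_i$ for some $i \in [\ell-1]$. Since $x_i \in X_i$, agent $a_x$ prefers $o_{x_i}$ to $\bar{o}_i$, and $o_{x_i}$ is assigned to $a_x$ in some $M_j$.
\item $M'(a_x) = o_{t_1, \dots, t_{\ell}}$ for some $(t_1, \dots, t_{\ell}) \in \bar{\mathcal{E}}$ with $t_{\ell} = x$. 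Because $(t_1, \dots, t_{\ell}) \notin \mathcal{E}$ but $(x_1, \dots, x_{\ell-1}, x) \in \mathcal{E}$, the two tuples must differ in some coordinate $i \in [\ell-1]$, that is, $x_i \in X_i \setminus \{t_i\}$. By the specified preferences, $a_x$ then prefers $o_{x_i}$ to $o_{t_1, \dots, t_{\ell}}$, and again $o_{x_i}$ is assigned to $a_x$ in some $M_j$.
\end{itemize}
In both cases $a_x \in A^-$.

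\emph{Agents $a \in \hat{A}_h$.} The only neighbours of $a$ outside $O_Y$ are the objects $\hat{o}_{jh}$. In $M^*$, agent $a$ is matched to every $o_{x_{jh}}$, and $a$ prefers $o_{x_{jh}}$ to $\hat{o}_{jh}$. So if $M'(a) = \hat{o}_{jh}$, the matching $M_i$ containing the edge $(a, o_{x_{jh}})$ satisfies $M_i(a) \succ_a M'(a)$, and $a \in A^-$.

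\emph{Main obstacle.} The only nontrivial step is the case $M'(a_x) = o_{t_1, \dots, t_{\ell}}$. Here one has to use that $\mathcal{E}'$ consists of tuples in $\mathcal{E}$, while the object corresponds to a tuple outside $\mathcal{E}$ with the same last coordinate. This forces a mismatched coordinate $i$, and it is exactly this coordinate that triggers the preference $o_{x_i} \succ_{a_x} o_{t_1, \dots, t_{\ell}}$. Every other case follows immediately from the listed preferences.
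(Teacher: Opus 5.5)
Your proof is correct and follows the paper's argument essentially verbatim: the same case split into agents $a_x$ and agents in $\hat{A}_h$, with the mismatched coordinate between the tuple in $\mathcal{E}'$ and the tuple in $\bar{\mathcal{E}}$ supplying the dominating object $o_{x_i}$. Your explicit treatment of the case $M'(a) = \emptyset$ is a small but welcome addition, since the paper's proof passes over it silently.
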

    \begin{proof}\renewcommand{\qedsymbol}{$\blacklozenge$}
        We distinguish two cases. 
        
        First, assume $a = a_x$ for some $x \in X_{\ell}$.
        Recall that in the $k$-matching $M^*$, the agent $a$ is matched to the objects $o_{x_1}, \dots, o_{x_{\ell - 1}}$ where $(x_1, \dots, x_{\ell})$ is the unique tuple in $\mathcal{E}'$ with $x_{\ell} = x$.
        In particular, this implies that for each $j \in [\ell-1]$ there is $i \in [k]$ and $y \in X_j$ such that $M_i(a) = o_{y} \succ_a \bar{o}_j$.
        Moreover, consider any $(\bar{x}_1, \dots, \bar{x}_{\ell}) \in \bar{\mathcal{E}}$ with $\bar{x}_{\ell} = x$. 
        Because $(x_1, \dots, x_{\ell}) \in \mathcal{E}' \subseteq \mathcal{E}$ there must be $j \in [\ell-1]$ such that $\bar{x}_j \neq x_j$.
        But then there is $i \in [k]$ with $M_i(a) = o_{x_j} \succ_a o_{\bar{x}_1, \dots, \bar{x}_{\ell}}$.
        Hence, for every object $o$ that is adjacent to $a$ and not in $O_Y$, there is some $i \in [k]$ such that $M_i(a) \succ_a o$ in this case, which implies $M'(a) \in O_Y$.

        Second, if $a \in \hat{A}_h$ for $h \in [m]$, then for every $j \in [\ell-1]$ there is $i \in [k]$ such that $M_i(a) = o_{x_{jh}} \succ_a \hat{o}_{jh}$.
        Since the objects $\hat{o}_{jh}$ are the only objects adjacent to $a$ outside of $O_Y$, we again conclude that $M'(a) \in O_Y$, unless $a \in A^-$.
    \end{proof}
    
    Let $A' := \{a \in A \setminus A_0 : M'(a) \in O_Y\}$.
    Note that \cref{clm:NP-Aplus} implies $|A^+| \leq |O_Y| - |A'|$, while \cref{clm:NP-ax-dominance} implies $|A \setminus A_0| \leq |A^-| + |A'|$.
    Putting this together, we obtain $$|A^+| \leq |O_Y| - |A'| \leq |O_Y| + |A^-| - |A \setminus A_0| = (\ell-1)m + |A^-| - km = |A^-|$$
    as desired (where the final identity follows from $\ell-1 = k$).
    This implies that $M_1, \dots, M_k$ is indeed popular.

\subsubsection*{Constructing an $\ell$-dimensional matching from a popular set of matchings}
    We now show the converse how to construct an $\ell$-dimensional matching $\mathcal{E}' \subseteq \mathcal{E}$ from a popular set of $k$ matchings $M_1, \dots, M_k$ from.
    We start by establishing the following claim:
    \begin{claim}\label{clm:NP-domination}
            For every $a \in A \setminus A_0$, and every $o \in O \setminus O_Y$ that is adjacent to $a$, there is $i \in [k]$ such that $M_i(a) \succ_a o$.
    \end{claim}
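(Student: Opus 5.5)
We argue by contradiction. Suppose some $a \in A \setminus A_0$ and some object $o \in O \setminus O_Y$ adjacent to $a$ satisfy $M_i(a) \not\succ_a o$ for all $i \in [k]$. We will build a matching $M'$ with $\varphi(M', \{M_1,\dots,M_k\}) > \varphi(\{M_1,\dots,M_k\}, M')$, contradicting popularity.

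The engine of the argument is a slot count on the $k$ matchings. Agents in $A_0$ strictly improve exactly when they move from $O_0$ (or from being unmatched) to $O_Y$. The sets $O_Y$ and $\tilde O$ are the only objects that $A_0$ agents weakly prefer to $O_0$. They offer at most $k\cdot km = k^2m$ and $k\cdot k^2m = k^3m$ agent-slots across the $k$ matchings, respectively. Call an agent of $A_0$ \emph{poor} if in every $M_i$ it is unmatched or matched into $O_0$. Let $s$ be the number of $O_Y$-slots, i.e.\ pairs $(i,b)$ with $M_i(b) \in O_Y$, used by agents $b \in A \setminus A_0$. Then at least $|A_0| - (k^2m - s) - k^3m = 1 + s$ agents of $A_0$ are poor. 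Each poor agent joins $A^+$ as soon as $M'$ gives it any object of $O_Y$.

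Next we bound $s$ from below by the structure of the non-$A_0$ agents, in the spirit of \cref{clm:NP-ax-dominance}. An agent $b \in \hat A_h$ is adjacent to only $k$ objects of $O_Y$, namely $o_{x_{jh}}$ for $j \in [\ell-1]$. It has $\hat o_{jh}$ dominated for all $j$ only if all $k$ of its slots are these objects. An agent $a_x$ has every $\bar o_j$ and every $o_{\bar x_1,\dots,\bar x_\ell}$ dominated only if its slots hit every $X_j$ and "refute" every non-edge tuple. The plan is to define $M'$ as follows. Every non-$A_0$ agent $b$ that has some undominated non-$O_Y$ object $o_b$ receives $o_b$, with $a$ receiving $o$; such agents are then not in $A^-$. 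The remaining non-$A_0$ agents are sent into $O_Y$ or left unmatched, possibly landing in $A^-$. Every unused object of $O_Y$ goes to a distinct poor $A_0$ agent, and all other $A_0$ agents keep an object from $O_0\cup\tilde O$ that is not worse than theirs, so they are not in $A^-$. We then compare $|A^+|$, which is at least the number of poor agents receiving $O_Y$, against $|A^-|$, which is at most the number of "fully dominated" non-$A_0$ agents pushed out of $O_Y$. The goal is the strict inequality $|A^+| > |A^-|$, using $|O_Y| = km = |A\setminus A_0|$ and the surplus $+1$ in $|A_0|$.

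The main obstacle is making the non-$O_Y$ objects $o_b$ assigned to different deficient agents pairwise distinct. The objects $\hat o_{jh}$ are shared among the $k-1$ agents of $\hat A_h$, and the $\bar o_j$ are shared by all $a_x$. I would first try to handle this by a careful choice: a single deficient agent (namely $a$ itself) suffices. Every other non-$A_0$ agent $b$ is given an object of $O_Y$ that appears among its own $M_i(b)$, which is undominated for $b$ since objects of $O_Y$ are maximal for non-$A_0$ agents. Only the leftover $O_Y$ objects go to poor agents. Agent $a$ frees at least one $O_Y$ object compared with the "all agents in $O_Y$" count $km$, and the $+1$ poor agent then strictly beats. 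Verifying that the $O_Y$ objects chosen for distinct $b$ can be made distinct, via a Hall-type argument on the slot multigraph (each $O_Y$ object has at most $k$ slots in total), is where I expect the real work to lie.
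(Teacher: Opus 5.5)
Your overall strategy matches the paper's construction: a single deficient agent $a$ receives $o$, the other agents of $A\setminus A_0$ go to objects of $O_Y$, a leftover $O_Y$ object goes to a poor agent of $A_0$, and the rest of $A_0$ is parked in $\tilde O\cup O_0$. Your slot count showing that a poor agent exists is correct; the paper uses the cruder bound $|A_0|>k(|\tilde O|+|O_Y|)$.

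There is a genuine gap, in the step you yourself flag as the real work. You require each $b\in A\setminus(A_0\cup\{a\})$ to receive an $O_Y$ object that occurs among its own $M_1(b),\dots,M_k(b)$. You then hope to get distinct representatives via Hall's theorem on the slot multigraph. Nothing in the hypotheses guarantees such a system, and it can fail:
\begin{itemize}
\item $b$ may be unmatched in every $M_i$;
\item $b$ may be matched only to objects outside $O_Y$, such as $\hat o_{jh}$ or $\bar o_j$;
\item two agents may each have the same single $O_Y$ object, e.g.\ $M_1(b_1)=o_y=M_2(b_2)$, so Hall fails for $\{b_1,b_2\}$.
\end{itemize}
In these cases your $M'$ cannot be built as specified. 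Patching it by leaving $b$ unmatched puts $b$ into the set of agents preferring $\mathcal{M}$. Patching it by giving $b$ a non-$O_Y$ object brings back the distinctness problem you were trying to avoid.

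The missing observation is that the restriction is unnecessary. Every object of $O_Y$ is $\succ_b$-maximal for every $b\in A\setminus A_0$ adjacent to it, so you may give $b$ \emph{any} adjacent $O_Y$ object, whatever $M_1,\dots,M_k$ are. What you then need is a fact about the graph alone: there is a perfect matching between $(A\setminus(A_0\cup\{a\}))\cup\{a_0\}$ and $O_Y$, where $a_0$ is a poor agent. It exists for the following reasons.
\begin{itemize}
\item The neighbourhoods $\{o_{x_{1h}},\dots,o_{x_{kh}}\}$ of the groups $\hat A_h$ partition $O_Y$ into blocks of size $k$.
\item After removing $a$, each group $\hat A_h$ has at most $k-1$ agents, so these agents can be matched inside their own blocks.
\item This leaves exactly as many objects as there are remaining agents $a_x$ plus $a_0$, and all of these agents are adjacent to every object of $O_Y$.
\end{itemize}
To finish, put $a$ on $o$. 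Send the at most $k|O_Y|=k^2m=|\tilde O|$ agents of $A_0$ that have an $O_Y$ slot to $\tilde O$, and send the remaining agents of $A_0$ to $O_0$. Then no agent strictly prefers some $M_i$ to $M'$, while $a_0$ strictly prefers $M'$ to all of them. This is the paper's argument, and it needs no balancing of a nonempty losing side against the winning side.
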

    \begin{proof}\renewcommand{\qedsymbol}{$\blacklozenge$}
        By contradiction assume there is $a' \in A \setminus A_0$ and $o' \in O \setminus O_Y$ adjacent to $a'$ such that $M_i(a') \not\succ_a o'$ for all $i \in [k]$.
        We construct $M'$ with $\phi(\mathcal{M}, M') < 0$, contradicting the popularity of $\mathcal{M} := \{M_1, \dots, M_k\}$.

        Note that because $|A_0| = k^3m + k^2m + 1 > k(|\tilde{O}| + |O_Y|)$, there is at least one $a_0 \in A_0$ such that $M_i(a) \in O_0 \cup \{\emptyset\}$ for all $i \in [k]$.
        We start by including an arbitrary perfect matching between $A \setminus (A_0 \cup \{a'\}) \cup a_0$ and $O_Y$ in $M'$.
        Moreover, we match $a'$ to $o'$.
        We match all agents $a \in A_0$ with $M_i(a) \in O_Y$ for some $i \in [k]$ to some object in $\tilde{O}$ (note that this is possible because there are at most $k \cdot |O_Y| = k^2m \geq |\tilde{O}|$ such agents) and we match all other agents from $A_0$ arbitrarily to objects in $O_0$.
        
        Note that by construction of $M'$, there is no agent $a \in A$ with $M_i(a) \succ_a M'(a)$ for some $i \in [k]$ (in particular, $M'$ matches every agent in $a \in A \setminus (A_0 \cup \{a'\})$ to an  object that is maximimal w.r.t.~$\succ_a$ and it matches $a'$ to an object that is not dominated by $M_i(a')$ for any $i \in [k]$). 
        However, $M'(a_0) \succ_{a_0} M_i(a_0)$ for all $i \in [k]$ and therefore $\phi(\mathcal{M}, M') \leq -1$, completing the contradiction.
    \end{proof}
    From \cref{clm:NP-domination}, we conclude that for every $x \in X_{\ell}$ and every $j \in [\ell-1]$, there must be $i \in [k]$ such that $M_i(a_x) \succ_{a_x} \bar{o}_j$, which is only possible if $M_i(a_x) = o_{y}$ for some $y \in X_j$.
    In fact, because $\ell - 1 = k$, for each $j \in [\ell-1]$ there must be exactly one $i \in [k]$ such that $M_i(a_x) = o_{y}$ for some $y \in X_j$.
    Moreover, for every $j \in [\ell-1]$ and every $h \in [m]$ each of the $k - 1$ agents in $\bar{A}_h$ is matched to $o_{x_{jh}}$ in some $M_i$.
    With the above, this implies that for every $j \in [\ell-1]$ and every $h \in [m]$, there is exactly one $i \in [k]$ and exactly one $x \in X_{\ell}$ such that $M_i(a_x) = o_{x_{jh}}$.
    For each $x \in X_{\ell}$, let $E_x = (x_1, \dots, x_{\ell})$ with $x_{\ell} = x$ denote the unique tuple such that $x_j$ for $j \in [\ell-1]$ is the unique element of $X_j$ with $M_i(a_x) = x_j$ for some $i \in [k]$.
    We show that $E_x \in \mathcal{E}$.
    Assume by contradiction that $E_x \notin \mathcal{E}$, then $E_x \in \bar{\mathcal{E}}$ and hence there is $o_{x_1, \dots, x_{ell}} \in O \setminus O_Y$.
    Moreover, for every $i \in [k]$ it holds that $M_i(a_x) \not\succ_{a_x} o_{x_1, \dots, x_{ell}}$, because $M_i(a_x) \in \{x_1, \dots, x_{\ell-1}\}$.
    This contradicts \cref{clm:NP-domination}.

    Now let $\mathcal{E}' := \{E_x : x \in X_{\ell}\}$.
    Note that by construction every $x \in X_1 \cup \dots \cup X_{\ell}$ appears in exactly one tuple in $\mathcal{E}'$ and that $\mathcal{E}' \subseteq \mathcal{E}$, as shown above.
    Thus $\mathcal{E}'$ is an $\ell$-dimensional matching.
    With this, the proof of \cref{thm:NP-condorcet} is complete.

\subsection{Proof of \cref{thm:NP-pareto}}
\label{app:NP-Pareto}

Here, we provide the remaining argument for the proof of \cref{thm:NP-pareto}, showing that the matching instance constructed in \cref{sec:hardness-pareto} admits a Pareto-optimal matching if and only if there is a vertex cover of size $\ell$ in the original graph.

\subsubsection*{Constructing a Pareto-optimal matching from a vertex cover}
    We show how to construct a Pareto-optimal matching $M$ in $G$ from a vertex cover $U \subseteq V'$ in $G'$ with $|U| = \ell$ (note that assuming equality here is without loss of generality).
    \begin{itemize}
        \item For each $u \in U$, we match $a_u$ to $o_u$.
        \item For edge $e = \{v, w\}$, assuming without loss of generality that $v \in U$ (which is possible because $U$ is a vertex cover), we match $a_e$ to $o^w_e$, $a^v$ to $o^v_e$, $a^w_e$ to $o^1_e$, $a^1_e$ to $o^2_e$, $a^2_e$ to $o^2_e$, and leave $o^3_e$ unmatched.
        \item Moreover, we add an arbitrary perfect matching between the agents in $\{a_v : v \in V' \setminus U\}$ and the objects in $\bar{O}$ (note that both sets have cardinality $|V| - \ell$) to $M$.
    \end{itemize}
    Now assume by contradiction that there exists a matching $M'$ that Pareto-dominates $M$.
    We first observe that $M'(a_v) = M(a_v)$ for all $v \in V$.
    Indeed, $M'(a_v) \in \bar{O}$ for each $v \in V \setminus U$, as otherwise $M(a_v) \succ_{a_v} M'(a_v)$.
    But then $M'(a_u) \notin \bar{O}$ for $u \in U$, implying $M'(a_u) = o_u = M(a_u)$, again as otherwise $M(a_u) \succ_{a_u} M'(a_u)$.
    Now consider any $e = \{v, w\} \in E$, assuming w.l.o.g.\ that $v \in U$ and $M(a_e) = o^w_e$ and $M(a^v_e) = o^v_e$.
    Because $M'(a_e) = o_v$, agent $a_e$ cannot be matched to $o_v$ in $M'$.
    Because $M(a_e) = o^w_e$, agent $a_e$ cannot be matched to $o_w$ in $M'$ (otherwise $M(a_e) \succ_{a_e} M'(a_e)$).
    Because $M(a^v_e) = o^v_e$ is the unique top-choice object of $a^v_e$, it must also hold that $M'(a^v_e) = o^v_e$.
    Hence $M'(a_e) = o^w_e = M(a_e)$ and $M'(a^w_e) = o^1_e$ by exclusion of all other options.
    This further implies $M'(a^1_e) = M(a^1_e) = o^2_e$, as any other remaining option would be worse for agent $a^1_e$, and then, by the same argument, $M'(a^2_e) = M(a^2_e) = o^3_e$ and finally $M'(a^3_e) = M(a^3_e) = \emptyset$.
    Thus $M'$ equals $M$, except for possible permutations in the perfect matching between $\{a_v : v \in V \setminus U\}$ and $\bar{O}$.
    In particular, there is no agent $a \in A$ with $M'(a) \succ_a M(a)$, contradicting our assumption that $M'$ Pareto-dominates~$M$.
    We have thus shown that existence of a vertex cover of size at most $\ell$ in $G'$ implies existence of a Pareto-optimal matching in $G$.

\subsubsection*{Constructing a vertex cover from a Pareto-optimal matching}
    We show how to construct a vertex cover of size at most $\ell$ from a Pareto-optimal matching $M$ in $G$.
    We start by showing the following claim.
    
    \begin{claim}\label{clm:NP-o1}
        Let $e = \{v, w\} \in E$. Then $M(a^v_e) = o^1_e$ or $M(a^w_e) = o^1_e$.
    \end{claim}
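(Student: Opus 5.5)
The plan is to show that if neither $a^v_e$ nor $a^w_e$ is matched to $o^1_e$ in $M$, then the gadget formed by $a^1_e, a^2_e, a^3_e$ and $o^1_e, o^2_e, o^3_e$ reproduces the three-agent example from \Cref{sec:prelim}. That example has no Pareto-optimal matching, so we would obtain a matching Pareto-dominating $M$.

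The key structural observation comes straight from the construction. The objects $o^2_e$ and $o^3_e$ are adjacent only to $a^1_e, a^2_e, a^3_e$. The object $o^1_e$ is adjacent only to these three agents and to $a^v_e, a^w_e$. Conversely, $a^1_e, a^2_e, a^3_e$ are adjacent only to $o^1_e, o^2_e, o^3_e$. So, assuming for contradiction that $M(a^v_e) \neq o^1_e \neq M(a^w_e)$, the restriction of $M$ to the six gadget nodes is a matching between $\{a^1_e,a^2_e,a^3_e\}$ and $\{o^1_e,o^2_e,o^3_e\}$. Any modification confined to these six nodes leaves every other agent's assignment untouched.

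The first step is to show that this restricted matching is perfect. Suppose some $a^i_e$ is unmatched. Then, by counting, some object in $\{o^1_e,o^2_e,o^3_e\}$ is free. Assigning that object to $a^i_e$ strictly improves $a^i_e$, since any object beats $\emptyset$, and changes nobody else, contradicting Pareto optimality. So all three agents are matched, and hence all three objects are used; in particular $o^1_e$ is held by some $a^i_e$.

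The second step is the rotation from the preliminaries. Let $a^i_e, a^j_e, a^h_e$ hold $o^1_e, o^2_e, o^3_e$ respectively. Define $M'$ by giving $o^1_e$ to $a^j_e$, $o^3_e$ to $a^i_e$ and $o^2_e$ to $a^h_e$, with all other assignments unchanged. Then $a^j_e$ strictly improves, since $o^1_e \succ o^2_e$. Agent $a^i_e$ moves from $o^1_e$ to $o^3_e$ and $a^h_e$ from $o^3_e$ to $o^2_e$, both between incomparable (hence indifferent) objects, so neither is worse off. Thus $M'$ Pareto-dominates $M$, a contradiction. I expect no real obstacle; the only point needing care is the adjacency bookkeeping that keeps the change local.
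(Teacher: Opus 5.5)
Your proof is correct and follows essentially the same route as the paper. The paper also first rules out an unmatched agent among $a^1_e, a^2_e, a^3_e$ by assigning them a free object, and then applies the same three-cycle rotation that moves $o^1_e$ to the holder of $o^2_e$; where the paper assumes without loss of generality that $M(a^i_e)=o^i_e$, you instead name the holders explicitly.
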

    \begin{proof}\renewcommand{\qedsymbol}{$\blacklozenge$}
        Assume by contradiction there is $e = \{v, w\} \in E$ with $M(a^v_e) \neq o^1_e$ and $M(a^w_e) \neq o^1_e$.
        We distinguish two cases:
        
        First, assume there is $i \in \{1, 2, 3\}$ such that $M(a^i_e) = \emptyset$.
        Then one of the objects $o^j_e$ for $j \in \{1, 2, 3\}$ must be unassigned (as neither $a^v_e$ nor $a^w_e$ is matched to $o^1_e$.
        But this contradicts the Pareto-optimality of $M$, as we can improve $M$ simply by matching $a^i_e$ to $o^j_e$.

        Second, assume each agent $a^1_e, a^2_e, a^3_e$ must be assigned an object by $M$. By symmetry, we can assume without loss of generality that $M(a^i_e) = o^i_e$ for $i \in \{1, 2, 3\}$.
        But then the matching $M'$ with $M(a^1_e) = o^3_e$, $M(a^2_e) = o^1_e$, $M(a^3_e) = o^2_e$, and $M'(a) = M(a)$ for all other agent Pareto-dominates $M$, because $a^1_e$ and $a^3_e$ are indifferent between their objects in $M$ and $M'$, and $M'(a^2_e) \succ_a M(a^2_e)$. Thus, in both cases, we obtained a contradiction.
    \end{proof}

    We use \cref{clm:NP-o1} to show the following.
    
    \begin{claim}\label{clm:NP-vertex-cover}
        Let $e = \{v, w\} \in E$. Then $M(a_v) = o_v$ or $M(a_w) = o_w$.
    \end{claim}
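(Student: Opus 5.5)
By contradiction, fix an edge $e=\{v,w\}$ and suppose that $M(a_v)\neq o_v$ and $M(a_w)\neq o_w$. By \cref{clm:NP-o1}, we may assume by symmetry that $M(a^w_e)=o^1_e$. The plan is to exhibit a matching $M'$ that Pareto-dominates $M$, which contradicts the Pareto-optimality of $M$.

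First, pin down the local structure around $e$. Since $M(a^w_e)=o^1_e$ and $o^w_e \succ_{a^w_e} o^1_e$, Pareto-optimality forces $o^w_e$ to be matched in $M$. Otherwise we could move $a^w_e$ to $o^w_e$ and free $o^1_e$; this strictly improves $a^w_e$ and changes no one else. The only agents adjacent to $o^w_e$ are $a_e$ and $a^w_e$, so $M(a_e)=o^w_e$. Note also that $o_w$ is not matched to $a_w$, since $M(a_w)\neq o_w$ by assumption.

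Next, consider who holds $o_w$ in $M$. The neighbours of $o_w$ are $a_w$ and the agents $a_{e'}$ for edges $e'$ incident to $w$. If $o_w$ is unmatched, the modification is: $a_e$ moves from $o^w_e$ to $o_w$, and $a^w_e$ moves from $o^1_e$ to $o^w_e$.

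1. Agent $a_e$ is indifferent between $o^w_e$ and $o_w$? No: $a_e$ strictly prefers $o^w_e$ over $o_w$, so this step makes $a_e$ worse off.

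So that direct swap fails, and I would instead use the freed object to help agent $a_w$. The key observation is that $a_w$ is adjacent only to $o_w$ and $\bar O$, and $M(a_w)\neq o_w$.

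1. If $a_w$ is unmatched and $o_w$ is free, matching $a_w$ to $o_w$ strictly improves $a_w$. This is immediately a contradiction.
2. If $o_w$ is held by some $a_{e'}$, then $a_{e'}$ is indifferent between $o_w$ and $o^v_{e'}$ (these are different colours). If $o^v_{e'}$ (with $v$ the other endpoint of $e'$) is free, we reroute $a_{e'}$ there and give $o_w$ to an unmatched $a_w$.
3. If $M(a_w)\in\bar O$, then $a_w$ is already at a top choice, and the argument must instead go through counting.

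I expect the real obstacle is that the claim is not purely local. With $|\bar O|=|V'|-\ell$, the agents $a_u$ with $M(a_u)\neq o_u$ must be unmatched or in $\bar O$. An unmatched $a_u$ whose $o_u$ is free, or whose $o_u$ is held by an $a_{e'}$ that could shift to $o^{v}_{e'}$ or $o^{w}_{e'}$, contradicts Pareto-optimality. So the main step is to show that when both $a_v$ and $a_w$ avoid $o_v$ and $o_w$, we can build such an improving exchange, possibly along an alternating path through the $a_{e'}$ agents.

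Concretely, I would argue as follows. Agent $a_e$ holds $o^w_e$, and $a^v_e$ must hold $o^v_e$; otherwise, by the same argument as for $a^w_e$, $a^v_e$ could take $o^v_e$ if it were free. Then both $o_v$ and $o_w$ are either free or held by other edge agents. Now exploit indifference: $a_e$ is indifferent between $o^w_e$ and $o_v$ (they are different colours), and $a^w_e$ prefers $o^w_e$ over $o^1_e$. If $o_v$ is free, the matching $M'$ that moves $a_e$ to $o_v$ and $a^w_e$ to $o^w_e$ Pareto-dominates $M$. This works because $a_e$ is indifferent, $a^w_e$ strictly improves, and $o^1_e$ becoming free hurts no one.

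If $o_v$ is held, its holder is $a_v$ (impossible, since $M(a_v)\neq o_v$) or some $a_{e'}$. In the latter case, I would follow the chain of edge agents holding vertex objects, using that each such $a_{e'}$ is indifferent between its vertex object and the other-colour edge object. The chain either terminates at a free object, yielding an improving cycle or path, or I need a careful argument that it does. Establishing this termination is the step I expect to be hardest.
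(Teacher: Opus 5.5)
\textbf{The proposal has a gap.} Your setup matches the paper's proof. Using \cref{clm:NP-o1}, assume $M(a^w_e)=o^1_e$ and deduce $M(a_e)=o^w_e$. If $o_v$ is free, move $a_e$ to $o_v$ (indifferent) and $a^w_e$ to $o^w_e$ (strict gain); this Pareto-dominates $M$. However, you leave open the case where $o_v$ is held by an edge agent $a_{e'}$ with $e'=\{v,u\}\neq e$. For that case you propose following a chain of edge agents, and you say yourself that you cannot show the chain terminates. So the proof stops exactly at the step that matters.

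\textbf{The missing idea:} that case cannot occur at all, and no chain is needed. The local argument you already gave for $e$ works verbatim for every edge. For any edge $e'=\{v,u\}$, \cref{clm:NP-o1} says that one of $a^v_{e'},a^u_{e'}$ is matched to $o^1_{e'}$; say it is $a^u_{e'}$. The object $o^u_{e'}$ is adjacent only to $a_{e'}$ and $a^u_{e'}$, and $a^u_{e'}$ strictly prefers it to $o^1_{e'}$. If $o^u_{e'}$ were free, moving $a^u_{e'}$ onto it would be a Pareto-improvement, so $M(a_{e'})=o^u_{e'}$.

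Consequently, in any Pareto-optimal matching every edge agent holds one of its own edge objects. A vertex object $o_v$ can then only be matched to $a_v$, so $M(a_v)\neq o_v$ forces $o_v$ to be unmatched, and your swap finishes the proof. The paper takes this step in one sentence (``as $M(a_v)\neq o_v$, $o_v$ is unmatched in $M$''), implicitly relying on this observation.

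The chain picture is also misleading. If $a_{e'}$ held $o_v$, the only object it is indifferent to is $o^u_{e'}$, and that object is occupied by $a^u_{e'}$. So the chain never continues through vertex objects; the contradiction arises locally inside the gadget of $e'$.

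Finally, only the hypothesis $M(a_v)\neq o_v$ is used, for the endpoint opposite the agent sitting on $o^1_e$. Your detours through $o_w$, $a_w$, and counting over $\bar{O}$ can be dropped.
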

    \begin{proof}\renewcommand{\qedsymbol}{$\blacklozenge$}
        Assume by contradiction there is $e = \{v, w\} \in E$ with $M(a_v) \neq o_v$ and $M(a_w) \neq o_w$.
        By \cref{clm:NP-o1}, we have $M(a^v_e) = o^1_e$ or $M(a^w_e) = o^1_e$ (by symmetry, we can assume the latter without loss of generality).
        But then $M(a_e) = o^w_e$, because otherwise matching $a^w_e$ to $o^w_e$ instead of $o^1_e$ improves $M$, contradicting its Pareto-optimality.
        However, as $M(a_v) \neq o_v$ by our initial assumption, $o_v$ is unmatched in $M$. 
        Hence, we can improve $M$ by matching $a_e$ to $o_v$ instead of $o^w_e$ (note that $a_e$ is indifferent between the two) and matching $a^w_e$ to $o^w_e$ instead of $o^1_e$, which the agent strictly prefers.
        We thus obtained a contradiction to Pareto-optimality.
    \end{proof}

    Note that \cref{clm:NP-vertex-cover} implies that $U := \{v \in V : M(a_v) = o_v\}$ is a vertex cover.
    To see that $|U| \leq \ell$, note that if $|U| > \ell$ there must be at least one object $o \in \bar{O}$ that is not matched to any agent in $M$. 
    But then we can improve $M$ by matching $a_u$ for some arbitrary $u \in U$ to $o$ instead of $a_u$, contradicting Pareto-optimality.
    We have thus shown that existence of a Pareto-optimal matching in the constructed matching instance implies the existence of a vertex cover of size at most $\ell$ in $G$. With this, the proof of \cref{thm:NP-pareto} is complete.

    \section{Counter Example for Assignment Problem} \label{sec:appassignment}

   In the \emph{assignment problem}, we are given a bipartite graph $G=(A \cup O, E)$ and preference relations $\succ_a$ over adjacent nodes of $a$ for $a \in A$. The set of feasible alternatives $\mathcal{X}$ consists of all \emph{perfect matchings} (assignments) in $G$. 
   
   \Cref{fig:counterAssignment} shows that there exist Pareto-optimal sets of size $2$ that are not weakly Condorcet-winning. Let $\M$ be the set of the two blue assignments (left). Agents $a_1$ and $a_4$ receive their top choice, and $\M$ is the unique set of assignments with this property; hence $\M$ is Pareto-optimal. However, the red assignment $N$ (right) is preferred by every agent except $a_1$ and $a_4$, so $\M$ is not weakly Condorcet winning. This construction extends to larger Pareto-optimal sets by adding more short paths (as in the upper and lower path) and lengthening the middle path.

    \begin{figure}[h]
        \centering
\includegraphics[width=\linewidth]{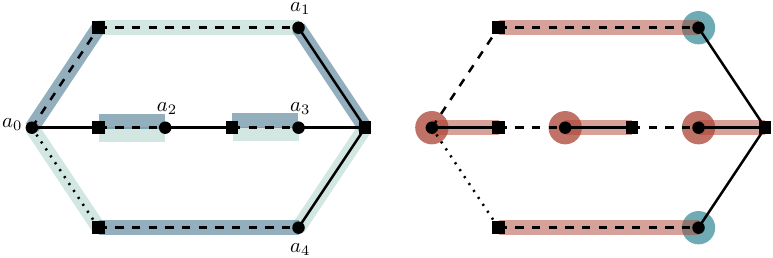}
        \caption{Assignment instance showing that Pareto optimal sets need not be Condorcet-winning in the assignment setting. Circles refer to agents and square to objects. The edge patterns indicate the agents' preferences over the objects, where solid is preferred over dashed over dotted. The left image illustrates the set of blue matchings $\M$ and the right image illustrates the competitor assignment $N$ in red. In the right image, agents are colored in accordance to their preferences when comparing $\M$ vs. $N$.}
        \label{fig:counterAssignment}
    \end{figure}

\end{document}